\numberwithin{algorithm}{section}
\algnewcommand{\IfThenElse}[3]{
  \State \algorithmicif\ #1\ \algorithmicthen\ #2 \algorithmicelse\ #3}
\algnewcommand{\IfThen}[2]{
  \State \algorithmicif\ #1\ \algorithmicthen\ #2}
\appto\appendix{
}
\begin{document}
\title{\papertitle}
\author{Alexander Lieb \thanks{This work has been funded by the Deutsche Forschungsgemeinschaft (DFG, German Research Foundation) – Project-ID 210487104 - SFB 1053.} \\ \tudarmstadt \\alexander.lieb\esmailending%
\and Hendrik Göttmann\\ \tudarmstadt \\hendrik.goettmann\esmailending%
\and Malte Lochau\\University of Siegen, Germany\\malte.lochau@uni-siegen.de%
\and Lars Luthmann\\Accso -- Accelerated Solutions GmbH\\Darmstadt, Germany\\lars.luthmann@accso.de%
\and Andy Schürr\\ \tudarmstadt \\andy.schuerr\esmailending }%
\maketitle%
\begin{abstract}
Timed automata are a widely used formalism for specifying the discrete-state/continu\-ous-time behavior of time-critical reactive systems. 
For the fundamental verification problem of comparing 
two timed automata,
it has been shown that timed trace equivalence is undecidable, 
while timed bisimulation 
is decidable. 
The corresponding decidability proof uses region graphs, a finite but 
space-consuming characterization of timed automata semantics. 
Most verification tools use zone graphs instead, a symbolic and, on average, 
more space-efficient representation of timed automata semantics. 
However, zone graphs provide correct results only for those verification tasks that are 
reducible to 
reachability problems, and are too imprecise 
for timed bisimilarity checking. 
To the best of our knowledge, there is currently no practical tool 
for automated timed bisimilarity checking. 
In this paper, we propose a new representation of timed automata semantics
that extends zone graphs by so-called virtual clocks.
Our zone-based construction is, on average, significantly smaller than the corresponding region graph representation.
We also present experimental results obtained by applying our tool implementation to timed automata models, which are often used to evaluate timed automata analysis techniques.
\end{abstract}

\begin{keywords}
Timed Automata, Timed Bisimulation, Bisimulation Equivalence.
\end{keywords}

\runninghead{A. Lieb et al.}{\papertitle}

\section{Introduction}
\paragraph{Background and Motivation.}
Timed automata specify discrete-state/continuous-time behavior by means of
labeled state-transition graphs of classical finite automata models, where states 
are called \emph{locations} and transitions are denoted as \emph{switches}~\cite{AlurDill90AutomataForModelingRealTimeSystems}.
Timed automata extend classical automata by a set $C$ of \emph{clocks}
constituting constantly and synchronously increasing, yet independently resettable
numerical variables.
Clock values can be referenced within \emph{clock constraints}
to define boundaries for time intervals in which 
switches are allowed to happen in a timed run of the automaton.
In this way, timed automata act as acceptors 
of languages over (timed) traces denoted as pairs of actions and timestamps.

A fundamental verification problem arises from the comparison of a candidate
implementation against a specification of a real-time system, both specified as timed automata over the same alphabet of actions. 
It has been shown that timed trace inclusion (and therefore also timed trace equivalence) 
is undecidable, whereas timed bisimulation is decidable~\cite{Cerans1992}. 
This makes timed bisimilarity a particularly relevant equivalence notion 
for verifying time-critical behaviors.
The undecidability of (timed) trace inclusion is due to the
limitation of timed traces to externally recognizable behavior 
(i.e., actions and delays) which does, however, not offer enough information about the internal
structure of the respective timed automata to reason about trace inclusion.
In particular, (non-visible) resets of clocks potentially caused as part of a switch
in a timed run may lead to (arbitrarily deferred) semantic effects not immediately recognizable 
in the corresponding timed step.
In contrast, the timed generalization of bisimulation equivalence 
is indeed rich enough to capture the discriminating effects of clock resets.
However the original decidability proof of \u{C}er\={a}ns employs region graphs,
a finite, but often unnecessarily space-consuming representation of timed automata semantics 
(i.e., having $\mathcal{O}(|C|!\cdot k^{|C|})$ many regions, where $k$ is the maximum constant upper time bound 
of all clock constraints). 
Instead, most recent verification tools for timed automata are based on zone graphs which use a symbolic and, on average, 
more space-efficient representation of timed automata semantics than region graphs~\cite{Henzinger1994}. 
However, zone graphs only produce sound results for analysis tasks being reducible to plain location-reachability problems 
thus being too imprecise for checking timed bisimilarity~\cite{WeiseUndLenzke1997}.
In particular, zone graphs may not precisely reflect the possible impact of, by definition invisible, 
clock resets on the branching behavior in some subsequent step of a timed run as long as 
the reset does not affect reachability of locations.
\paragraph{Contributions.} 
In this manuscript, we propose a new characterization of
timed automata semantics. 
We extend zone graphs to carry additional information
required for timed bisimilarity checking.
To this end, our enriched notion of zones includes not only the 
current time intervals (zones) of the original clocks
of the two automata under comparison, but additionally contains information about possible deviations due to
(by definition non-visible) clock resets in the respective other automaton.
We capture these deviations between clock valuations by adding so-called
\emph{virtual clocks} to zone graphs, which constitute proper, yet hidden clock variables.
Our approach works for the deterministic as well as the non-deterministic case, where
the second case causes substantially more computational effort.
Our constructions and correctness results are currently focused on bisimilarity of timed (safety) automata 
(i.e., timed automata without acceptance states).
\paragraph{Tool Support and Reproducibility.}
Our tool implementation supports the TChecker file format \cite{TCheckerFileFormat}
for input models and is available online \footnote{https://github.com/Echtzeitsysteme/tchecker}. \textsc{Uppaal} \cite{UppaalInANutshell} models can
be converted into the TChecker file format \cite{UppaalToTchecker}.
This web page also contains experimental data sets and further
information for reproducing the evaluation results.
\paragraph{Related Work.} 
The notion of timed bisimulation goes back to Moller and Tofts~\cite{Moller1990}
as well as Yi~\cite{Yi1990}, both originally defined on real-time extensions of the process algebra
CCS. 
Similarly, Nicollin and Sifakis~\cite{Nicollin1994} define timed bisimulation on ATP (Algebra
of Timed Processes). 
However, none of these works initially incorporated a
technique for effectively checking bisimilarity. 
The pioneering work of \u{C}er\={a}ns~\cite{Cerans1992}
includes the first decidability proof of timed bisimulation on timed labeled transition systems by providing a
finite characterization of bisimilarity-checking using region graphs. 
The improved (i.e., less space-consuming) approach of Weise and Lenzkes~\cite{WeiseUndLenzke1997} employs a variation
of zone graphs, called full backward stable graphs. Their work is most closely related to our approach.
However, the approach lacks a description of how to solve essential problems such as clock resets, does not include an effective operationalisation
of timed-bisimilarity checking, and, unfortunately, no tool implementation is provided.
Guha et al.~\cite{GuhaEtAl1} also utilize a zone-based approach for
bisimilarity-checking on timed automata as well as the weaker notion of timed prebisimilarity,
by employing so-called zone-valuation graphs and the notion of spans.
Again, the description of the approach lacks essential details about
the construction and no tool implementation is available.
Finally, Tanimoto et al.~\cite{Tanimoto2004} use timed bisimulation to
check whether a given behavioral abstraction preserves time-critical system behaviors,
without, however, showing how to check for timed bisimulation. 
The only currently available tool for checking timed bisimilarity we are aware of is called \textsc{Caal} \footnote{https://caal.cs.aau.dk/},
which is, however, inherently incomplete as it does not guarantee a correct representation of timed automata semantics.
To summarize, to the best of our knowledge, our approach is the only
effective construction of timed automata semantics based on zone graphs enabling
sound and complete bisimilarity checking for both deterministic and
non-deterministic timed (safety) automata.
In addition, we provide the only tool currently available for checking
timed bisimilarity for timed automata.
\paragraph{Outline.}
The remainder of this manuscript is organized as follows.
In Section~\ref{sec:background}, we first describe the necessary background on
timed automata, timed labeled transition systems, region graphs, zone graph semantics 
as well as timed bisimulation.
We also provide an example to illustrate potentials flaws of 
checking timed bisimilarity on plain zone graphs.
In Section~\ref{chap:virtual_clocks:main}, we describe our novel
construction of zone graph semantics using virtual clocks for
sound and complete timed bisimilarity checking.
Section~\ref{sec:CheckingForBisimulation} shows how we can use our results
of Section~\ref{chap:virtual_clocks:main} to implement an
algorithm to check for timed bisimilarity.
In Section~\ref{sec:evaluation} we evaluate the implementation, concerning computational effort of timed bisimilarity checking.
Finally, in Section~\ref{chap:conclusion} we conclude the manuscript
and provide an outlook on potential future work.
\section{Background}
\label{sec:background}

In this section, we revisit basic definitions of timed automata including 
syntax and semantics as well as timed bisimulation.
These notions build the foundation for the remainder of this manuscript.

\subsection{Timed Automata}

\textit{Timed automata (TA)} were introduced by Alur and Dill \cite{AlurDill}. 
In the following, we consider timed automata in the form of timed \emph{safety} automata 
according to Henzinger et al. \cite{Henzinger1994}.

Similar to classical finite automata, a TA consists 
of a finite graph whose nodes are called \emph{locations} 
and directed edges between nodes are called \emph{switches}. 
Switches are labeled by symbols from a finite alphabet $\Sigma$
of \emph{actions}.
In addition, a TA comprises a finite set $C$ of \emph{clocks}, representing
numerical variables defined over a time domain $\TimeDomain$. 
We limit our considerations to $\TimeDomain \in \{\mathbb{N}^{\geq 0}, \mathbb{R}^{\geq 0}\}$, where
$\TimeDomain = \mathbb{N}^{\geq 0}$ is used for \emph{discrete} time modeling, whereas
$\TimeDomain = \mathbb{R}^{\geq 0}$ represents \emph{dense} time. 
In both cases, the valuations of all clocks \emph{increase synchronously} over time
during a TA run, but are \emph{independently resettable} to zero. 
The current valuation of each clock measures 
the time elapsed since its last reset. 
By $\mathcal{B}(C)$, we denote the set of \emph{clock constraints} over set $C$.
Clock constraints are used in TA as \emph{location invariants} as well as \emph{switch guards}
to restrict the timing behavior of TA runs.
\begin{definition}[Timed Automaton]
\label{def:background:Timed-Automata:Timed-Automaton}
A \emph{timed automaton} (TA) $A$ is a tuple $(L, l_0, \Sigma, C, I, E)$, where $L$ is a finite set of \emph{locations}, with $l_0 \in L$ being an \emph{initial location}, $\Sigma$ is a finite set of (action) \emph{symbols}, $C$ is a finite set of \emph{clocks} such that $C \cap \Sigma = \emptyset$, $I: L \rightarrow \mathcal{B}(C)$ is a function assigning \emph{invariants} to locations, and $E \subseteq L \times \mathcal{B}(C) \times \Sigma \times 2^{C} \times L$ is a relation containing \emph{switches}.
The set $\mathcal{B}(C)$ of clock constraints $\phi$ over $C$ is inductively defined as
\begin{equation*}
\phi := \text{true} \ | \ c \sim n \ | \ c - c' \sim n \ | \ \phi \land \phi, \text{ with } n \in \mathbb{N}^{\geq 0}, \sim \ \in \{<, \leq, >, \geq\}, \text{ and } c, c' \in C.
\end{equation*}
We also introduce $c = n$ as shorthand notation for $c \leq n \land c \geq n$ and $(c-c') = n$ as shorthand notation for $(c-c') \leq n \land (c-c') \geq n$. Instead of $(l_1, g, \sigma, R, l_2) \in E$ we often denote $\TATrans{1}{g}{\sigma}{R}{2}$.
\end{definition}

\newcommand{\backgroundExampleAutomataScalingFactor}{0.35}
\newcommand{\backgroundExampleAutomataSpaceBetween}{20mm}
\newcommand{\backgroundTikzFontSize}{\Large}
\newcommand{\backgroundExampleTAArrowDesc}{\draw[-{Latex[length=3mm]}]}
\newcommand{\backgroundExampleTABendFactor}{80}
\newcommand{\backgroundExampleTAVerticalNodeDistance}{1.3cm}

\begin{example}
Figure~\ref{fig:examples} shows six different
TA, $A_{1}$ to $A_{6}$, defined over the same
alphabet $\Sigma = \{a,b,c\}$.
$A_{1}$ to $A_{4}$ consist of a set $L$ of three locations, labeled
$l_{0}$, $l_1$, and $l_2$, respectively, while $A_5$ and $A_6$ additionally contain the locations $l_1'$ and $l_2'$.
The set $C$ of clocks of the TA $A_{1}$, $A_{2}$, and $A_6$ comprise only one clock $x$, whereas TA 
$A_{3}$, $A_{4}$, and $A_5$ comprise two clocks, $x$ and $y$.
Switches $(\ell, g, \sigma, R, \ell')\in E$ are visually
depicted as arrows $\ell\xrightarrow{g, \sigma, R}\ell'$ leading from
source location $\ell\in L$ to target location $\ell'\in L$. The labels of transitions
consist of three components: guard $g\in\mathcal{B}(C)$, action $\sigma\in\Sigma$ and
reset set $R\subseteq C$, denoted as assignment $x:=0$ for each $x\in R$.
Guard $g$ is a clock constraint that must be satisfied for the switch to
be taken in a timed step to perform action $\sigma$ and to proceed from
location $\ell$ to location $\ell'$.
In addition, if the switch is taken, then all clocks $x\in R$ are set to zero.
Note that the trivial guard or invariant $\textit{true}$ and the empty reset set $\emptyset$
are omitted in the visual representation.
Similarly, locations $\ell$ may be also labeled with a clock constraint $I(\ell)$
denoting a location invariant (e.g. $I(l_1) = x < 2$ holds for any TA of Figure~\ref{fig:examples}).
$A_5$ and $A_6$ are non-deterministic TA, as the initial
location is the source location of two different 
switches labeled with the same action $a$, which are enabled at the same time but 
lead to different locations. 

\begin{figure}
\centering
\begin{subfigure}{0.1\textwidth}
\centering
\scalebox{\backgroundExampleAutomataScalingFactor}{
\begin{tikzpicture}
\tikzstyle{every node}=[font=\backgroundTikzFontSize]
\tikzstyle{state} = [draw,circle,minimum size=2cm,inner sep=0pt,semithick]

\node[state, align=center, initial above, initial text=] (0) {$l_0$};
\node[state, align=center, below = \backgroundExampleTAVerticalNodeDistance of 0] (1) {$l_1$\\$x<2$};
\node[state, align=center, below = \backgroundExampleTAVerticalNodeDistance of 1] (2) {$l_2$};
\backgroundExampleTAArrowDesc (0) --node[left, align=center, xshift=-0.2cm]{$a$\\$x\leq 0$} (1);
\backgroundExampleTAArrowDesc (1) --node[left, align=center]{$b$} (2);
\backgroundExampleTAArrowDesc (2) edge [bend right = \backgroundExampleTABendFactor] node[left, align=center] {$c$\\$x>3$\\$x:=0$} (0);
\end{tikzpicture}
}
    \caption{$A_{1}$}
    \label{fig:exampleA1}
\end{subfigure}
\hspace{\backgroundExampleAutomataSpaceBetween}
\begin{subfigure}{0.1\textwidth}
\centering
\scalebox{\backgroundExampleAutomataScalingFactor}{
\begin{tikzpicture}
\tikzstyle{every node}=[font=\backgroundTikzFontSize]
\tikzstyle{state} = [draw,circle,minimum size=2cm,inner sep=0pt,semithick]

\node[state, align=center, initial above, initial text=] (0) {$l_0$};
\node[state, align=center, below = \backgroundExampleTAVerticalNodeDistance of 0] (1) {$l_1$\\$x<2$};
\node[state, align=center, below = \backgroundExampleTAVerticalNodeDistance of 1] (2) {$l_2$};
\backgroundExampleTAArrowDesc (0) --node[left, align=center, xshift=-0.2cm]{$a$\\$x:=0$} (1);
\backgroundExampleTAArrowDesc (1) --node[left, align=center]{$b$} (2);
\backgroundExampleTAArrowDesc (2) edge [bend right = \backgroundExampleTABendFactor] node[left, align=center] {$c$\\$x>3$\\$x:=0$} (0);
\end{tikzpicture}
}
    \caption{$A_{2}$}
    \label{fig:exampleA2}
\end{subfigure}
\hspace{\backgroundExampleAutomataSpaceBetween}
\begin{subfigure}{0.1\textwidth}
\centering
\scalebox{\backgroundExampleAutomataScalingFactor}{
\begin{tikzpicture}
\tikzstyle{every node}=[font=\backgroundTikzFontSize]
\tikzstyle{state} = [draw,circle,minimum size=2cm,inner sep=0pt,semithick]

\node[state, align=center, initial above, initial text=] (0) {$l_0$};
\node[state, align=center, below = \backgroundExampleTAVerticalNodeDistance of 0] (1) {$l_1$\\$x<2$};
\node[state, align=center, below = \backgroundExampleTAVerticalNodeDistance of 1] (2) {$l_2$};
\backgroundExampleTAArrowDesc (0) --node[left, xshift=-.3cm, align=center]{$a$\\$x, y:=0$} (1);
\backgroundExampleTAArrowDesc (1) --node[left, align=center]{$b$} (2);
\backgroundExampleTAArrowDesc (2) edge [bend right = \backgroundExampleTABendFactor] node[left, align=center] {$c$\\$y>3$\\$x,y:=0$} (0);
\end{tikzpicture}
}
    \caption{$A_{3}$}
    \label{fig:exampleA3}
\end{subfigure}
\hspace{\backgroundExampleAutomataSpaceBetween}
\begin{subfigure}{0.1\textwidth}
\centering
\scalebox{\backgroundExampleAutomataScalingFactor}{
\begin{tikzpicture}
\tikzstyle{every node}=[font=\backgroundTikzFontSize]
\tikzstyle{state} = [draw,circle,minimum size=2cm,inner sep=0pt,semithick]

\node[state, align=center, initial above, initial text=] (0) {$l_0$};
\node[state, align=center, below = \backgroundExampleTAVerticalNodeDistance of 0] (1) {$l_1$\\$x<2$};
\node[state, align=center, below = \backgroundExampleTAVerticalNodeDistance of 1] (2) {$l_2$};
\backgroundExampleTAArrowDesc (0) --node[left, align=center, xshift=-0.2cm]{$a$\\$x:=0$} (1);
\backgroundExampleTAArrowDesc (1) --node[left, align=center, xshift=-0.2cm]{$b$\\$y:=0$} (2);
\backgroundExampleTAArrowDesc (2) edge [bend right = \backgroundExampleTABendFactor] node[left, align=center] {$c$\\$y>3$\\$x,y:=0$} (0);
\end{tikzpicture}
}
    \caption{$A_{4}$}
    \label{fig:exampleA4}
\end{subfigure}
\begin{subfigure}{0.3\textwidth}
\centering
\scalebox{\backgroundExampleAutomataScalingFactor}{
\begin{tikzpicture}
\tikzstyle{every node}=[font=\backgroundTikzFontSize]
\tikzstyle{state} = [draw,circle,minimum size=2cm,inner sep=0pt,semithick]

\node[state, align=center, initial above, initial text=] (0) {$l_0$};
\node[state, align=center, below  = \backgroundExampleTAVerticalNodeDistance of 0] (1) {$l_1$\\$x<2$};
\node[state, align=center, below = \backgroundExampleTAVerticalNodeDistance of 1] (2) {$l_2$};
\node[state, align=center, left = \backgroundExampleTAVerticalNodeDistance of 1] (3) {$l_1'$\\$x<2$};
\node[state, align=center, left = \backgroundExampleTAVerticalNodeDistance of 2] (4) {$l_2'$};
\backgroundExampleTAArrowDesc (0) --node[left, align=center]{$a$\\$x:=0$} (1);
\backgroundExampleTAArrowDesc (1) --node[left, align=center]{$b$\\$y:=0$} (2);
\backgroundExampleTAArrowDesc (2) edge [bend right = 50] node[right, align=center] {$c$\\$y>3$\\$x,y:=0$} (0);
\backgroundExampleTAArrowDesc (0) --node[left, xshift=-.3cm, yshift=.5cm, align=center]{$a$\\$x, y:=0$} (3);
\backgroundExampleTAArrowDesc (3) --node[left, align=center]{$b$} (4);
\backgroundExampleTAArrowDesc (4) edge [bend left = 100] node[left, yshift=.9cm, align=center] {$c$\\$y>3$\\$x, y:=0$} (0);
\end{tikzpicture}
}
    \caption{$A_5$}
    \label{fig:nondet-example}
\end{subfigure}
\begin{subfigure}{0.3\textwidth}
\centering
\scalebox{\backgroundExampleAutomataScalingFactor}{
\begin{tikzpicture}
\tikzstyle{every node}=[font=\backgroundTikzFontSize]
\tikzstyle{state} = [draw,circle,minimum size=2cm,inner sep=0pt,semithick]

\node[state, align=center, initial above, initial text=] (0) {$l_0$};
\node[state, align=center, below right  = \backgroundExampleTAVerticalNodeDistance and 1cm of 0] (1) {$l_1$\\$x<2$};
\node[state, align=center, below = \backgroundExampleTAVerticalNodeDistance of 1] (2) {$l_2$};
\node[state, align=center, below left = \backgroundExampleTAVerticalNodeDistance and 1cm of 0] (3) {$l_1'$\\$x<2$};
\node[state, align=center, below = \backgroundExampleTAVerticalNodeDistance of 3] (4) {$l_2'$};
\backgroundExampleTAArrowDesc (0) --node[right, xshift=.3cm, yshift=.5cm, align=center]{$a$\\$x \leq 1$\\$x:=0$} (1);
\backgroundExampleTAArrowDesc (1) --node[left, align=center]{$b$} (2);
\backgroundExampleTAArrowDesc (2) edge [bend right = 90] node[right, xshift=.5cm, align=center] {$c$\\$x>3$\\$x:=0$} (0);
\backgroundExampleTAArrowDesc (0) --node[left, xshift=-.3cm, yshift=.5cm, align=center]{$a$\\$x \geq 1$\\$x:=0$} (3);
\backgroundExampleTAArrowDesc (3) --node[left, align=center]{$b$} (4);
\backgroundExampleTAArrowDesc (4) edge [bend left = 90] node[left, xshift=-.5cm, align=center] {$c$\\$x>3$\\$x:=0$} (0);
\end{tikzpicture}
}
    \caption{$A_6$}
    \label{fig:nondet-bisim-example}
\end{subfigure}
\caption{Examples of Timed Automata}
\label{fig:examples}
\end{figure}
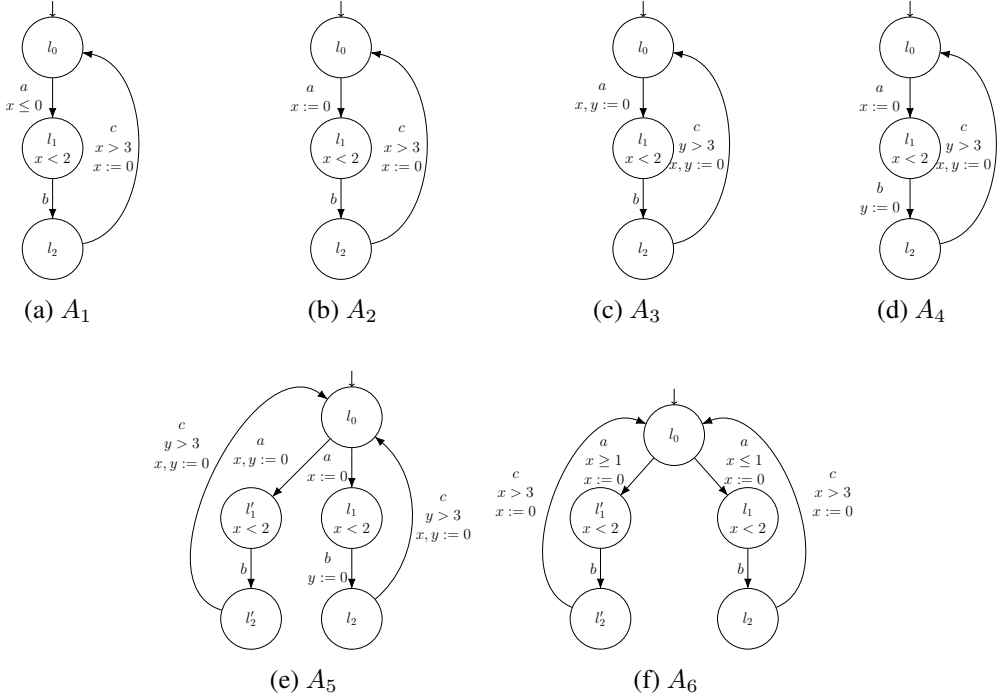
\end{example}

We assume a TA to be \emph{diagonal-free} thus 
containing only atomic clock constraints of the form $c \sim n$. 
This assumption eases many of the following proofs and 
does not constitute a restriction as for every non-diagonal-free TA, a language-equivalent 
diagonal-free TA can be constructed~\cite{Berard1998}.
Clock constraints occurring in a TA are evaluated
by means of clock valuations $u: C \rightarrow \TimeDomain$. 

\begin{definition}[Clock Valuation]
\label{definition:background:clock-valuation}
Let $C$ be a set of clocks over time domain $\TimeDomain$. 
A \emph{clock valuation} $u: C \rightarrow \TimeDomain$ assigns a value $u(c)$ 
to each clock $c \in C$. For a value $d \in \TimeDomain$, the clock valuation $\ClockValuation{} + d$ assigns to each clock $c$ 
the value $\ClockValuation{}(c)+d$ and for a set of clocks $R \subseteq C$, the clock valuation $\ClockValuation{}' = [R \rightarrow 0]\ClockValuation{}$ 
assigns to each clock $c \in R$ the value $\ClockValuation{}'(c) = 0$ and for each clock $c' \in C$ with $c' \not\in R$ 
the value $\ClockValuation{}'(c') = \ClockValuation{}(c')$. A clock valuation $\ClockValuation{}$ satisfies a clock constraint $\phi\in\mathcal{B}(C)$, denoted $\ClockValuation{} \models \phi$, if the replacement of all occurrences of any clock $c \in C$ in $\phi$ by its value $\ClockValuation{}(c)$ yields a true statement.

\end{definition}

If a clock valuation $\ClockValuation{}$ does not satisfy a clock constraint $\phi$, we denote $\ClockValuation{} \not\models \phi$. We call a clock constraint $\phi$ to be \emph{stronger} than a clock constraint $\phi'$, 
if and only if $\ClockValuation{} \models \phi$ implies $\ClockValuation{} \models \phi'$ and there exists a $\ClockValuation{}'$ with $\ClockValuation{}' \models \phi'$ and $\ClockValuation{} \not\models \phi$. 
To rule out trivial cases, we assume the location invariant $I(l_0)$ to be always satisfied 
by the initial clock valuation $[C \rightarrow 0]$. 

\subsection{Timed Labeled Transition Systems}

\textit{Labeled Transition Systems (LTS)} are used to 
model processes with finite or infinite numbers of states.
To capture the operational semantics of TA, \textit{Timed Labeled Transition Systems (TLTS)} extend LTS by the notion
of time.
A state of a TLTS is a pair $(l, u)$, with $l \in L$ being the currently active location and 
$u: C \rightarrow \TimeDomain$ being the current valuation of clocks. 
TLTS contain two types of transitions, one corresponding to the passage of time of duration $d \in \TimeDomain$ 
without any occurrence of a switch, leading to an updated clock valuation $u' = u+d$, 
and one corresponding to the instantaneous execution of a switch $\TATrans{1}{g}{\sigma}{R}{2}$, leading to the state $(l_2, [R \rightarrow 0] u)$. 

\begin{definition}[TLTS]
\label{def:background:TLTS}
Let $A = (l, l_0, \Sigma, C, I, E)$ be a TA. 
The \emph{TLTS} $\TLTS{A}$ of $A$ is a tuple $(\TLTSAllStates{}, \TLTSState{0, A}, \Sigma \cup \TimeDomain, \TLTSTrans{})$, where $\TLTSAllStates{}= L \times (C \rightarrow \TimeDomain)$ is a set of states with $\TLTSState{0, A} = (l_0, [C \rightarrow 0]) \in \TLTSAllStates{}$ being the initial state, $\Sigma \cup \TimeDomain$ is the set of transition labels, where $\Sigma \cap \TimeDomain = \emptyset$, $\TLTSTrans{} \ \subseteq \TLTSAllStates{} \times \Sigma \cup \TimeDomain \times \TLTSAllStates{}$ is a set of transitions, which is the least relation satisfying
\begin{itemize}
\item $\TLTSFullState{}{} \TLTSTrans{d} (l, \ClockValuation{}+d)$ if $(\ClockValuation{}+d) \models I(l)$ for $d \in \TimeDomain$, and
\item $\TLTSFullState{1}{} \TLTSTrans{\sigma} \TLTSFullState{2}{}$ if $\TATrans{1}{g}{\sigma}{R}{2}$ with $\ClockValuation{1} \models g$, $\ClockValuation{2} = [R \rightarrow 0]\ClockValuation{1}$, $\ClockValuation{2} \models I(l_2)$, and $\sigma \in \Sigma$. 
\end{itemize}
\end{definition}

Since we assume $[C \rightarrow 0] \models I(l_0)$ and we check for each TLTS transition 
if the clock valuation of the target state satisfies the invariant of the corresponding
target location, $u \models I(l)$ holds for all reachable states $\TLTSFullState{}{}$. 

\begin{example}
\newcommand{\backgroundExTLTSDistance}{3}
\newcommand{\backgroundExTLTSFourVertDistance}{3}
\newcommand{\backgroundExTLTSFourHorDistance}{3}
\newcommand{\backgroundExTLTSArrowDesc}{\draw[-{Stealth[length=2mm]}]}
\begin{figure}
\centering
\begin{subfigure}{0.3\textwidth}
\centering
\scalebox{\backgroundExampleAutomataScalingFactor}{
\begin{tikzpicture}
\tikzstyle{every node}=[font=\backgroundTikzFontSize]
\node (l00) [rectangle, initial, initial text = {}, align=center] at (0, 1) {$l_0$\\$x = 0$};
\node (l01) [rectangle, align=center] at (\backgroundExTLTSDistance, 1) {$l_0$\\$x = 1$};
\node (l02) [rectangle, align=center] at (1.75*\backgroundExTLTSDistance, 1) {$l_0$\\$x = 2$};
\node (l03) [rectangle, align=center] at (2.5*\backgroundExTLTSDistance, 1) {...};
\node (l10) [rectangle, align=center] at (0, -\backgroundExTLTSDistance) {$l_1$\\$x = 0$};
\node (l11) [rectangle, align=center] at (\backgroundExTLTSDistance, -\backgroundExTLTSDistance) {$l_1$\\$x = 1$};
\node (l20) [rectangle, align=center] at (0, -2*\backgroundExTLTSDistance) {$l_2$\\$x = 0$};
\node (l21) [rectangle, align=center] at (\backgroundExTLTSDistance, -2*\backgroundExTLTSDistance) {$l_2$\\$x = 1$};
\node (l22) [rectangle, align=center] at (1.75*\backgroundExTLTSDistance, -2*\backgroundExTLTSDistance) {$l_2$\\$x = 2$};
\node (l23) [rectangle, align=center] at (2.5*\backgroundExTLTSDistance, -2*\backgroundExTLTSDistance) {$l_2$\\$x = 3$};
\node (l24) [rectangle, align=center] at (3.25*\backgroundExTLTSDistance, -2*\backgroundExTLTSDistance) {...};
\backgroundExTLTSArrowDesc (l00) -- (l01) node[anchor=south, midway]{1};
\backgroundExTLTSArrowDesc (l01) -- (l02) node[anchor=south, midway]{1};
\backgroundExTLTSArrowDesc (l02) -- (l03) node[anchor=south, midway]{1};
\backgroundExTLTSArrowDesc (l00) -- (l10) node[anchor=east, midway]{a};
\backgroundExTLTSArrowDesc (l10) -- (l11) node[anchor=south, midway]{1};
\backgroundExTLTSArrowDesc (l10) -- (l20) node[anchor=east, midway]{b};
\backgroundExTLTSArrowDesc (l11) -- (l21) node[anchor=east, midway]{b};
\backgroundExTLTSArrowDesc (l20) -- (l21) node[anchor=south, midway]{1};
\backgroundExTLTSArrowDesc (l21) -- (l22) node[anchor=south, midway]{1};
\backgroundExTLTSArrowDesc (l22) -- (l23) node[anchor=south, midway]{1};
\backgroundExTLTSArrowDesc (l23) -- (l24) node[anchor=south, midway]{1};
\draw[-] (l24) -- (3.25*\backgroundExTLTSDistance, 2) node[anchor=east, midway]{c};
\draw[-] (3.25*\backgroundExTLTSDistance, 2) -- (0, 2);
\backgroundExTLTSArrowDesc (0, 2) -- (l00);
\end{tikzpicture}
}
\caption{$A_{\text{TLTS}, 1}$}
\label{subfigure:background:TLTS:example:A-1}
\end{subfigure}
\begin{subfigure}{0.3\textwidth}
\centering
\scalebox{\backgroundExampleAutomataScalingFactor}{
\begin{tikzpicture}
\tikzstyle{every node}=[font=\backgroundTikzFontSize]
\node (l00) [rectangle, initial, initial text = {}, align=center] at (0, 1) {$l_0$\\$x = 0$\\$y=0$};
\node (l01) [rectangle, align=center] at (\backgroundExTLTSDistance, 1) {$l_0$\\$x = 1$\\$y=1$};
\node (l02) [rectangle, align=center] at (1.75*\backgroundExTLTSDistance, 1) {$l_0$\\$x = 2$\\$y=2$};
\node (l03) [rectangle, align=center] at (2.5*\backgroundExTLTSDistance, 1) {...};
\node (l10) [rectangle, align=center] at (0, -\backgroundExTLTSDistance) {$l_1$\\$x = 0$\\$y=0$};
\node (l11) [rectangle, align=center] at (\backgroundExTLTSDistance, -\backgroundExTLTSDistance) {$l_1$\\$x = 1$\\$y=1$};
\node (l20) [rectangle, align=center] at (0, -2*\backgroundExTLTSDistance) {$l_2$\\$x=0$\\$y = 0$};
\node (l21) [rectangle, align=center] at (\backgroundExTLTSDistance, -2*\backgroundExTLTSDistance) {$l_2$\\$x=1$\\$y = 1$};
\node (l22) [rectangle, align=center] at (1.75*\backgroundExTLTSDistance, -2*\backgroundExTLTSDistance) {$l_2$\\$x=2$\\$y = 2$};
\node (l23) [rectangle, align=center] at (2.5*\backgroundExTLTSDistance, -2*\backgroundExTLTSDistance) {$l_2$\\$x=3$\\$y = 3$};
\node (l24) [rectangle, align=center] at (3.25*\backgroundExTLTSDistance, -2*\backgroundExTLTSDistance) {...};
\backgroundExTLTSArrowDesc (l00) -- (l01) node[anchor=south, midway]{1};
\backgroundExTLTSArrowDesc (l01) -- (l02) node[anchor=south, midway]{1};
\backgroundExTLTSArrowDesc (l02) -- (l03) node[anchor=south, midway]{1};
\backgroundExTLTSArrowDesc (l00) -- (l10) node[anchor=east, midway]{a};
\backgroundExTLTSArrowDesc (l01) -- (l10) node[anchor=east, midway]{a};
\backgroundExTLTSArrowDesc (l02) -- (l10) node[anchor=south, midway]{a};
\backgroundExTLTSArrowDesc (l03) -- (l10) node[anchor=south, midway]{a};
\backgroundExTLTSArrowDesc (l10) -- (l11) node[anchor=north, midway]{1};
\backgroundExTLTSArrowDesc(l10) -- (l20) node[anchor=east, midway]{b};
\backgroundExTLTSArrowDesc (l11) -- (l21) node[anchor=east, midway]{b};
\backgroundExTLTSArrowDesc (l20) -- (l21) node[anchor=south, midway]{1};
\backgroundExTLTSArrowDesc (l21) -- (l22) node[anchor=south, midway]{1};
\backgroundExTLTSArrowDesc (l22) -- (l23) node[anchor=south, midway]{1};
\backgroundExTLTSArrowDesc (l23) -- (l24) node[anchor=south, midway]{1};
\draw[-] (l24) -- (3.25*\backgroundExTLTSDistance, 2.5) node[anchor=east, midway]{c};
\draw[-] (3.25*\backgroundExTLTSDistance, 2.5) -- (0, 2.5);
\backgroundExTLTSArrowDesc (0, 2.5) -- (l00);
\end{tikzpicture}
}
\caption{$A_{\text{TLTS}, 3}$}
\label{subfigure:background:TLTS:example:A-3}
\end{subfigure}
\begin{subfigure}{0.33\textwidth}
\centering
\scalebox{\backgroundExampleAutomataScalingFactor}{
\begin{tikzpicture}
\tikzstyle{every node}=[font=\backgroundTikzFontSize]
\node (l00) [rectangle, initial, initial text = {}, align=center] at (0, 0) {$l_0$\\$x, y = 0$};
\node (l01) [rectangle, align=center] at (\backgroundExTLTSDistance, 0) {$l_0$\\$x, y = 1$};
\node (l02) [rectangle, align=center] at (2*\backgroundExTLTSDistance, 0) {$l_0$\\$x, y = 2$};
\node (l02Follow) [rectangle, align=center] at (2*\backgroundExTLTSDistance, -0.5*\backgroundExTLTSFourVertDistance) {...};
\node (l03) [rectangle, align=center] at (3*\backgroundExTLTSDistance, 0) {...};
\node (l10Equal) [rectangle, align=center] at (0, -1.5*\backgroundExTLTSFourVertDistance) {$l_1$\\$x, y = 0$};
\node (l11Equal) [rectangle, align=center] at (2*\backgroundExTLTSDistance, -1.5*\backgroundExTLTSFourVertDistance) {$l_1$\\$x, y = 1$};
\node (l10NotEqual) [rectangle, align=center] at (\backgroundExTLTSDistance, -1*\backgroundExTLTSFourVertDistance) {$l_1$\\$x = 0$\\$y = 1$};
\node (l11NotEqual) [rectangle, align=center] at (3*\backgroundExTLTSDistance, -1*\backgroundExTLTSFourVertDistance) {$l_1$\\$x = 1$\\$y = 2$};
\node (l20Equal) [rectangle, align=center] at (\backgroundExTLTSDistance, -2.5*\backgroundExTLTSFourVertDistance) {$l_2$\\$x, y = 0$};
\node (l21Equal) [rectangle, align=center] at (\backgroundExTLTSDistance, -3*\backgroundExTLTSFourVertDistance) {...};
\node (l22Equal) [rectangle, align=center] at (\backgroundExTLTSDistance, -3.5*\backgroundExTLTSFourVertDistance) {$l_2$\\$x, y = 4$};
\node (l23Equal) [rectangle, align=center] at (0, -3.5*\backgroundExTLTSDistance) {...};
\node (l20NotEqual) [rectangle, align=center] at (3*\backgroundExTLTSDistance, -2*\backgroundExTLTSFourVertDistance) {$l_2$\\$x = 1$\\$y = 0$};
\node (l21NotEqual) [rectangle, align=center] at (3*\backgroundExTLTSDistance, -3*\backgroundExTLTSFourVertDistance) {...};
\node (l22NotEqual) [rectangle, align=center] at (2*\backgroundExTLTSDistance, -3*\backgroundExTLTSFourVertDistance) {$l_2$\\$x = 5$\\$y = 4$};
\node (l23NotEqual) [rectangle, align=center] at (2*\backgroundExTLTSDistance, -2*\backgroundExTLTSFourVertDistance) {...};
\backgroundExTLTSArrowDesc (l00) -- (l01) node[anchor=south, midway]{1};
\backgroundExTLTSArrowDesc (l01) -- (l02) node[anchor=south, midway]{1};
\backgroundExTLTSArrowDesc (l02) -- (l03) node[anchor=south, midway]{1};
\backgroundExTLTSArrowDesc (l02) -- (l02Follow) node[anchor=east, midway]{a};
%
%
\backgroundExTLTSArrowDesc (l00) -- (l10Equal) node[anchor=east, midway]{a};
\backgroundExTLTSArrowDesc (l10Equal) -- (l11Equal) node[anchor=south, near start]{1};
\backgroundExTLTSArrowDesc (l10Equal) -- (l20Equal) node[anchor=east, midway]{b};
\backgroundExTLTSArrowDesc (l11Equal) -- (l20NotEqual) node[anchor=east, midway, yshift=-0.2cm]{b};
%
%
\backgroundExTLTSArrowDesc (l01) -- (l10NotEqual) node[anchor=east, midway]{a};
\backgroundExTLTSArrowDesc (l10NotEqual) -- (l11NotEqual) node[anchor=south, midway]{1};
\backgroundExTLTSArrowDesc (l10NotEqual) -- (l20Equal) node[anchor=east, midway]{b};
\backgroundExTLTSArrowDesc (l11NotEqual) -- (l20NotEqual) node[anchor=east, midway]{b};
%
%
\backgroundExTLTSArrowDesc (l20Equal) -- (l21Equal) node[anchor=east, midway]{1};
\backgroundExTLTSArrowDesc (l21Equal) -- (l22Equal) node[anchor=east, midway]{1};
\backgroundExTLTSArrowDesc (l20NotEqual) -- (l21NotEqual) node[anchor=east, midway]{1};
\backgroundExTLTSArrowDesc (l21NotEqual) -- (l22NotEqual) node[anchor=south, midway]{1};
\draw[-] (2*\backgroundExTLTSDistance + 0.7, -3*\backgroundExTLTSFourVertDistance - 0.5) -- (3.75*\backgroundExTLTSDistance, -3*\backgroundExTLTSFourVertDistance - 0.5) node[anchor=south, near end]{c};
\draw[-] (3.75*\backgroundExTLTSDistance, -3.5*\backgroundExTLTSFourVertDistance) -- (3.75*\backgroundExTLTSDistance, 1);
\draw[-] (3.75*\backgroundExTLTSDistance, 1) -- (0, 1);
\backgroundExTLTSArrowDesc (0, 1) -- (l00);
\draw[-] (l22Equal.east) -- (3.75*\backgroundExTLTSDistance, -3.5*\backgroundExTLTSFourVertDistance) node[anchor=south, midway]{c};
\draw[-] (3.75*\backgroundExTLTSDistance, -3.5*\backgroundExTLTSFourVertDistance) edge (3.75*\backgroundExTLTSDistance, -3.5*\backgroundExTLTSFourVertDistance);
\backgroundExTLTSArrowDesc (l22NotEqual) -- (l23NotEqual) node[anchor=east, midway]{1};
\backgroundExTLTSArrowDesc (l22Equal) -- (l23Equal) node[anchor=south, midway]{1};
\end{tikzpicture}
}
\caption{$A_{\text{TLTS}, 4}$}
\label{subfigure:background:TLTS:example:A-4}
\end{subfigure}
\caption{The TLTS of some TA of Figure \ref{fig:examples}}
\label{figure:background:TLTS:example}
\end{figure}
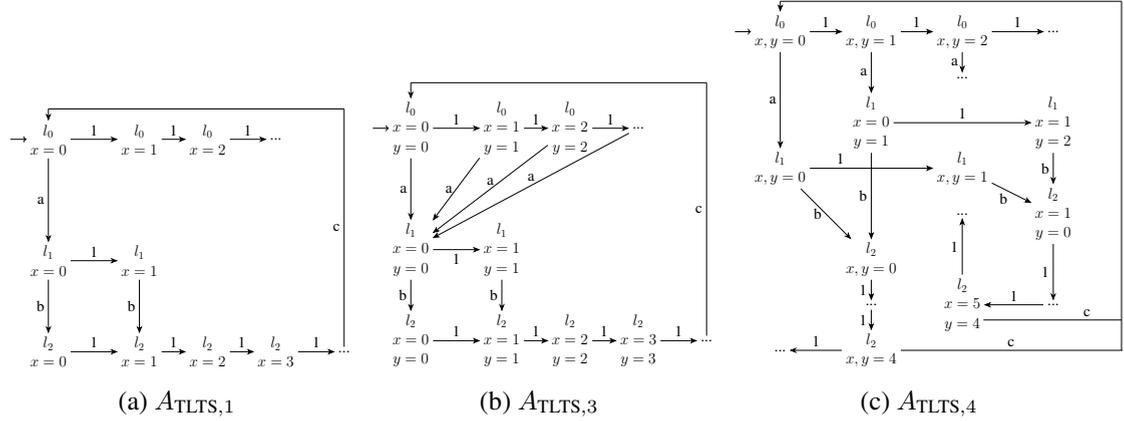

Figure \ref{figure:background:TLTS:example} shows the TLTS of $A_{1}$ (Figure \ref{fig:exampleA1}), $A_{3}$ (Figure \ref{fig:exampleA3}), and $A_{4}$ (Figure \ref{fig:exampleA4}). For reasons of readability, all delay transitions with the exception of $d=1$ have been removed. In the discrete-time model, the states can have an infinite number of outgoing transitions labeled with natural numbers. In the dense-time model the number of outgoing transitions is uncountably infinite and the transitions are labeled with real numbers.

The TLTS of $A_2$ (Figure \ref{fig:exampleA2}) is identical to $A_{\text{TLTS}, 3}$, with the exception of the absence of clock $y$. The TLTS of $A_5$ (Figure \ref{fig:nondet-example}) is a combination of $A_{\text{TLTS}, 3}$ and $A_{\text{TLTS}, 4}$, while the TLTS of $A_6$ (Figure \ref{fig:nondet-bisim-example}) is two times the TLTS of $A_2$ (the first part is reachable from $l_0$ if the guard $x \leq 1$ is satisfied and the second part is reachable from $l_0$ if the guard $x \geq 1$ is satisfied). Time delays are always enabled, unless a location has an invariant that is no longer satisfied after this delay. Action-based transitions are enabled if and only if there exists a corresponding switch $\ell \TATransShort{g}{\sigma}{R}\ell'$ in the TA, for which the clock valuation $\ClockValuation{}$ of the current state fulfills $g$ and the clock valuation $[R \rightarrow 0]\ClockValuation{}$, which is the clock valuation of the target state, fulfills the invariant of $\ell'$.
\end{example}

A fundamental decision problem in the context of
behavioral modeling is to compare the behavior
of two models against each other (e.g.,
one representing the specification and the other one
constituting a candidate implementation).
The most obvious approach is to compare the
languages characterized by both models against each other.
Unfortunately, it is well-known that timed language inclusion is undecidable for timed automata~\cite{AlurDill90AutomataForModelingRealTimeSystems} 
and this result can easily be extended to show that language equivalence is undecidable, too. 

In contrast, the stronger notion of timed bisimulation of two TA models is decidable~\cite{Cerans1992}.

\begin{definition}[Timed Bisimulation \cite{Lynch1992, Cerans1992}]
\label{def:background:sim-and-bisim:Strong-Timed-Bisimulation}
Let $A=(L_A, l_{0, A}, \Sigma, C_A, I_A, E_A)$ and $B=(L_B, l_{0, B}, \Sigma, C_B, I_B, E_B)$ be TA over the same alphabet $\Sigma$ with $C_A \cap C_B = \emptyset$. $A$ has the corresponding TLTS $\TLTS{A} = (\TLTSAllStates{A}, \TLTSState{0, A}, \Sigma \cup \TimeDomain, \TLTSTrans{})$ and $B$ has the corresponding TLTS $\TLTS{B} = (\TLTSAllStates{B}, \TLTSState{0, B}, \Sigma \cup \TimeDomain, \TLTSTrans{})$.
A Relation $R \subseteq \TLTSAllStates{B} \times \TLTSAllStates{A}$ is called a \textit{Timed Bisimulation}, if and only if for all pairs $(\TLTSState{1, B}, \TLTSState{1, A}) \in R$ it holds that 
\begin{enumerate}
\item if there is a transition $\TLTSState{1, B} \TLTSTrans{\mu} \TLTSState{2, B}$ for $\mu \in \Sigma \cup \TimeDomain$, then a transition $\TLTSState{1, A} \TLTSTrans{\mu} \TLTSState{2, A}$ with $(\TLTSState{2, B}, \TLTSState{2, A}) \in R$, exists, and
\item if there is a transition $\TLTSState{1, A} \TLTSTrans{\mu} \TLTSState{2, A}$ for $\mu \in \Sigma \cup \TimeDomain$, then a transition $\TLTSState{1, B} \TLTSTrans{\mu} \TLTSState{2, B}$ with $(\TLTSState{2, B}, \TLTSState{2, A}) \in R$, exists.
\end{enumerate}
\end{definition}


$A$ is timed bisimilar to $B$, denoted $B \sim A$, if and only if there exists a timed bisimulation $R$ with $(\TLTSState{0, B}, \TLTSState{0, A}) \in R$. In the following, we write that 
$A$ and $B$ are bisimilar, if and only if $A$ and $B$ are timed bisimilar. 
It is well-known that (timed) bisimulation is an equivalence relation.

\begin{example}
\label{ex:background:timed-bisim}
We compare the previously described examples with respect to timed bisimulation.

Due to the state $(l_0, x=1)$, which does not have an outgoing transition labeled with $a$, $A_{\text{TLTS}, 1}$ (Figure~\ref{subfigure:background:TLTS:example:A-1}) is not bisimilar to any other TLTS shown. For instance, if we compare $A_{\text{TLTS}, 1}$ and $A_{\text{TLTS}, 3}$ (Figure~\ref{subfigure:background:TLTS:example:A-3}), we find that if we use the transition labeled with 1, the reached states are not bisimilar since $A_{\text{TLTS}, 3}$ has an outgoing transition labeled with $a$ here and $A_{\text{TLTS}, 1}$ does not. 

Since $A_{\text{TLTS}, 2}$ and $A_{\text{TLTS}, 3}$ have an isomorphic structure, they are bisimilar, despite the fact that the sets of clocks are not equivalent. Due to the fact that bisimulation is an equivalence relation, we know that $A_{\text{TLTS}, 2}$ is bisimilar to any other TLTS if and only if that TLTS is bisimilar to $A_{\text{TLTS}, 3}$. Therefore, we proceed with $A_{\text{TLTS}, 3}$. 

To compare $A_{\text{TLTS}, 3}$ and $A_{\text{TLTS}, 4}$ (Figure~\ref{subfigure:background:TLTS:example:A-4}), we consider the sequence $[a, 1, b, 3]$. In $A_{\text{TLTS}, 3}$, this sequence results into the state $(l_2, (x=4, y=4))$, which has an outgoing transition labeled with $c$. In $A_{\text{TLTS}, 4}$, this sequence results into the state $(l_2, (x=4, y=3))$, which does not have such an outgoing transition. Therefore, the TLTS are not bisimilar.

$A_{\text{TLTS}, 4}$ and $A_{\text{TLTS}, 5}$ are not bisimilar since $A_{\text{TLTS}, 4}$ is not bisimilar to $A_{\text{TLTS}, 3}$, which implies that the corresponding branch of $A_{\text{TLTS}, 5}$ is also not bisimilar to $A_{\text{TLTS}, 4}$.

The comparison of $A_{\text{TLTS}, 3}$ and $A_{\text{TLTS}, 6}$ shows that a non-deterministic TA can be timed bisimilar to a deterministic TA. Since $A_{\text{TLTS}, 6}$ is essentially two times $A_{\text{TLTS}, 2}$ we already know that both individual parts of $A_{\text{TLTS}, 6}$ have an identical structure to $A_{\text{TLTS}, 3}$ and we only have to check the state $(l_0, x = 1)$, which has two outgoing transitions labeled with a in $A_{\text{TLTS}, 6}$. The first one leads to $(l_1, x=0)$. Since the following states have an identical structure to $A_{\text{TLTS}, 3}$, we conclude that the state $(l_1, x=0)$, reachable in $A_{\text{TLTS}, 3}$ from $(l_0, x = 1)$ via a transition labeled with a, is bisimilar to the state of $A_{\text{TLTS}, 6}$. The same holds for the other outgoing transition in $A_{\text{TLTS}, 6}$, which leads to $(l_1', x=0)$. Since the following structure is the same as in $A_{\text{TLTS}, 3}$, we conclude that $(l_1, x=0)$ from $A_{\text{TLTS}, 3}$ is bisimilar to $(l_1', x=0)$ from $A_{\text{TLTS}, 6}$ and, therefore, the states $(l_0, x=1)$ from both TLTS are bisimilar.
\end{example}

Although bisimulation of two TA is defined
with respect to their TLTS models, they do not provide
a feasible basis for effectively checking for bisimulation due to their infinite number of states.
In the following we revisit two alternative finite representations 
of timed automata semantics, namely region graphs and zone graphs.

\subsection{Region Graphs}

Region graphs were introduced by Alur and Dill \cite{ALUR1994183}. The main idea is based on the observation that if two clock valuations agree on the integer part of all clock values then every constraint of a diagonal-free TA is satisfied either by both of the clock valuations or none of them (remember: Clock constraints use numbers from $\mathbb{N}^{\geq 0}$ only). Furthermore, if the order of the clocks in which the integer part changes is equal for two clock valuations (i.e., both will lead to similar behavior in the future), the states can be considered to be in the same equivalence class, called \textit{region}.

\begin{example}
\label{ex:background:region-graphs}
Consider TLTS $A_{\text{TLTS}, 4}$ (Figure \ref{subfigure:background:TLTS:example:A-4}) and the sequences $[0.5, a, 0.6]$, $[0.9, a, 0.5]$, $[0.9, a, \allowbreak 1.3]$, and $[0, a, 0]$ (we assume dense-time model here). All these sequences describe a delay, followed by the input $a$ and another delay. The sequences all lead to location $l_1$, but with different clock valuations: $\ClockValuation{1}$, with $\ClockValuation{1}(x) = 0.6$ and $\ClockValuation{1}(y) = 1.1$, $\ClockValuation{2}$, with $\ClockValuation{2}(x) = 0.5$ and $\ClockValuation{2}(y) = 1.4$, $\ClockValuation{3}$, with $\ClockValuation{3}(x) =1.3$ and $\ClockValuation{3}(y) = 2.2$, and $\ClockValuation{4}$, with $\ClockValuation{4}(x) = 0$ and $\ClockValuation{4}(y) = 0$.

$\ClockValuation{1}(x)$ and $\ClockValuation{2}(x)$ have the integer part zero, $\ClockValuation{1}(y)$ and $\ClockValuation{2}(y)$ have the integer part one, and the fractional part of the value of $x$ is both times larger than the fractional part of the value of $y$ and, therefore, clock $x$ changes its integer part before clock $y$. This implies that $\ClockValuation{1}$ and $\ClockValuation{2}$ are in the same region. $\ClockValuation{3}$ is in a different region since the integer part of $\ClockValuation{3}(x)$ is one.

Since the fractional part of $\ClockValuation{4}(x)$ is equal to the fractional part of $\ClockValuation{4}(y)$, which means that $x$ and $y$ will change its integer part at the same time, $\ClockValuation{4}$ is in a different region than the other clock valuations.
\end{example}

Given a diagonal-free TA $A$ over a set of clocks $C$, we define $k : C \rightarrow \mathbb{N}^{\geq 0}$ to map 
each clock to a natural number, such that for any clock constraint $x \sim n$ with $x \in C$ and $n \in \mathbb{N}^{\geq 0}$ occurring 
in $A$, it holds that $n < k(x)$. We are allowed to consider all values for $x$ which are larger than $k(x)$ to be equivalent, since every constraint of a diagonal-free TA of the form $x \sim n$ is either satisfied or not satisfied by all these values~\cite{AlurDill}. By $\lfloor z \rfloor$ we denote the integral part of a number and by $\text{frac}(z)$ we denote the fractional part of a number. 
Region equivalence is defined as follows.

\begin{definition}[Region Equivalence \cite{ALUR1994183}]
\label{def:background:k-equivalence}
Let $C$ be a set of clocks, $k : C \rightarrow \mathbb{N}^{\geq 0}$ a function, and $\ClockValuation{A}, \ClockValuation{B} : C \rightarrow \TimeDomain$ two clock valuations. 
$\ClockValuation{A}$ and $\ClockValuation{B}$ are \emph{region equivalent}, denoted $\ClockValuation{A} \simeq_k \ClockValuation{B}$, if and only if
\begin{enumerate}
\item $\forall c \in C: ((\lfloor \ClockValuation{A}(c) \rfloor = \lfloor \ClockValuation{B}(c) \rfloor) \lor (\ClockValuation{A}(c) > k(c) \land \ClockValuation{B}(c) > k(c)))$,
\item $\forall c, c' \in C : (\ClockValuation{A}(c) \leq k(c) \land \ClockValuation{A}(c') \leq k(c')) \text{ implies } (\text{frac}(\ClockValuation{A}(c)) \leq \text{frac}(\ClockValuation{A}(c')) \Leftrightarrow \text{frac}(\ClockValuation{B}(c)) \leq \text{frac}(\ClockValuation{B}(c')))$, and
\item $\forall c \in C : (\ClockValuation{A}(c) \leq k(c)) \text{ implies } (\text{frac}(\ClockValuation{A}(c)) = 0 \Leftrightarrow \text{frac}(\ClockValuation{B}(c)) = 0)$.
\end{enumerate}
holds. We define two states $\TLTSFullState{1}{}$, $\TLTSFullState{2}{}$ to be region equivalent, if and only if $l_1 = l_2$ and $\ClockValuation{1} \simeq_k \ClockValuation{2}$ hold.
\end{definition}

Using region equivalence, it is possible to create a finite region-graph representation of timed automata semantics. This construction has been used by \u{C}er\={a}ns to prove that timed bisimulation is decidable~\cite{Cerans1992}. However, region graphs may become unnecessarily large containing many regions, especially if large constants appear within clock constraints~\cite{Pettersson}.
For any clock constraint  $\phi \in \mathcal{B}(C)$ of the form $\phi = \phi_1 \land \phi_2 \land ...$ with $\phi_i = c \sim n$ and $n < k(c)$ and any two clock valuations $\ClockValuation{1}, \ClockValuation{2}$ of the same region, either $\ClockValuation{1}, \ClockValuation{2} \models \phi$ or $\ClockValuation{1}, \ClockValuation{2} \not\models \phi$ holds. Nevertheless, the distinction into two different regions might not be relevant for the analysis of the TA and, therefore, the symbolic analysis introduced by regions can be further improved. This is the main idea of \textit{zones}.

\subsection{Zone Graphs}

Zone graphs define a symbolic representation of timed automata semantics. Like regions, zones also represent a set of clock valuations, but in typical cases zone graphs are significantly smaller.
\begin{definition}[Zone]
\label{def:background:zones}
Let $C$ be a set of clocks. 
A zone $D\in 2^{C \rightarrow \TimeDomain}$ is a set of clock
valuations corresponding to a clock constraint $\phi\in\mathcal{B}(C)$
such that $D = \{u \ | \ u \in (C \rightarrow \TimeDomain) \wedge \ClockValuation{} \models \phi\}$.
Let zones $D, D' \in \mathcal{D}(C)$ and clock constraint $\phi$ be 
defined over the same set $C$, and $R \subseteq C$.
We define the operations $D \cap D' = \{u \ | \ u \in D \land u \in D'\}$ ($\cap$-operation), $D \land \phi = \{u \ | \ u \in D \land u \models \phi\}$ ($\land$-operation), $D^{\uparrow} = \{u+d \ | \ d \in \TimeDomain \land u \in D\}$ (future-operation), and $R(D) = \{[R \rightarrow 0] u \ | \ u \in D\}$ (reset-operation).
\end{definition}

In the symbolic semantics of TA, clock constraints $\phi$ over clocks $C$ are used to 
characterize the set of clock valuations $u$ for $C$ contained in a zone $D$
in a finite way (even if $D$ comprises an infinite number of clock valuations).
By $[\phi]$ we denote the maximum set of clock valuations for $C$
satisfying clock constraint $\phi$ over $C$. 
The operations on zones in Definition~\ref{def:background:zones}
can be lifted to the set $\mathcal{B}(C)$ of clock constraints as
$\mathcal{B}(C)$ is closed under these operations and 
the operator $[ \ ]$ commutes with these operations~\cite{Yi1995}.

\begin{example}
The states of Example~\ref{ex:background:region-graphs} all have outgoing delay transitions as long as the clock valuation of the resulting state evaluates $x$ to a value smaller than 2. Moreover, they all have an outgoing action transition, labeled with $b$. Therefore, we summarize all clock valuations of Example~\ref{ex:background:region-graphs} into the zone $[x < 2]$. Therefore, we only need to analyze a single symbolic state in the zone graph, while we have three different symbolic states in the region graph.
\end{example}

In most practical TA analysis tools,
\textit{Difference Bound Matrices (DBM)} are used to represent zones~\cite{Belman57, Dill90}. 
A DBM is a conjunction of $(|C|+1)^2$ basic constraints of the form $c - c' \preccurlyeq z$, $c \preccurlyeq z$, or $-c \preccurlyeq z$, respectively, with $\preccurlyeq \ \in \{<, \leq\}$ and $z \in \mathbb{Z} \cup \{\infty\}$. 
Given a set of clocks $C$, we refer to the set of all DBM over $C$ by $\mathcal{M}(C)$. 
An element of a DBM, which represents a basic constraint, is called \emph{strong}, if the corresponding clock constraint cannot be replaced by a stronger clock constraint 
without changing the set of clock valuations contained in the zone. 
For any DBM, there exists a canonical form in which all elements are strong.
For more details on DBM we refer the reader 
to the comprehensive overview of Bengtsson and Yi~\cite{Bengtsson2004}. 

A \emph{zone graph} defines the operational semantics of a TA as a transition system.
If a symbolic state $\ZGFullState{}{}$ is reachable in a zone graph, 
then every state $\TLTSFullState{}{}$ with $\ClockValuation{} \in \Zone{}$ is reachable in the corresponding TLTS. 
Conversely, if a state $\TLTSFullState{}{}$ of the TLTS is reachable, then there exists a reachable symbolic state $\ZGFullState{}{}$ with $\ClockValuation{} \in \Zone{}$ in the corresponding zone graph.
The following definition of zone graphs is a slightly adapted version to the zone graphs of Yi et al.~\cite{Yi1995}. We denote $\Zone{0} = \{[C \rightarrow 0]\}$.
\begin{definition}[Zone Graph]
\label{def:background:zonegraphs}
Let $A = (L, l_0, \Sigma, C, I, E)$ be a TA. 
A \emph{zone graph} $\ZG{A} = (\ZGAllStates{}, \ZGFullState{0}{}, \Sigma \cup \{\varepsilon\}, \ZGTrans{})$ 
of $A$ is a transition system, where $\ZGAllStates{} = L \times \mathcal{D}(C)$ is the set of symbolic states with $\ZGFullState{0}{} \in \ZGAllStates{}$ being the initial symbolic state, and $\ZGTrans{} \ \subseteq \ZGAllStates{} \times \Sigma \cup \{\varepsilon\} \times \ZGAllStates{}$ is the least relation satisfying the rules:
\begin{itemize}
\item $\ZGFullState{}{} \ZGTrans{\varepsilon} (l, \Zone{}^{\uparrow} \land I(l))$, and
\item $\ZGFullState{1}{} \ZGTrans{\sigma} (l_2, R(\Zone{1} \land g) \land I(l_2))$, if $\TATrans{1}{g}{\sigma}{R}{2}$ and $R(\Zone{1} \land g) \land I(l_2) \neq \emptyset$.
\end{itemize}
\end{definition}

Note that there always exists a clock constraint $\phi_0$ for which $[\phi_0] = \{[C \rightarrow 0]\}$ holds.
As all zones of a zone graph are constructed by using the previously described operations, 
each zone $D$ reachable from $\Zone{0}$ in a zone graph can be represented by 
a (finite) clock constraint $\phi$ with $[\phi] = D$. 
Since $D_0 \land I(l_0) = D_0$ holds for the initial symbolic state $(l_0, D_0)$ (as we assume $[C \rightarrow 0] \models I(l_0)$) 
and since we intersect the target zone $D$ of a transition by the invariant of the target location $l$, 
$D \land I(l) = D$ holds for all reachable states $(l, D)$.
Moreover, due to $D \subseteq D^{\uparrow}$ and since we exclude empty symbolic states after an action-labeled transition,
it follows that no reachable zone is ever empty. 
This last property, which will become useful later on, is the only difference of our zone graph as compared to Yi et al.~\cite{Yi1995}. 
For locations $l$, $l'$, clock valuation $u$, and zone $D$, we denote by $(l, u) \in (l', D)$ that $l = l'$ and $u \in D$ holds. 
The following propositions show that a zone graph has two crucial properties.
\begin{proposition}[Backward Stability of Zone Graphs]
\label{prop:background:backward-stability}
Assume a TA $A$ with set of clocks $C_A$, locations $l_1, l_2$, TLTS $\TLTS{A}$, and zone graph $\ZG{A}$. Let $\Zone{1}, \Zone{2} \in \mathcal{D}(C_A)$ be zones. If there is a transition $\ZGFullState{1}{1} \ZGTrans{\varepsilon} \ZGFullState{1}{2}$, then for any $\ClockValuation{2} \in \Zone{2}$ exists a $\ClockValuation{1} \in \Zone{1}$ and a $d \in \TimeDomain$ such that $\TLTSFullState{1}{} \TLTSTrans{d} \TLTSFullState{1}{2}$.
%
%
If there is a transition $\ZGFullState{1}{1} \ZGTrans{\sigma} \ZGFullState{2}{2}$ with $\sigma \in \Sigma$ then for any $\ClockValuation{2} \in \Zone{2}$ exists a $\ClockValuation{1} \in \Zone{1}$ such that $\TLTSFullState{1}{} \TLTSTrans{\sigma} \TLTSFullState{2}{}$.
%
%
\end{proposition}

\begin{proof}
This is a direct consequence of the results by Yi et al.~\cite{Yi1995}.
\end{proof}

Forward Stability is defined in a similar way. 
We remind the reader of the fact that any symbolic state has exactly one outgoing $\varepsilon$-transition.

\begin{proposition}[Forward Stability of Zone Graphs]
\label{prop:background:forward-stability}
Assume a TA $A$ with set of clocks $C_A$, locations $l_1, l_2$, TLTS $\TLTS{A}$, and zone graph $\ZG{A}$. Let $\ClockValuation{1}, \ClockValuation{2} \in (C_A \rightarrow \TimeDomain)$. If there is a transition $\TLTSFullState{1}{} \TLTSTrans{d} \TLTSFullState{1}{2}$ with $d \in \TimeDomain$, then for any zone $\Zone{1}$ with $\ClockValuation{1} \in \Zone{1}$ and transition $\ZGFullState{1}{} \ZGTrans{\varepsilon} \ZGFullState{1}{2}$, $\ClockValuation{2} \in \Zone{2}$ holds.
%
%
If there is a transition $\TLTSFullState{1}{} \TLTSTrans{\sigma} \TLTSFullState{2}{}$ with $\sigma \in \Sigma$, then then for any zone $\Zone{1}$ with $\ClockValuation{1} \in \Zone{1}$ there is a transition $\ZGFullState{1}{} \ZGTrans{\sigma} \ZGFullState{2}{}$ with $\ClockValuation{2} \in \Zone{2}$.
\end{proposition}

\begin{proof}
This is a direct consequence of the results by Yi et al.~\cite{Yi1995}.
\end{proof}

These properties of zone graphs are usually represented 
in a more compact way. 
Nevertheless, we use this representation to point out the relationship between transitions in the TLTS and their
counter-part in zone graphs.
%
%
\newcommand{\backgroundExampleZGVertDistance}{1cm}
Zone graphs do not necessarily yield a finite representation of TA semantics. To solve this issue, the same idea that we saw in Definition~\ref{def:background:k-equivalence} is used in recent works.
For every TA, a (sound and complete) finite zone graph representation
can be obtained using \textit{k-normalization}~\cite{Bengtsson2004, Rokicki}. Let $D$ be a zone, which can be described by a clock constraint $\phi = \phi_0 \land \phi_1 \land ..$. The main idea is to replace all $\phi_i$ of the form $c \preccurlyeq n$, $c - c' \preccurlyeq n$ where $\preccurlyeq \in \{<, \leq\}$ and $n > k(c)$ with true, and to replace all $\phi_i$ of the form $c \succcurlyeq n$, $c - c' \succcurlyeq n$ where $\succcurlyeq \in \{>, \geq\}$ and $n > k(c)$ with $c > k(c)$ or $c - c' > k(c)$ respectively~\cite{Bengtsson2004}. 
We use the notation $\text{norm}(\Zone{}, k)$ to indicate that k-normalization is used. 
In the worst case, zone graphs become exponentially large in the number of reachable symbolic states, as shown in the following example.

\newcommand{\backgroundExampleZGHorDistance}{0.9cm}

\begin{example}
\label{ex:background:alternating-sequences}
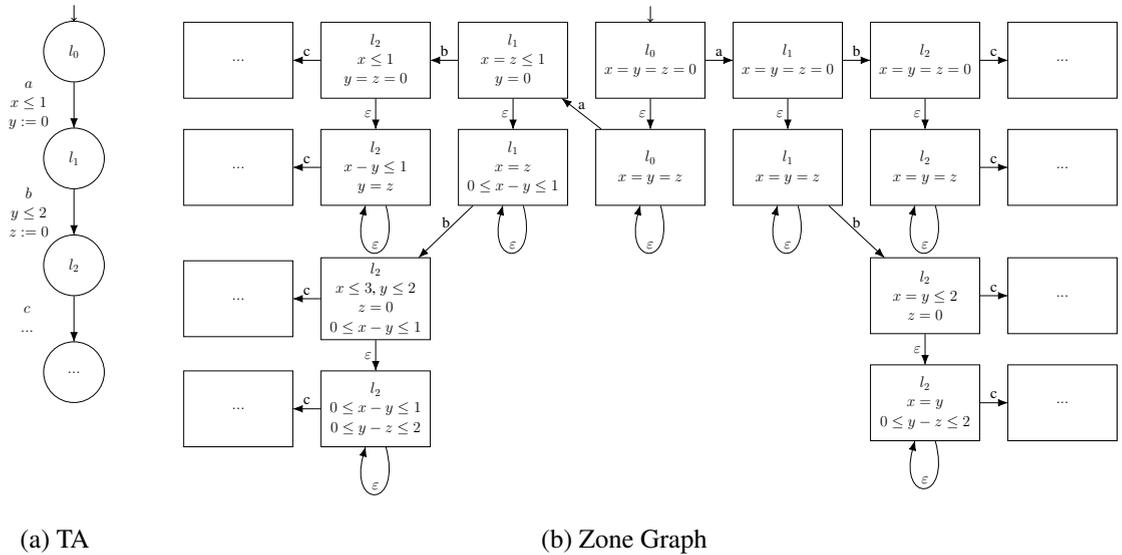
\begin{figure}[h!]
\begin{subfigure}{0.2\textwidth}
\centering
\scalebox{\backgroundExampleAutomataScalingFactor}{
\begin{tikzpicture}
\tikzstyle{every node}=[font=\backgroundTikzFontSize]
\tikzstyle{state} = [draw,circle,minimum size=2cm,inner sep=0pt,semithick]

\node[state, align=center, initial above, initial text=] (0) {$l_0$};
\node[state, align=center, below = 1.5cm of 0] (1) {$l_1$};
\node[state, align=center, below = 1.5cm of 1] (2) {$l_2$};
\node[state, align=center, below = 1.5cm of 2] (3) {$...$};
\node[state, draw = none, align=center, below = 1.5cm of 3] (4) {};
\backgroundExampleTAArrowDesc (0) --node[left = 0.7cm, align=center]{$a$\\$x \leq 1$\\$y:=0$} (1);
\backgroundExampleTAArrowDesc (1) --node[left = 0.7cm, align=center]{$b$\\$y \leq 2$\\$z:=0$} (2);
\backgroundExampleTAArrowDesc (2) --node[left = 1.2cm, align=center]{$c$\\$...$} (3);
\end{tikzpicture}
}
\caption{TA}
\label{subfig:background:zonegraph:example-alternating-seq:ta}
\end{subfigure}
\begin{subfigure}{0.75\textwidth}
\centering
\scalebox{\backgroundExampleAutomataScalingFactor}{
\begin{tikzpicture}
\tikzstyle{every node}=[font=\backgroundTikzFontSize]
\tikzstyle{symstate} = [draw,rectangle,minimum width=3.6cm,minimum height=2.5cm,inner sep=5pt,thick]
%
\node[symstate, align=center, initial above, initial text=] (0) {$l_0$\\$x=y=z=0$};
\node[symstate, align=center, below = \backgroundExampleZGVertDistance of 0] (1) {$l_0$\\$x=y=z$};
%
\node[symstate, align=center, right = \backgroundExampleZGHorDistance of 0] (2) {$l_1$\\$x=y=z=0$};
\node[symstate, align=center, below = \backgroundExampleZGVertDistance of 2] (3) {$l_1$\\$x=y=z$};
%
\node[symstate, align=center, left = \backgroundExampleZGHorDistance of 0] (4) {$l_1$\\$x = z\leq 1$\\$y=0$};
\node[symstate, align=center, below = \backgroundExampleZGVertDistance of 4] (5) {$l_1$\\$x = z$\\$0 \leq x-y \leq 1$};
%
%
%
\node[symstate, align=center, right = \backgroundExampleZGHorDistance of 2] (6) {$l_2$\\$x=y=z=0$};
\node[symstate, align=center, below = \backgroundExampleZGVertDistance of 6] (7) {$l_2$\\$x=y=z$};
%
\node[symstate, align=center, right = \backgroundExampleZGHorDistance of 6] (6target) {$...$};
\node[symstate, align=center, right = \backgroundExampleZGHorDistance of 7] (7target) {$...$};
%
\node[symstate, align=center, below = 1.7*\backgroundExampleZGVertDistance of 7] (8) {$l_2$\\$x = y\leq2$\\$z=0$};
\node[symstate, align=center, below = \backgroundExampleZGVertDistance of 8] (9) {$l_2$\\$x = y$\\$0 \leq y-z \leq 2$};
%
\node[symstate, align=center, right = \backgroundExampleZGHorDistance of 8] (8target) {$...$};
\node[symstate, align=center, right = \backgroundExampleZGHorDistance of 9] (9target) {$...$};
%
\node[symstate, align=center, left = \backgroundExampleZGHorDistance of 4] (10) {$l_2$\\$x \leq 1$\\$y=z=0$};
\node[symstate, align=center, below = \backgroundExampleZGVertDistance of 10] (11) {$l_2$\\$x - y \leq 1$\\$y=z$};
%
\node[symstate, align=center, left = \backgroundExampleZGHorDistance of 10] (10target) {$...$};
\node[symstate, align=center, left = \backgroundExampleZGHorDistance of 11] (11target) {$...$};
%
\node[symstate, align=center, below = 1.7*\backgroundExampleZGVertDistance of 11] (12) {$l_2$\\$x \leq 3$, $y \leq 2$\\$z=0$\\$0 \leq x-y \leq 1$};
\node[symstate, align=center, below = \backgroundExampleZGVertDistance of 12] (13) {$l_2$\\$0 \leq x-y \leq 1$\\$0 \leq y-z \leq 2$};
%
\node[symstate, align=center, left = \backgroundExampleZGHorDistance of 12] (12target) {$...$};
\node[symstate, align=center, left = \backgroundExampleZGHorDistance of 13] (13target) {$...$};
%
%
%
%
%
\backgroundExampleTAArrowDesc (0) --node[left, align=center]{$\varepsilon$} (1);
\backgroundExampleTAArrowDesc (1) to[loop below] node[above, align=center]{$\varepsilon$} (1);
\backgroundExampleTAArrowDesc (0) --node[above, align=center]{a} (2);
\backgroundExampleTAArrowDesc (1) --node[above, align=center]{a} (4);
%
%
%
\backgroundExampleTAArrowDesc (2) --node[left, align=center]{$\varepsilon$} (3);
\backgroundExampleTAArrowDesc (3) to[loop below] node[above, align=center]{$\varepsilon$} (3);
\backgroundExampleTAArrowDesc (2) --node[above, align=center]{b} (6);
\backgroundExampleTAArrowDesc (3) --node[above, align=center]{b} (8);
%
\backgroundExampleTAArrowDesc (4) --node[left, align=center]{$\varepsilon$} (5);
\backgroundExampleTAArrowDesc (5) to[loop below] node[above, align=center]{$\varepsilon$} (5);
\backgroundExampleTAArrowDesc (4) --node[above, align=center]{b} (10);
\backgroundExampleTAArrowDesc (5) --node[above, align=center]{b} (12);
%
%
%
\backgroundExampleTAArrowDesc (6) --node[left, align=center]{$\varepsilon$} (7);
\backgroundExampleTAArrowDesc (7) to[loop below] node[above, align=center]{$\varepsilon$} (7);
\backgroundExampleTAArrowDesc (6) --node[above, align=center]{c} (6target);
\backgroundExampleTAArrowDesc (7) --node[above, align=center]{c} (7target);
%
\backgroundExampleTAArrowDesc (8) --node[left, align=center]{$\varepsilon$} (9);
\backgroundExampleTAArrowDesc (9) to[loop below] node[above, align=center]{$\varepsilon$} (9);
\backgroundExampleTAArrowDesc (8) --node[above, align=center]{c} (8target);
\backgroundExampleTAArrowDesc (9) --node[above, align=center]{c} (9target);
%
\backgroundExampleTAArrowDesc (10) --node[left, align=center]{$\varepsilon$} (11);
\backgroundExampleTAArrowDesc (11) to[loop below] node[above, align=center]{$\varepsilon$} (11);
\backgroundExampleTAArrowDesc (10) --node[above, align=center]{c} (10target);
\backgroundExampleTAArrowDesc (11) --node[above, align=center]{c} (11target);
%
\backgroundExampleTAArrowDesc (12) --node[left, align=center]{$\varepsilon$} (13);
\backgroundExampleTAArrowDesc (13) to[loop below] node[above, align=center]{$\varepsilon$} (13);
\backgroundExampleTAArrowDesc (12) --node[above, align=center]{c} (12target);
\backgroundExampleTAArrowDesc (13) --node[above, align=center]{c} (13target);
\end{tikzpicture}
}
\caption{Zone Graph}
\label{subfig:background:zonegraph:example-alternating-seq:zg}
\end{subfigure}
\caption{TA with exponentially large zone graph}
\label{fig:background:zonegraph:example-alternating-seq}
\end{figure}
\begin{figure}
\centering
\scalebox{\backgroundExampleAutomataScalingFactor}{
\begin{tikzpicture}
\tikzstyle{every node}=[font=\backgroundTikzFontSize]
\tikzstyle{symstate} = [draw,rectangle,minimum width=3.5cm,minimum height=2.6cm,inner sep=5pt,thick]
%
\node[symstate, align=center, initial left, initial text=] (0) {$l_0$\\$x=y=z=0$};
\node[symstate, align=center, right = 1.5cm of 0] (1) {$l_0$\\$x=y=z$};
%
\node[symstate, align=center, right = 1.5cm of 1] (4) {$l_1$\\$x = z\leq 1$\\$y=0$};
\node[symstate, align=center, right = 1.5cm of 4] (5) {$l_1$\\$x = z$\\$0 \leq x-y \leq 1$};
%
\node[symstate, align=center, right = 1.5cm of 5] (12) {$l_2$\\$x \leq 3$, $y \leq 2$\\$z=0$\\$0 \leq x-y \leq 1$};
\node[symstate, align=center, right = 1.5cm of 12] (13) {$l_2$\\$0 \leq x-y \leq 1$\\$0 \leq y-z \leq 2$};
%
\node[symstate, align=center, right = 1.5cm of 13] (13target) {$...$};
%
%
%
%
%
\backgroundExampleTAArrowDesc (0) --node[above, align=center]{$\varepsilon$} (1);
\backgroundExampleTAArrowDesc (1) --node[above, align=center]{a} (4);
%
\backgroundExampleTAArrowDesc (4) --node[above, align=center]{$\varepsilon$} (5);
\backgroundExampleTAArrowDesc (5) --node[above, align=center]{b} (12);
%
\backgroundExampleTAArrowDesc (12) --node[above, align=center]{$\varepsilon$} (13);
\backgroundExampleTAArrowDesc (13) --node[above, align=center]{c} (13target);
\end{tikzpicture}
}
\caption{Zone Graph with alternating sequences}
\label{subfig:background:zonegraph:example-alternating-seq:short-zg}
\end{figure}
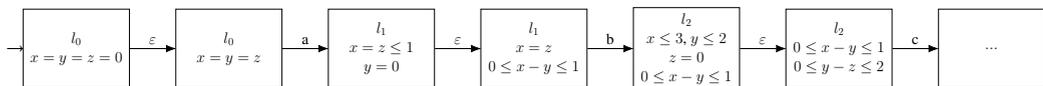
Starting from the initial symbolic state, we can see in Figure \ref{subfig:background:zonegraph:example-alternating-seq:zg} that the zone graph has an exponential size. While $l_0$ appears only twice in the zone graph, $l_1$ appears four times and $l_2$ eight times. The reader may notice that this problem becomes even worse when regions are used.
\end{example}

Yi et al. \cite{Yi1995} address this problem by considering only sequences that alternate between transitions labeled with actions from $\Sigma$ and transitions labeled with $\varepsilon$. For TLTS, this is allowed since two action transitions can be interleaved by a zero-delay transition and two delay transitions can be combined to the sum of them. Therefore, every reachable state in a TLTS can also be reached by an alternating sequence. This implies that it is feasible to only use alternating sequences in zone graphs (i.e., we only allow an action transition after an $\varepsilon$-transition and an $\varepsilon$-transition after an action transition). Figure \ref{subfig:background:zonegraph:example-alternating-seq:short-zg} shows how this avoids the exponential growth of zone graphs.

Unfortunately, any notion of equivalence based on zone graphs
is too coarse-grained to provide a sound basis
for timed bisimilarity checking as we illustrate by the following example.

\begin{example}
\label{ex:background:zone-graphs-do-not-contain-sufficient-information}
\begin{figure}
\centering
\scalebox{\backgroundExampleAutomataScalingFactor}{
\begin{tikzpicture}
\tikzstyle{every node}=[font=\backgroundTikzFontSize]
\tikzstyle{symstate} = [draw,rectangle,minimum width=3.5cm,inner sep=5pt,thick]
%
\node[symstate, align=center, initial above, initial text=] (0) {$l_0$\\$x=0$};
\node[symstate, align=center, below = \backgroundExampleZGVertDistance of 0] (1) {$l_0$\\$x<\infty$};
%
\node[symstate, align=center, right = 1.5cm of 1] (2) {$l_1$\\$x=0$};
\node[symstate, align=center, above = \backgroundExampleZGVertDistance of 2] (3) {$l_1$\\$x<2$};
\node[symstate, align=center, right = 1.5cm of 2] (3inter) {$l_2$\\$x=0$};
\node[symstate, align=center, right = 1.5cm of 3inter] (5) {$l_2$\\$x<\infty$};
\node[symstate, align=center, above =\backgroundExampleZGVertDistance of 5] (4) {$l_2$\\$x<2$};
%
%
%
%
\backgroundExampleTAArrowDesc (0) --node[left, align=center]{$\varepsilon$} (1);
\backgroundExampleTAArrowDesc (0) --node[above, align=center]{a} (2);
\backgroundExampleTAArrowDesc (1) --node[above, align=center]{a} (2);
\backgroundExampleTAArrowDesc (1) to[loop left] node[right, align=center]{$\varepsilon$} (1);
\backgroundExampleTAArrowDesc (2) --node[left, align=center]{$\varepsilon$} (3);
\backgroundExampleTAArrowDesc (2) --node[above, align=center]{b} (3inter);
\backgroundExampleTAArrowDesc (3) to[loop above] node[below, align=center]{$\varepsilon$} (3);
\backgroundExampleTAArrowDesc (3) --node[above, align=center]{b} (4);
\backgroundExampleTAArrowDesc (3inter) --node[above, align=center]{$\varepsilon$} (5);
\backgroundExampleTAArrowDesc (4) --node[left, align=center]{$\varepsilon$} (5);
\backgroundExampleTAArrowDesc (5) to[loop right] node[left, align=center]{$\varepsilon$} (5);
\draw[-] (5) -- (15.1, -3.5);
\draw[-] (15.1, -3.5) -- (-4.5, -3.5);
\draw[-] (-4.5, -3.5) -- (-4.5, 0) node[anchor=east, midway]{c};
\backgroundExampleTAArrowDesc (-4.5, 0) -- (0);
\end{tikzpicture}
}
\caption{Zone Graph of $A_1$ (Figure \ref{fig:exampleA1}) and $A_2$ (Figure \ref{fig:exampleA2})}
\label{fig:background:zone-graphs:zone-graphs-do-not-contain-sufficient-information}
\end{figure}
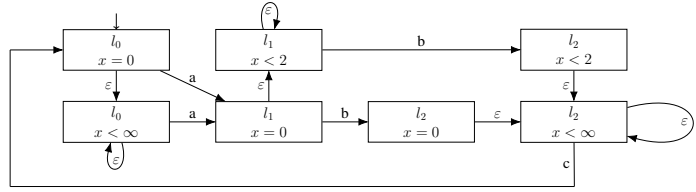
The attempt to check timed bisimilarity with the help of zone graphs basically has two main problems. The first problem concerns the case in which the set of clocks are not equal. How should the elapsed time be measured and compared between both? The second problem, illustrated by this example, is the non-observability of clock resets. To illustrate the non-observability, we consider TA $A_1$ and $A_2$ of Figure \ref{fig:examples}. Both have exactly the same zone graph, shown in Figure \ref{fig:background:zone-graphs:zone-graphs-do-not-contain-sufficient-information}. As we have already seen in Example \ref{ex:background:timed-bisim}, $A_1$ and $A_2$ are not timed bisimilar, due to the state $(l_0, x=1)$. Therefore, there are two non-bisimilar TA with the same zone graph and no algorithm that is based solely on zone graphs is able to return $\text{true}$ if and only if the corresponding TA are timed bisimilar.

A region-based approach avoids this problem, as it splits the pair of symbolic states $((l_0, x < \infty), (l_0, x < \infty))$ (the first from $A_1$, the second from $A_2$) into pairs of regions, such as $((l_0, x=0), (l_0, x=0))$, $((l_0, 0 < x < 1), (l_0, 0 < x < 1))$, $((l_0, x = 1), (l_0, x = 1))$, and so on. For some of these pairs, the second component has an outgoing transition labeled \texttt{a} while the first does not, making the difference observable. This precision comes at the cost of a potentially large number of regions.
\end{example}

\section{Virtual Clocks}
\label{chap:virtual_clocks:main}
\begin{figure}[tp]
\centering
\scalebox{0.7}{
\begin{tikzpicture}
\node (00) [rectangle, align=center] at (0, 0) {$A_{\textit{ZG}}$};
\node (03) [rectangle, align=center] at (0, -2.5) {$B_{\textit{ZG}}$};
\node (035) [rectangle, align=center] at (0, -3) {(finite)};
\draw[<->] (00) -- (03) node[midway,above,sloped] {bisimilar};
\node (0mr) [rectangle] at (0.5, -1.25) {};
\node (2ml) [rectangle] at (2, -1.25) {};
\draw[<->] (0mr) -- (2ml) node[midway, above] {\cancel{iff}};
\node (20) [rectangle, align=center] at (2, 0) {$A_{\textit{TLTS}}$};
\node (23) [rectangle, align=center] at (2, -2.5) {$B_{\textit{TLTS}}$};
\node (235) [rectangle, align=center] at (2, -3) {(infinite)};
\draw[<->] (20) -- (23) node[midway,above,sloped] {bisimilar};
\node (2mr) [rectangle] at (2.5, -1.25) {};
\node (5ml) [rectangle] at (4.5, -1.25) {};
\draw[<->] (2mr) -- (5ml) node[midway, above] {iff};
\node (50) [rectangle, align=center] at (4.5, 0) {$A$};
\node (53) [rectangle, align=center] at (4.5, -2.5) {$B$};
\draw[<->] (50) -- (53) node[midway,above,sloped] {bisimilar};
\node (5mr) [rectangle] at (5, -1.25) {};
\node (7ml) [rectangle] at (7, -1.25) {};
\draw[<->] (5mr) -- (7ml) node[midway, above] {iff};
\node (70) [rectangle, align=center] at (7, 0) {$A_{\textit{TSVC}}$};
\node (73) [rectangle, align=center] at (7, -2.5) {$B_{\textit{TSVC}}$};
\node (735) [rectangle, align=center] at (7, -3) {(infinite)};
\draw[<->] (70) -- (73) node[midway,above,sloped] {bisimilar};
\node (5mr) [rectangle] at (7.5, -1.25) {};
\node (7ml) [rectangle] at (9.5, -1.25) {};
\draw[<->] (5mr) -- (7ml) node[midway, above] {iff};
\node (95) [rectangle, align=center] at (9.5, 0) {$A_{\textit{VCG}}$};
\node (98) [rectangle, align=center] at (9.5, -2.5) {$B_{\textit{VCG}}$};
\node (985) [rectangle, align=center] at (9.5, -3) {(finite)};
\draw[<->] (95) -- (98) node[midway,above,sloped] {bisimilar};
\end{tikzpicture}
}
\caption{Road map for chapter \ref{chap:virtual_clocks:main}}
\label{figure:virtual_clocks:roadmap-theoretical-contribution}
\end{figure}
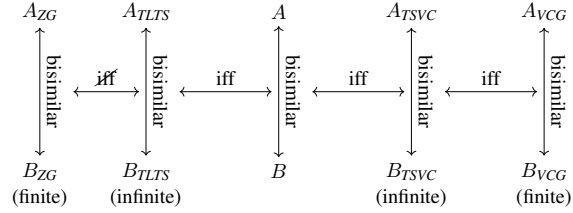


In this section, we propose an extension of zone graphs which
enables sound and complete timed bisimilarity checking of TA.
The road map towards this goal is shown in Figure \ref{figure:virtual_clocks:roadmap-theoretical-contribution}. 
Two TA, $A$ and $B$, are timed bisimilar if and only if their TLTS, $\TLTS{A}$ and $\TLTS{B}$, 
are bisimilar.
However, as TLTS have an infinite number of states 
(countably many states in the discrete time model, uncountably many states in the dense-time model), TLTS
do not permit effective timed bisimilarity checking.
Although region graphs contain enough information to check for bisimulation, their large size makes them unusable in practice.
In contrast, the corresponding zone graphs $\ZG{A}$ and $\ZG{B}$ of $A$ and $B$
can be pruned to be finite using $k$-normalization and are, on average, much smaller than region graphs. Nevertheless, zone graphs
do not contain enough information to ensure correctness of the bisimilarity 
check.

To solve both problems described in Example~\ref{ex:background:zone-graphs-do-not-contain-sufficient-information}, we introduce fresh clocks (one for each original clock)
that do not occur in the surface syntax of the corresponding TA (i.e., neither in guards and resets, nor in invariants).
We refer to these as \emph{virtual} clocks. The TLTS constructed over the enlarged set of clocks is called TSVC and the corresponding zone graph is called VCG. Importantly, this operation does not introduce a new kind of transition structure, but enriches the information captured by zones. Based on virtual clocks, we are able to introduce virtual bisimulation, an equivalence notion over VCG that holds if and only if the original TA are timed bisimilar.
\subsection{Bounded Timed Bisimulation}

In this subsection, we first consider
an alternative way to define timed bisimulation
analogously to Milner in the context of CCS~\cite{MilnerCCS}.
Two TLTS states are \emph{bounded timed bisimilar} 
if both behave similar up to the next $n$ steps. 
This definition eases many proof steps for the subsequent 
constructions as it enables inductive proof steps
with explicit mentioning of $n$.
\begin{definition}[Bounded Timed Bisimulation for TLTS]
\label{def:virtual-clocks:bounded-timed-bisim-for-TLTS}
Let $\TLTS{A} = (\TLTSAllStates{A}, \TLTSState{0, A}, \Sigma \cup \TimeDomain, \TLTSTrans{})$ and $\TLTS{B} = (\TLTSAllStates{B}, \TLTSState{0, B}, \Sigma \cup \TimeDomain, \TLTSTrans{})$ be two TLTS. 
Any state $\TLTSState{A}$ of $\TLTS{A}$ is timed bisimilar in order zero to any state $\TLTSState{B}$ of $\TLTS{B}$, 
denoted $\TLTSState{A} \sim_{0} \TLTSState{B}$.
$\TLTSState{A}$ is timed bisimilar in order $(n+1)$ (with $n \in \mathbb{N}^{\geq 0}$) to $\TLTSState{B}$, denoted $\TLTSState{A} \sim_{n+1} \TLTSState{B}$, if and only if it holds that
\begin{enumerate}
\item the existence of a transition $\TLTSState{A} \TLTSTrans{\mu} \TLTSState{1, A}$ for $\mu \in \Sigma \cup \TimeDomain$ implies the existence of a transition $\TLTSState{B} \TLTSTrans{\mu} \TLTSState{1, B}$, with $\TLTSState{1, A} \sim_n \TLTSState{1, B}$, and
\item the existence of a transition $\TLTSState{B} \TLTSTrans{\mu} \TLTSState{1, B}$ for $\mu \in \Sigma \cup \TimeDomain$ implies the existence of a transition $\TLTSState{A} \TLTSTrans{\mu} \TLTSState{1, A}$, with $\TLTSState{1, A} \sim_n \TLTSState{1, B}$.
\end{enumerate}
\end{definition}

$\TLTS{A}$ is \textit{timed bisimilar in order $n$} to $\TLTS{B}$, written $\TLTS{A} \sim_n \TLTS{B}$, if and only if $\TLTSState{0, A} \sim_n \TLTSState{0, B}$ holds. A TA is \textit{timed bisimilar in order $n$} to another TA, if this holds for their respective TLTS. 
Monotonicity states that $\TLTSBisimulation{\TLTSState{A}}{\TLTSState{B}}{n+1}$ implies $\TLTSBisimulation{\TLTSState{A}}{\TLTSState{B}}{n}$.
\begin{proposition}[Monotonicity of Bounded Bisimulation]
\label{prop:virtual-clocks:bisimInOrderPlusImpliesBisimInOrder}
For any $n \in \mathbb{N}^{\geq 0}$, $\TLTSBisimulation{\TLTSState{A}}{\TLTSState{B}}{n+1}$ implies $\TLTSBisimulation{\TLTSState{A}}{\TLTSState{B}}{n}$.
\end{proposition}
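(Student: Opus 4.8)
The statement is the standard monotonicity fact for the stratified (bounded) approximants of bisimilarity, so the natural approach is induction on $n$. First I would handle the base case $n = 0$: by Definition~\ref{def:virtual-clocks:bounded-timed-bisim-for-TLTS}, every pair of states satisfies $\TLTSState{A} \sim_0 \TLTSState{B}$ unconditionally, so $\TLTSBisimulation{\TLTSState{A}}{\TLTSState{B}}{1}$ trivially implies $\TLTSBisimulation{\TLTSState{A}}{\TLTSState{B}}{0}$. For the inductive step, assume the claim holds for $n$, i.e.\ for all pairs of TLTS states $\TLTSState{A}'$, $\TLTSState{B}'$ we have that $\TLTSState{A}' \sim_{n+1} \TLTSState{B}'$ implies $\TLTSState{A}' \sim_{n} \TLTSState{B}'$.

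Now suppose $\TLTSBisimulation{\TLTSState{A}}{\TLTSState{B}}{n+2}$; I want to show $\TLTSBisimulation{\TLTSState{A}}{\TLTSState{B}}{n+1}$. Unfolding the definition of $\sim_{n+2}$, for every transition $\TLTSState{A} \TLTSTrans{\mu} \TLTSState{1, A}$ there is a matching transition $\TLTSState{B} \TLTSTrans{\mu} \TLTSState{1, B}$ with $\TLTSState{1, A} \sim_{n+1} \TLTSState{1, B}$, and symmetrically for transitions from $\TLTSState{B}$. Applying the induction hypothesis to each such pair of successor states gives $\TLTSState{1, A} \sim_{n} \TLTSState{1, B}$. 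Hence both clauses in the definition of $\sim_{n+1}$ are satisfied for the pair $(\TLTSState{A}, \TLTSState{B})$, which is exactly $\TLTSBisimulation{\TLTSState{A}}{\TLTSState{B}}{n+1}$. This closes the induction, and the statement for $\TLTS{A} \sim_n \TLTS{B}$ follows immediately by instantiating at the initial states $\TLTSState{0, A}$, $\TLTSState{0, B}$.

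There is essentially no hard part here — the only thing to be careful about is getting the index bookkeeping right, since the inductive step compares $\sim_{n+2}$ against $\sim_{n+1}$ and invokes the hypothesis at level $n+1 \Rightarrow n$ on the successor states, rather than trying to reason about $\sim_{n+1}$ versus $\sim_n$ directly at the current states. Phrasing the induction hypothesis as a statement quantified over \emph{all} pairs of states (not just the fixed pair $\TLTSState{A}, \TLTSState{B}$) is what makes the successor-state application legitimate. The analogous argument works verbatim for bounded timed simulation $\sqsubseteq_n$, using only clause~1 of the corresponding definition.
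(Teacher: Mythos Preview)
Your proposal is correct and follows essentially the same approach as the paper: induction on $n$, with the base case discharged because $\sim_0$ holds universally, and the inductive step unfolding $\sim_{n+2}$ to obtain successor pairs at level $n+1$, then applying the hypothesis to drop to level $n$ and refolding into $\sim_{n+1}$. The paper in fact spells out the argument for the simulation variant $\sqsubseteq_n$ and remarks that the bisimulation case is analogous, which matches your closing observation.
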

\begin{proof}
Proof by induction in $n$.
\end{proof}
Bounded timed bisimulation converges against timed bisimulation 
for increasing $n$ and ultimately gets equivalent to timed bisimulation in case of $n \rightarrow \infty$. 

\begin{proposition}[Bounded Bisimulation and Bisimulation]
\label{prop:virtual-clocks:BoundedBisimVsUnboundedBisim}
There exists a timed bisimulation $R$ with $(\TLTSState{A}, \TLTSState{B}) \in R$, if and only if $\forall n \in \mathbb{N}^{\geq 0} : \TLTSBisimulation{\TLTSState{A}}{\TLTSState{B}}{n}$ holds.
\end{proposition}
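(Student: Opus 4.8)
The plan is to establish the two implications of the biconditional separately. The forward implication (a timed bisimulation containing $(\TLTSState{A}, \TLTSState{B})$ yields $\TLTSState{A} \sim_n \TLTSState{B}$ for all $n$) is a routine induction. The backward implication is the interesting one: it is the timed analogue of Milner's characterization of bisimilarity as the limit of its finite approximants, and in general this equality fails for transition systems that are not image-finite. Since a TLTS is highly infinitely branching (uncountably many delay transitions per state in the dense-time model), the argument has to be set up so that only a \emph{relevant} form of image-finiteness is needed.

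For the forward direction, I would assume $R$ is a timed bisimulation with $(\TLTSState{A}, \TLTSState{B}) \in R$ and prove, by induction on $n$, the statement ``for all $(s,t) \in R$, $s \sim_n t$''. The base case $n=0$ is immediate from Definition~\ref{def:virtual-clocks:bounded-timed-bisim-for-TLTS}. For the inductive step, take $(s,t) \in R$ and a transition $s \TLTSTrans{\mu} s'$; the bisimulation property of $R$ gives a matching $t \TLTSTrans{\mu} t'$ with $(s',t') \in R$, hence $s' \sim_n t'$ by the induction hypothesis, and the symmetric clause is identical, so $s \sim_{n+1} t$. Applying this at $(\TLTSState{A},\TLTSState{B})$ gives $\TLTSState{A} \sim_n \TLTSState{B}$ for all $n$.

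For the backward direction, I would assume $\TLTSState{A} \sim_n \TLTSState{B}$ for every $n \in \mathbb{N}^{\geq 0}$, define the relation $R := \{\, (s,t) \mid s \text{ a state of } \TLTS{A},\ t \text{ a state of } \TLTS{B},\ \forall n\colon s \sim_n t \,\}$ (so $R = \bigcap_{n} \sim_n$), and show it is a timed bisimulation. It contains $(\TLTSState{A},\TLTSState{B})$ by hypothesis, so only the transfer property needs checking. Let $(s,t) \in R$ and $s \TLTSTrans{\mu} s'$ (the symmetric case being identical). For every $n$, from $s \sim_{n+1} t$ I get a target $t'_n$ with $t \TLTSTrans{\mu} t'_n$ and $s' \sim_n t'_n$; the task is to pin down a single $t'$ with $t \TLTSTrans{\mu} t'$ and $s' \sim_n t'$ for \emph{all} $n$ at once. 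Here I split on $\mu$: if $\mu = d \in \TimeDomain$ is a delay, the $d$-successor of $t$ is unique (it is $(l_t, u_t + d)$ by the TLTS delay rule), so all $t'_n$ coincide and this common state already works; if $\mu = \sigma \in \Sigma$ is an action, every $\sigma$-successor of $t$ arises from one of the finitely many switches of $B$ labeled $\sigma$ and is determined by the source valuation, so $\{\, t' \mid t \TLTSTrans{\sigma} t' \,\}$ is finite. By pigeonhole some $t^\star$ in this finite set equals $t'_n$ for infinitely many $n$, and then Proposition~\ref{prop:virtual-clocks:bisimInOrderPlusImpliesBisimInOrder} (monotonicity) upgrades ``$s' \sim_m t^\star$ for arbitrarily large $m$'' to ``$s' \sim_n t^\star$ for all $n$'', i.e.\ $(s',t^\star) \in R$. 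This proves $R$ is a timed bisimulation.

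The step I expect to be the main obstacle is exactly this extraction of a uniform matching move: the plain identity $\bigcap_n \sim_n = {\sim}$ is false for general infinitely branching systems, so the proof stands or falls on the observation that TLTS branching is tame in the only way that matters — delay transitions are deterministic, and, because the switch relation $E$ is finite, action transitions have only finitely many successors. Once this is in place, the pigeonhole/monotonicity combination closes the argument and the rest is bookkeeping. (If one prefers to avoid the case split on $\mu$, the delay case folds into the action case by the same remark, since $\{\, t' \mid t \TLTSTrans{\mu} t' \,\}$ is finite for every fixed $\mu$.)
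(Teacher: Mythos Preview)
Your proposal is correct and takes essentially the same approach as the paper. The paper (Appendix~\ref{section:appendix:RelationshipBoundedBisimAndBiSimOfTLTS}) also defines $\bigcap_n R_n$ and shows it is a (bi)simulation, and its crucial step is exactly your image-finiteness observation, stated there as ``since $\TLTSState{B}$ and $\mu$ are fixed, there are always finitely many candidates for $\TLTSState{1,B}$''; your explicit case split on delay versus action transitions makes this point more transparent than the paper does, and you avoid the unnecessary detour through the $R_n$ relations and the Knaster--Tarski reference.
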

\begin{proof}
We define the relation $R_n = \{(\TLTSState{A}, \TLTSState{B}) \ | \  \TLTSBisimulation{\TLTSState{A}}{\TLTSState{B}}{n}\}$. 
It is well-known that the power set $\mathcal{P}(\TLTSAllStates{A} \times \TLTSAllStates{B})$ with the relation $\subseteq$ is a complete lattice. Moreover, there is a function $f: \mathcal{P}(\TLTSAllStates{A} \times \TLTSAllStates{B}) \rightarrow \mathcal{P}(\TLTSAllStates{A} \times \TLTSAllStates{B})$ with $f(R_0) = R_1$, $f(R_1) = R_2$, ... This function is monotonic due to Proposition~\ref{prop:virtual-clocks:bisimInOrderPlusImpliesBisimInOrder}. Due to the Knaster-Tarski theorem~\cite{tarski1955lattice}, there exists a greatest fixed point which is the intersection of all $R_n$ and which we name $\lim_{n \to \infty} R_n$. Since $\lim_{n \to \infty} R_n$ is the intersection of all $R_n$, $(\TLTSState{A}, \TLTSState{B}) \in \lim_{n \to \infty} R_n$ if and only if $\forall n \in \mathbb{N}^{\geq 0} : \TLTSBisimulation{\TLTSState{A}}{\TLTSState{B}}{n}$ and we only have to show that $(\TLTSState{A}, \TLTSState{B}) \in \lim_{n \to \infty} R_n$ fulfills the conditions of Definition~\ref{def:background:sim-and-bisim:Strong-Timed-Bisimulation}. $(\TLTSState{A}, \TLTSState{B}) \in \lim_{n \to \infty} R_n$ implies $\forall n \in \mathbb{N}^{\geq 0} : \TLTSBisimulation{\TLTSState{A}}{\TLTSState{B}}{n}$. If there is a transition $\TLTSState{A} \TLTSTrans{\mu} \TLTSState{1, A}$, then there is a transition $\TLTSState{B} \TLTSTrans{\mu} \TLTSState{1, B}$ with $\forall n \in \mathbb{N}^{> 0} : \TLTSBisimulation{\TLTSState{1, A}}{\TLTSState{1, B}}{n-1}$, which implies $(\TLTSState{1, A}, \TLTSState{1, B}) \in \lim_{n \to \infty} R_n$ and analogously for the second condition.
\end{proof}
Moreover, since both propositions include the case in which $\TLTSState{A} = \TLTSState{0, A}$, respectively $\TLTSState{B} = \TLTSState{0, B}$, 
the proposition also holds for $\TLTS{A}$ and $\TLTS{B}$ itself.
\subsection{Timed Labeled Transition Systems with Virtual Clocks}
\label{subsection:virtual-clocks:TSVCs}

To prove correctness of the following constructions, we extend,
as a purely theoretical base-line, the TLTS semantics
by virtual clocks. Two TSVC are bisimilar if and only if their corresponding TLTS are bisimilar.

%
%
We supplement $\TLTS{A}$ and $\TLTS{B}$ 
by a \emph{common} set of fresh clock variables $\chi$. We call these clocks virtual because they are only used for checking timed bisimilarity but never occur in any clock constraint of the TA.
As we will show, it suffices to add exactly one \emph{virtual} clock $\chi_i$
for each original clock $c_i\in C_A \cup C_B$.
%
%
To ensure freshness, we assume that $\forall i \in \mathbb{N}^{\geq 0} : \chi_i \not\in C$ 
for any set of clocks $C$ used in a TA under consideration which is
easily achievable by applying alpha conversion of the original clock names.
\begin{definition}[TA with Virtual Clocks]
\label{def:virtual_clocks:Extend-TA}
Given two TA $A = (L_A, l_{0, A}, \Sigma, C_A, I_A, E_A)$ and $B = (L_B, l_{0, B}, \Sigma, C_B, I_B, E_B)$, we define 
the TA \emph{$A$ regarding $B$} as $\text{add-virtual}(A, B) = (L_A, l_{0, A}, \Sigma, C_A \cup \{\chi_0, ..., \chi_{|C_A| + |C_B| - 1}\}, I_A, E_A)$.
\end{definition}

The TA $\text{add-virtual}(A, B)$ is a proper TA according to Definition~\ref{def:background:Timed-Automata:Timed-Automaton}
which equals TA $A$ except for the newly added set of virtual clocks (one for each clock in $C_{A}$ and in $C_{B}$).
These newly added clocks, however, never occur in any clock constraint or clock reset
of the TA $\text{add-virtual}(A, B)$.
%
%
As the newly added clocks behave like proper clocks, 
the TLTS semantics of TA $\text{add-virtual}(A, B)$ with virtual clocks (TSVC) is defined 
according to Definition~\ref{def:background:TLTS}.
%
\begin{definition}[TSVC]
\label{def:virtal_clocks:TSVC}
Given two TA $A$ and $B$ defined over $C_A$ and $C_B$, respectively, 
we define the \emph{TSVC} \emph{of} $A$ \emph{regarding} $B$ 
to be the TLTS of $\text{add-virtual}(A, B)$. 
Given a clock valuation 
$u : C_A \cup \{\chi_0, ..., \chi_{|C_A| + |C_B| - 1}\} \rightarrow \TimeDomain$, we
use the following notations:
\begin{align*}
\text{orig} : & (C_A \cup \{\chi_0, ..., \chi_{|C_A| + |C_B| - 1}\} \rightarrow \TimeDomain) \rightarrow (C_A \rightarrow \TimeDomain), \\
\text{virt} : & (C_A \cup \{\chi_0, ..., \chi_{|C_A| + |C_B| - 1}\} \rightarrow \TimeDomain) \rightarrow (\{\chi_0, ..., \chi_{|C_A| + |C_B| - 1}\} \rightarrow \TimeDomain), \\
\text{virt-A} : & (C_A \cup \{\chi_0, ..., \chi_{|C_A| + |C_B| - 1}\} \rightarrow \TimeDomain) \rightarrow (\{\chi_0, ..., \chi_{|C_A| - 1}\} \rightarrow \TimeDomain), \text{ and } \\
\text{virt-B} : & (C_A \cup \{\chi_0, ..., \chi_{|C_A| + |C_B| - 1}\} \rightarrow \TimeDomain) \rightarrow (\{\chi_{|C_A|}, ..., \chi_{|C_A| + |C_B| - 1}\} \rightarrow \TimeDomain),
\end{align*}
such that $\forall c \in C_A : \text{orig}(u)(c) = u(c)$, $\forall c \in \{\chi_0, ..., \chi_{|C_A| + |C_B| - 1}\}: \text{virt}(u)(c) = u(c)$, $\forall c \in \{\chi_0, ..., \chi_{|C_A| - 1}\} : \text{virt-A}(u)(c) = u(c)$, and $\forall c \in \{\chi_{|C_A|}, ..., \chi_{|C_A| + |C_B| - 1}\} : \text{virt-B}(u)(c) = u(c)$.
\end{definition}

Given a clock valuation $u$, $\text{orig}(u)$ restricts $u$ to the valuation of
the original clocks from TA $A$, whereas $\text{virt}$, $\text{virt-A}$, $\text{virt-B}$
restricts $u$ to all virtual clocks, to those related to clocks of $A$ and to those related to clocks
of $B$, respectively.
%
We presume some canonical ordering on a set of clocks $C$ and 
refer to the $j$th clock in $C$ by $C[j]$, where $j$ ranges from 0 to $|C|-1$.

While the virtual clocks solve the first main problem, namely that $C_A$ and $C_B$ are disjoint, we can also use them to solve the second main problem, namely that resets can be masked by clock constraints. To solve this issue, we define two main classes of TSVC states. In \textit{\synchronized} states, all virtual clock values are equal to the corresponding original clock values. \textit{Semi-\synchronized} states are former \synchronized \ states, in which a (possibly empty) set of original clocks but not their related virtual clocks has been reset.
\begin{definition}[Classes of TSVC States]
\label{def:virtual-clocks:TSVC:Proper-and-syncd-States}
Let $\TSVC{A}$ be the TSVC of TA $A$ regarding $B$
and $\TSVC{B}$ be the TSVC of TA $B$ regarding $A$.
\begin{itemize}
	\item State $\TSVCFullState{A}{}{}$ of $\TSVC{A}$ is \textit{AB-\semisynchronized}, if and only if
\begin{equation*}
\forall i \in [0, |C_A| - 1] : (\ClockValuation{A}(C_A[i]) = \ClockValuation{A}(\chi_i) \lor \ClockValuation{A}(C_A[i]) = 0)\text{.}
\end{equation*}
	\item State $\TSVCFullState{A}{}{}$ of $\TSVC{A}$ is \textit{AB-\synchronized}, if and only if 
\begin{equation*}
\forall i \in [0, |C_A| - 1] : \ClockValuation{A}(C_A[i]) = \ClockValuation{A}(\chi_i)\text{.}
\end{equation*}
	\item State $\TSVCFullState{B}{}{}$ of $\TSVC{B}$ is \textit{AB-\semisynchronized}, if and only if
\begin{equation*}
\forall i \in [0, |C_B| - 1] :( \ClockValuation{B}(C_B[i]) = \ClockValuation{B}(\chi_{i + |C_A|}) \lor \ClockValuation{B}(C_B[i]) = 0)\text{.}
\end{equation*}
	\item State $\TSVCFullState{B}{}{}$ of $\TSVC{B}$ is \textit{AB-\synchronized}, if and only if 
\begin{equation*}
\forall i \in [0, |C_B| - 1] : \ClockValuation{B}(C_B[i]) = \ClockValuation{B}(\chi_{i + |C_A|})\text{.}
\end{equation*}

\end{itemize}
\end{definition}

Thus, if a state is AB-\synchronized \ it is also AB-\semisynchronized{} but not vice-versa.
Let $(\TSVCFullState{A}{}, \TSVCFullState{B}{})$ be a pair of states, where $\TSVCFullState{A}{}$ is an AB-\synchronized \ state of $\TSVC{A}$ 
and $\TSVCFullState{B}{}$ is an AB-\synchronized \ state of $\TSVC{B}$.
Then from $\text{virt}(\ClockValuation{A}) = \text{virt}(\ClockValuation{B})$ it follows that $\forall i \in [0, |C_A| - 1] : \ClockValuation{A}(C_A[i]) = \ClockValuation{A}(\chi_i) = \ClockValuation{B}(\chi_i)$
%
%
and, analogously, $\forall i \in [0, |C_B| - 1] : \ClockValuation{B}(C_B[i]) = \ClockValuation{B}(\chi_{i + |C_A|}) = \ClockValuation{A}(\chi_{i + |C_A|})$.
%
%
In other words, given a pair of AB-\synchronized \ states with equal virtual clock values, the first state
contains the current values of the original clocks of the second state in the shared set of virtual clocks and vice versa.
%
In the following proposition, we show that any transition 
from an AB-\synchronized \ state in a TSVC leads to an AB-\semisynchronized \ state.

\begin{proposition}[Outgoing Transitions of AB-\synchronized \ States]
\label{prop:virtual_clocks:TSVC:Outgoing-transitions-of-syncd-states}
Let $\TSVC{A}$ be the TSVC of TA $A$ regarding $B$, $\TSVC{B}$ be the TSVC of TA $B$ regarding $A$,
$\TSVCState{A}$ be an AB-\synchronized \ state of $\TSVC{A}$, and $\TSVCState{B}$ be an AB-\synchronized \ state of $\TSVC{B}$.
\begin{itemize}
    \item If $\TSVCState{A} \TSVCTrans{d} \TSVCState{d, A}$ with $d \in \TimeDomain$, then $\TSVCState{d, A}$ is AB-\synchronized.
	\item If $\TSVCState{A} \TSVCTrans{\sigma} \TSVCState{\sigma, A}$ with $\sigma \in \Sigma$, then  $\TSVCState{\sigma, A}$ is AB-\semisynchronized.
    \item If $\TSVCState{B} \TSVCTrans{d} \TSVCState{d, B}$ with $d \in \TimeDomain$, then $\TSVCState{d, B}$ is AB-\synchronized.
	\item If $\TSVCState{B} \TSVCTrans{\sigma} \TSVCState{\sigma, B}$ with $\sigma \in \Sigma$, then $\TSVCState{\sigma, B}$ is AB-\semisynchronized.
\end{itemize}
\end{proposition}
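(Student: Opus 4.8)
The plan is to unfold the two transition rules of the TLTS semantics (Definition~\ref{def:background:TLTS}) applied to $\text{add-virtual}(A,B)$ and $\text{add-virtual}(B,A)$, relying on the single structural fact that these automata retain the switch relations $E_A$ and $E_B$ of $A$ and $B$. Consequently every reset set $R$ appearing on a switch satisfies $R \subseteq C_A$ (respectively $R \subseteq C_B$), so no virtual clock $\chi_j$ is ever reset and the virtual clocks behave like ordinary clocks that only ever advance in lockstep with everything else.

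For the delay transitions, I would take an AB-\synchronized{} state $\TSVCState{A}$ with clock valuation $\ClockValuation{A}$ and a step $\TSVCState{A} \TSVCTrans{d} \TSVCState{d, A}$ for $d \in \TimeDomain$, so that the valuation of $\TSVCState{d, A}$ is $\ClockValuation{A} + d$. For every $i \in [0, |C_A| - 1]$ one computes $(\ClockValuation{A}+d)(C_A[i]) = \ClockValuation{A}(C_A[i]) + d = \ClockValuation{A}(\chi_i) + d = (\ClockValuation{A}+d)(\chi_i)$, where the middle equality holds because $\TSVCState{A}$ is AB-\synchronized; hence $\TSVCState{d, A}$ is AB-\synchronized. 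The same computation with $i$ replaced by $i + |C_A|$ and $C_A$ by $C_B$ settles the corresponding statements for $\TSVC{B}$.

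For the action transitions, I would take an AB-\synchronized{} state $\TSVCState{A}$ and a step $\TSVCState{A} \TSVCTrans{\sigma} \TSVCState{\sigma, A}$ arising from a switch $\TATrans{1}{g}{\sigma}{R}{2} \in E_A$, so the valuation of $\TSVCState{\sigma, A}$ is $[R \rightarrow 0]\ClockValuation{A}$ with $R \subseteq C_A$. Fixing $i \in [0, |C_A| - 1]$, there are two cases. If $C_A[i] \in R$ then $([R \rightarrow 0]\ClockValuation{A})(C_A[i]) = 0$, so the second disjunct in the definition of AB-\semisynchronized{} holds for this $i$. If $C_A[i] \notin R$ then, since $\chi_i \notin R$ as well, neither clock is touched by the reset, so $([R \rightarrow 0]\ClockValuation{A})(C_A[i]) = \ClockValuation{A}(C_A[i]) = \ClockValuation{A}(\chi_i) = ([R \rightarrow 0]\ClockValuation{A})(\chi_i)$, which is the first disjunct. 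In either case $\TSVCState{\sigma, A}$ is AB-\semisynchronized, and the $\TSVC{B}$ case follows symmetrically under the shift $i \mapsto i + |C_A|$.

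The whole argument is essentially bookkeeping, so I do not expect a genuine obstacle; the only points that need care are the indexing convention that pairs $C_B[i]$ with $\chi_{i + |C_A|}$ and the observation -- read off from Definition~\ref{def:virtual_clocks:Extend-TA} -- that virtual clocks never occur in any reset set, so that after an action step each original clock either still carries the value of its virtual counterpart or has been set to zero, while the virtual clocks themselves are untouched.
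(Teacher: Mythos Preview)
Your proof is correct and follows essentially the same approach as the paper's: both arguments unfold the delay and action transition rules, use that $R \subseteq C_A$ (respectively $R \subseteq C_B$) so virtual clocks are never reset, and then perform the identical case split on whether $C_A[i] \in R$ for the action case. The paper likewise treats only the $A$-side explicitly and declares the $B$-side analogous.
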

\begin{proof}
  Given any original clock $c$ with corresponding virtual clock $\chi$ and synchronized state $\TSVCState{}$. For any delay transition $\TSVCState{} \TSVCTrans{d} \TSVCState{d}$ the statement $\ClockValuation{d}(c) = \ClockValuation{}(c) + d = \ClockValuation{}(\chi) + d = \ClockValuation{d}(\chi)$ holds by Definition~\ref{def:background:TLTS} and Definition~\ref{def:virtal_clocks:TSVC}. Therefore, $\TSVCState{d}$ is AB-\synchronized. For any action transition $\TSVCState{} \TSVCTrans{\sigma} \TSVCState{\sigma}$ there exists a corresponding transition $\TATrans{}{g}{\sigma}{R}{\sigma}$ by Definition~\ref{def:background:TLTS} and Definition~\ref{def:virtal_clocks:TSVC}. If $c \in R$, then $\ClockValuation{\sigma}(c) = 0$, otherwise $\ClockValuation{\sigma}(c) = \ClockValuation{}(c) = \ClockValuation{}(\chi) = \ClockValuation{\sigma}(\chi)$. Therefore, $\TSVCState{\sigma}$ is AB-\semisynchronized.
\end{proof}

Given an AB-\semisynchronized \ state, we can obtain an AB-\synchronized{} state by resetting the virtual clocks that correspond to the resetted original clocks. We call this procedure \textit{synchronization}.

\begin{definition}[Synchronization of AB-\semisynchronized \ States]
\label{def:virtual_clocks:TSVC:sync-for-TSVC}
Let $\TSVC{A}$ be the TSVC of TA $A$ regarding $B$ with set of states $\TSVCAllStates{A}$, $\TSVC{B}$ be the TSVC of TA $B$ regarding $A$ with set of states $\TSVCAllStates{B}$,
$\TSVCFullState{A}{}$ be an AB-\semisynchronized \ state of $\TSVC{A}$ and $\TSVCFullState{B}{}$ be an AB-\semisynchronized \ state of $\TSVC{B}$
such that $\text{virt}(\ClockValuation{A}) = \text{virt}(\ClockValuation{B})$ holds.
We define the $\syncfunction$ function for states
\begin{equation*}
\syncfunction : \TSVCAllStates{A} \times \TSVCAllStates{B} \rightarrow \TSVCAllStates{A} \times \TSVCAllStates{B}
\end{equation*}
such that
\begin{equation*}
\syncfunction(\TSVCFullState{A}{}, \TSVCFullState{B}{}) = (\TSVCFullState{A}{e, A}, \TSVCFullState{B}{e, B})
\end{equation*}
with
\begin{alignat*}{2}
& \ClockValuation{e, A} : C_A \cup \{\chi_0, ..., \chi_{|C_A| + |C_B| - 1}\} \rightarrow \TimeDomain && \ClockValuation{e, B} : C_B \cup \{\chi_0, ..., \chi_{|C_A| + |C_B| - 1}\} \rightarrow \TimeDomain \\
& \text{orig}(\ClockValuation{e, A}) = \text{orig}(\ClockValuation{A}), && \text{orig}(\ClockValuation{e, B}) = \text{orig}(\ClockValuation{B})\\
& \text{virt-A}(\ClockValuation{e, A})(\chi_i) = \left\{
\begin{array}{ll}
\ClockValuation{A}(\chi_i), \\
\ \ \text{ if } \ClockValuation{A}(C_A[i]) \neq 0, \\
0, \text{ else, }
\end{array}
\right. 
&& \text{virt-A}(\ClockValuation{e, B})(\chi_i) = \left\{
\begin{array}{ll}
\ClockValuation{B}(\chi_i), \\
\ \ \text{ if } \ClockValuation{A}(C_A[i]) \neq 0, \\
0, \text{ else, }
\end{array}
\right. \\
& \text{virt-B}(\ClockValuation{e, A})(\chi_{|C_A| + i}) = \left\{
\begin{array}{ll}
\ClockValuation{A}(\chi_{|C_A| + i}), \\
\ \ \text{ if } \ClockValuation{B}(C_B[i]) \neq 0, \\
0, \text{ else, }
\end{array}
\right. 
&& \text{virt-B}(\ClockValuation{e, B})(\chi_{|C_A| + i}) = \left\{
\begin{array}{ll}
\ClockValuation{B}(\chi_{|C_A| + i}), \\
\ \ \text{ if } \ClockValuation{B}(C_B[i]) \neq 0, \\
0, \text{ else. }
\end{array}
\right.
\end{alignat*}

\end{definition}

The following proposition shows that the $\syncfunction$ function converts AB-\semisynchronized \ states into AB-\synchronized \ states as expected.

\begin{proposition}[Properties of $\syncfunction$ for States]
\label{prop:virtual_clocks:TSVC:Properties-of-sync-for-states}
Let $\TSVC{A}$ be the TSVC of TA $A$ regarding $B$, $\TSVC{B}$ be the TSVC of TA $B$ regarding $A$,
let $\TSVCFullState{A}{}$ be an AB-\semisynchronized \ state of $\TSVC{A}$ and $\TSVCFullState{B}{}$ be an AB-\semisynchronized \ state of $\TSVC{B}$ such that $\text{virt}(\ClockValuation{A}) = \text{virt}(\ClockValuation{B})$ holds.
Then, it holds that $\syncfunction(\TSVCFullState{A}{}, \TSVCFullState{B}{}) = (\TSVCFullState{A}{e, A}, \TSVCFullState{B}{e, B})$
is a pair of AB-\synchronized \ states with $\TSVCFullState{A}{e, A}$ being a state of $\TSVC{A}$, $\TSVCFullState{B}{e, B}$ being a state of $\TSVC{B}$, and $\text{virt}(\ClockValuation{e, A}) = \text{virt}(\ClockValuation{e, B})$.
\end{proposition}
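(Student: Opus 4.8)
The plan is to verify the three assertions of the proposition — that $\TSVCFullState{A}{e, A}$ and $\TSVCFullState{B}{e, B}$ are states of $\TSVC{A}$ and $\TSVC{B}$, respectively, that both are AB-\synchronized, and that $\text{virt}(\ClockValuation{e, A}) = \text{virt}(\ClockValuation{e, B})$ — by directly unfolding Definition~\ref{def:virtual_clocks:TSVC:sync-for-TSVC} and, whenever a clock index occurs, splitting cases on whether the associated original clock is currently zero.

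First I would dispatch the membership claim. The function $\syncfunction$ does not change the location and keeps $\text{orig}(\ClockValuation{e, A}) = \text{orig}(\ClockValuation{A})$ and $\text{orig}(\ClockValuation{e, B}) = \text{orig}(\ClockValuation{B})$; it only overwrites the virtual-clock components, producing well-defined valuations over $C_A \cup \{\chi_0, \dots, \chi_{|C_A| + |C_B| - 1}\}$ resp. $C_B \cup \{\chi_0, \dots, \chi_{|C_A| + |C_B| - 1}\}$. Hence $\TSVCFullState{A}{e, A} \in \TSVCAllStates{A}$ and $\TSVCFullState{B}{e, B} \in \TSVCAllStates{B}$; and since the virtual clocks occur in no invariant of $\text{add-virtual}(A, B)$ resp. $\text{add-virtual}(B, A)$ by Definition~\ref{def:virtual_clocks:Extend-TA}, $\ClockValuation{e, A}$ satisfies precisely the invariants that $\ClockValuation{A}$ does (and likewise on the $B$-side).

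Next I would show $\TSVCFullState{A}{e, A}$ is AB-\synchronized, i.e. $\ClockValuation{e, A}(C_A[i]) = \ClockValuation{e, A}(\chi_i)$ for every $i \in [0, |C_A| - 1]$. Here $\ClockValuation{e, A}(C_A[i]) = \ClockValuation{A}(C_A[i])$ and $\ClockValuation{e, A}(\chi_i) = \text{virt-A}(\ClockValuation{e, A})(\chi_i)$. If $\ClockValuation{A}(C_A[i]) = 0$, then $\text{virt-A}(\ClockValuation{e, A})(\chi_i) = 0$ and both sides are $0$. If $\ClockValuation{A}(C_A[i]) \neq 0$, then $\text{virt-A}(\ClockValuation{e, A})(\chi_i) = \ClockValuation{A}(\chi_i)$, and since $\TSVCFullState{A}{}$ is AB-\semisynchronized (Definition~\ref{def:virtual-clocks:TSVC:Proper-and-syncd-States}) and the disjunct $\ClockValuation{A}(C_A[i]) = 0$ fails, the other disjunct forces $\ClockValuation{A}(C_A[i]) = \ClockValuation{A}(\chi_i)$; so again the two sides agree. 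The argument for $\TSVCFullState{B}{e, B}$ is verbatim the same after shifting the virtual-clock indices by $|C_A|$ and replacing $\text{virt-A}, C_A$ by $\text{virt-B}, C_B$.

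Finally I would check $\text{virt}(\ClockValuation{e, A}) = \text{virt}(\ClockValuation{e, B})$ pointwise, treating $\chi_j$ for $j \in [0, |C_A| - 1]$ and $\chi_{|C_A| + i}$ for $i \in [0, |C_B| - 1]$ separately. The one observation that actually needs attention — and the closest thing to an obstacle here — is that in Definition~\ref{def:virtual_clocks:TSVC:sync-for-TSVC} the condition deciding whether a given virtual clock is reset is \emph{identical} on the $A$-side and the $B$-side: it is $\ClockValuation{A}(C_A[j]) \neq 0$ for both $\text{virt-A}(\ClockValuation{e, A})$ and $\text{virt-A}(\ClockValuation{e, B})$, and $\ClockValuation{B}(C_B[i]) \neq 0$ for both $\text{virt-B}(\ClockValuation{e, A})$ and $\text{virt-B}(\ClockValuation{e, B})$. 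Thus exactly the same virtual clocks become $0$ in $\ClockValuation{e, A}$ and in $\ClockValuation{e, B}$, while the remaining ones keep their old values, which by the hypothesis $\text{virt}(\ClockValuation{A}) = \text{virt}(\ClockValuation{B})$ already coincide in $\ClockValuation{A}$ and $\ClockValuation{B}$. Hence $\text{virt}(\ClockValuation{e, A}) = \text{virt}(\ClockValuation{e, B})$, which together with the previous steps gives the claim. Everything else is bookkeeping over the index blocks; the asymmetric way the reset conditions are phrased is exactly what makes the two virtual valuations match.
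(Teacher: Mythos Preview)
Your proof is correct and follows essentially the same approach as the paper's own proof in Appendix~\ref{section:appendix:properties-of-sync-for-states}: membership is dispatched quickly, AB-\synchronized ness is shown by case-splitting on whether the relevant original clock is zero and invoking the AB-\semisynchronized\ assumption, and $\text{virt}(\ClockValuation{e,A}) = \text{virt}(\ClockValuation{e,B})$ follows from the observation that the reset conditions for each virtual clock are identical on both sides together with $\text{virt}(\ClockValuation{A}) = \text{virt}(\ClockValuation{B})$. Your phrasing of this last point (that the same virtual clocks are reset in both outputs) is slightly more conceptual than the paper's explicit case-by-case chain, but the underlying argument is the same.
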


\begin{proof}
We first show that $\TSVCFullState{A}{e, A}$ and $\TSVCFullState{B}{e, B}$ are AB-\synchronized. Since we assumed $\TSVCFullState{A}{}$ to be AB-\semisynchronized, we know for any $i \in [0, |C_A| - 1]$ either $\ClockValuation{A}(C_A[i]) = \ClockValuation{A}(\chi_i)$ or $\ClockValuation{A}(C_A[i]) = 0$. We need to show that for any $i \in [0, |C_A| - 1]$ the statement $\ClockValuation{e, A}(C_A[i]) = \ClockValuation{e, A}(\chi_i)$ holds. If $\ClockValuation{A}(C_A[i]) \neq 0$, the $\syncfunction$ does not change the value of the corresponding virtual clock and, therefore, $\ClockValuation{e, A}(C_A[i]) = \ClockValuation{A}(C_A[i]) = \ClockValuation{A}(\chi_i) = \ClockValuation{e, A}(\chi_i)$ holds. If $\ClockValuation{A}(C_A[i]) = 0$, then $\ClockValuation{e, A}(C_A[i]) = 0$ and $\ClockValuation{e, A}(\chi_i) = 0$ hold by Definition~\ref{def:virtual_clocks:TSVC:sync-for-TSVC}. Therefore, $\TSVCFullState{A}{e, A}$ is AB-synchronized and the same can be shown for $\TSVCFullState{B}{e, B}$.

We now show $\text{virt}(\ClockValuation{e, A}) = \text{virt}(\ClockValuation{e, B})$. For any $i \in [0, |C_A| - 1]$, either $\ClockValuation{A}(C_A[i]) = 0$, which implies $\ClockValuation{e, A}(\chi_i) = 0 = \ClockValuation{e, B}(\chi_i)$ or $\ClockValuation{A}(C_A[i]) \neq 0$, which implies $\ClockValuation{e, A}(\chi_i) = \ClockValuation{A}(\chi_i) = \ClockValuation{B}(\chi_i) = \ClockValuation{e, B}(\chi_i)$ by $\text{virt}(\ClockValuation{A}) = \text{virt}(\ClockValuation{B})$. Analogously for any $\chi_{i + |C_A|}$ with $i \in [0, |C_B| - 1]$.
\end{proof}

If the inputs are already AB-\synchronized, then the $\syncfunction$ returns them unchanged. Since AB-\synchronized \ states already encode the unique virtual clock valuation consistent with their original clocks, no value is changed.
%
%
%
%
%
%
%
The following example shows the impact of the $\syncfunction$ function.
\begin{example}
\label{ex:virtual_clocks:TSVCs:impact-of-sync-function}
\newcommand{\VirtualClockTSVCExTSVCDistance}{6}
\newcommand{\VirtualClockTSVCTikzFontSize}{\Large}
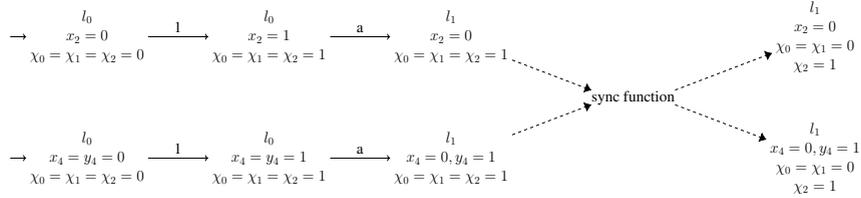
\begin{figure}
\centering
\scalebox{0.4}{
\begin{tikzpicture}
\tikzstyle{every node}=[font=\VirtualClockTSVCTikzFontSize]
\node (l00) [rectangle, initial, initial text = {}, align=center] at (0, 1.4) {$l_0$\\$x_2 = 0$\\$\chi_0 = \chi_1 = \chi_2 = 0$};
\node (l01) [rectangle, align=center] at (\VirtualClockTSVCExTSVCDistance, 1.4) {$l_0$\\$x_2 = 1$\\$\chi_0 = \chi_1 = \chi_2 = 1$};
\node (l02) [rectangle, align=center] at (2*\VirtualClockTSVCExTSVCDistance, 1.4) {$l_1$\\$x_2 = 0$\\$\chi_0 = \chi_1 = \chi_2 = 1$};
\node (l03) [rectangle, align=center] at (4*\VirtualClockTSVCExTSVCDistance, 1.4) {$l_1$\\$x_2 = 0$\\$\chi_0 = \chi_1 = 0 \land \chi_2 = 1$};
\node (l10) [rectangle, initial, initial text = {}, align=center] at (0, -1.4) {$l_0$\\$x_4 = y_4 = 0$\\$\chi_0 = \chi_1 = \chi_2 = 0$};
\node (l11) [rectangle, align=center] at (\VirtualClockTSVCExTSVCDistance, -1.4) {$l_0$\\$x_4 = y_4 = 1$\\$\chi_0 = \chi_1 = \chi_2 = 1$};
\node (l12) [rectangle, align=center] at (2*\VirtualClockTSVCExTSVCDistance, -1.4) {$l_1$\\$x_4 = 0, y_4 = 1$\\$\chi_0 = \chi_1 = \chi_2 = 1$};
\node (l13) [rectangle, align=center] at (4*\VirtualClockTSVCExTSVCDistance, -1.4) {$l_1$\\$x_4 = 0, y_4 = 1$\\$\chi_0 = \chi_1 = 0 \land \chi_2 = 1$};
\node (syncnode) [rectangle, thick, align=center, inner sep=0pt, minimum size=12pt] at (3*\VirtualClockTSVCExTSVCDistance, 0) {$\syncfunction$ function};
\draw[->] (l00) -- (l01) node[anchor=south, midway]{1};
\draw[->] (l01) -- (l02) node[anchor=south, midway]{a};
\draw[dashed, -{Latex[length=2mm,width=3mm]}] (l02) -- (syncnode.north west);
\draw[dashed, -{Latex[length=2mm,width=3mm]}] (syncnode.north east) -- (l03);
\draw[->] (l10) -- (l11) node[anchor=south, midway]{1};
\draw[->] (l11) -- (l12) node[anchor=south, midway]{a};
\draw[dashed, -{Latex[length=2mm,width=3mm]}] (l12) -- (syncnode.south west);
\draw[dashed, -{Latex[length=2mm,width=3mm]}] (syncnode.south east) -- (l13);
\end{tikzpicture}
}
\caption{Extracts of the TSVC of $A_2$ regarding $A_4$ and $A_4$ regarding $A_2$}
\label{figure:virtual-clocks:TSVC:example:sync-a2-reg-a4}
\end{figure}

Figure \ref{figure:virtual-clocks:TSVC:example:sync-a2-reg-a4} shows (small) extracts from the TSVC of $A_2$ regarding $A_4$ and the TSVC $A_4$ regarding $A_2$ (Figure \ref{fig:examples}). Since we require the set of clocks to be disjoint, we rename the original clocks to $x_2$, $x_4$, and $y_4$. Since there are three original clocks, we add three virtual clocks, $\chi_0$, $\chi_1$, and $\chi_2$. Initially, all clocks are set to zero, which implies that both initial states are $A_2A_4$-\synchronized \ and $\text{virt}(\ClockValuation{0, A_2}) = \text{virt}(\ClockValuation{0, A_4})$ holds. After a delay transition, all clock values are increased and the previously described properties still hold. After the action transition, $x_2$ and $x_4$ are reset. While the virtual parts of the clock valuations are still equivalent, due to $\chi_0 = 1 \neq 0 = x_2$ and $\chi_1 = 1 \neq 0 = x_4$ the states are not $A_2A_4$-\synchronized \ but only $A_2A_4$-\semisynchronized. Finally, the $\syncfunction$ function is used to reset the corresponding virtual clocks and the result is $A_2A_4$-\synchronized, again.
\end{example}

Since any invariant, guard, or reset of a TA refers to an original clock, but the $\syncfunction$ changes only virtual clocks, the $\syncfunction$ function does not have any impact on the overall structure of the TLTS, but changes carried information only. Therefore, it is intuitively clear that we can define timed bisimulation equivalent to Definition \ref{def:background:sim-and-bisim:Strong-Timed-Bisimulation} but using TSVC instead of TLTS.

\begin{definition}[Virtual Bisimulation using TSVC]
\label{def:virtual-clocks:BoundedTimedBiSimulationForTSVCs}
\newcommand{\TSVCBiSimulateTransitionCommonPart}[5] {%
the existence of a transition $\TSVCFullState{#4}{#1} \TSVCTrans{\mu} \TSVCFullState{\mu, #4}{\mu, #1}$ 
implies the existence of a transition $\TSVCFullState{#5}{#2} \TSVCTrans{\mu} \TSVCFullState{\mu, #5}{\mu, #2}$ with $#3{\TSVCFullState{\mu, #4}{\mu, #1}}{\TSVCFullState{\mu, #5}{\mu, #2}}{n}$%
}%
Assume two TA $A$, $B$ with TSVC $\TSVC{A}$ of $A$ regarding $B$ and TSVC $\TSVC{B}$ of $B$ regarding $A$ using the same alphabet $\Sigma$. Let $\TSVCFullState{A}{}$ be an AB-\semisynchronized \ state of $\TSVC{A}$ and let $\TSVCFullState{B}{}$ be an AB-\semisynchronized \ state of $\TSVC{B}$.

We define $\TSVCState{A}$ to be virtually bisimilar in order $n=0$ to $\TSVCState{B}$, denoted  $\TSVCBisimulation{\TSVCState{A}}{\TSVCState{B}}{0}$, if and only if $\text{virt}(\ClockValuation{A}) = \text{virt}(\ClockValuation{B})$ holds.

We define $\TSVCState{A}$ to be virtually bisimilar in order n+1 (with $n \in \mathbb{N}^{\geq 0}$) to $\TSVCState{B}$, denoted $\TSVCBisimulation{\TSVCState{A}}{\TSVCState{B}}{n+1}$, if and only if $\text{virt}(\ClockValuation{A}) = \text{virt}(\ClockValuation{B})$ and for the AB-\synchronized \ states
\begin{equation*}
(\TSVCFullState{A}{e, A}, \TSVCFullState{A}{e, B}) = \syncfunction(\TSVCState{A}, \TSVCState{B})
\end{equation*}
\TSVCBiSimulateTransitionCommonPart{e, A}{e, B}{\TSVCBisimulation}{A}{B} and \TSVCBiSimulateTransitionCommonPart{e, B}{e, A}{\TSVCBisimulation}{B}{A}.

$\TSVC{A}$ is virtually bisimilar in order n to $\TSVC{B}$, denoted $\TSVCBisimulation{\TSVC{A}}{\TSVC{B}}{n}$, if and only if $\TSVCBisimulation{\TSVCState{0, A}}{\TSVCState{0, B}}{n}$ holds. $\TSVC{A}$ is virtually bisimilar to $\TSVC{B}$, denoted $\TSVCBisimulation{\TSVC{A}}{\TSVC{B}}{}$, if and only if $\forall n \in \mathbb{N}^{\geq 0} : \TSVCBisimulation{\TSVC{A}}{\TSVC{B}}{n}$ holds.
\end{definition}

The following theorem shows that it is sound to use Definition \ref{def:virtual-clocks:BoundedTimedBiSimulationForTSVCs} instead of Definition \ref{def:virtual-clocks:bounded-timed-bisim-for-TLTS} when checking for timed bisimilarity of two TA.
\begin{theorem}[Virtual Bisimulation is equivalent to Timed Bisimulation]
\label{theorem:virtual-clocks:TSVCSimIffTASim}
Let $\TSVC{A}$ be the TSVC of TA $A$ regarding $B$, $\TSVC{B}$ be the TSVC of TA $B$ regarding $A$. $A \sim_n B$ holds if and only if $\TSVCBisimulation{\TSVC{A}}{\TSVC{B}}{n}$ holds. $A \sim B$ is equivalent to $\TSVCBisimulation{\TSVC{A}}{\TSVC{B}}{}$.
\end{theorem}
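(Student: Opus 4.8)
The plan is to prove the stronger statement $A \sim_n B \Leftrightarrow \TSVCBisimulation{\TSVC{A}}{\TSVC{B}}{n}$ by induction on $n$, and then derive the unbounded claim $A \sim B \Leftrightarrow \TSVCBisimulation{\TSVC{A}}{\TSVC{B}}{}$ via Proposition~\ref{prop:virtual-clocks:BoundedBisimVsUnboundedBisim} (bounded bisimilarity converges to bisimilarity) together with the analogous convergence built into Definition~\ref{def:virtual-clocks:BoundedTimedBiSimulationForTSVCs}. Recall from the excerpt that $A \sim_n B$ means $\TLTSBisimulation{\TLTSState{0,A}}{\TLTSState{0,B}}{n}$, and that $\text{add-virtual}(A,B)$ differs from $A$ only by fresh clocks never mentioned in any guard, invariant, or reset. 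The first observation I would record is therefore a \emph{structural lemma}: the projection $(l,u) \mapsto (l, \text{orig}(u))$ is a lockstep correspondence between the reachable states of $\TSVC{A}$ and those of $\TLTS{A}$ — every delay or action transition in one lifts to exactly one in the other with matching label, because transition enabledness in a TLTS depends only on the original clocks. Consequently $\TSVC{A}$ and $\TLTS{A}$ (and likewise $\TSVC{B}$, $\TLTS{B}$) are isomorphic as labeled transition systems once virtual-clock information is forgotten. This reduces everything to comparing the \emph{virtual bisimilarity condition} against plain bisimilarity on the common underlying transition structure.

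For the base case $n=0$: $\TLTSBisimulation{\TLTSState{A}}{\TLTSState{B}}{0}$ holds vacuously for all states by Definition~\ref{def:virtual-clocks:bounded-timed-bisim-for-TLTS}, and I must show $\TSVCBisimulation{\TSVCState{0,A}}{\TSVCState{0,B}}{0}$ holds, i.e. $\text{virt}(\ClockValuation{0,A}) = \text{virt}(\ClockValuation{0,B})$; this is immediate since both initial states have all clocks (virtual included) set to zero. For the inductive step I would reason about a pair of states reached along a common label sequence. The key technical point is that along \emph{synchronized} runs the virtual clocks of $\TSVC{A}$ track the original clocks of $\TSVC{A}$ exactly, and — crucially — if we start from a synchronized pair with $\text{virt}(\ClockValuation{A}) = \text{virt}(\ClockValuation{B})$, then by Proposition~\ref{prop:virtual_clocks:TSVC:Outgoing-transitions-of-syncd-states} delays preserve synchronization while actions produce semi-synchronized states, and Proposition~\ref{prop:virtual_clocks:TSVC:Properties-of-sync-for-states} shows $\syncfunction$ restores a synchronized pair that still satisfies $\text{virt}(\ClockValuation{e,A}) = \text{virt}(\ClockValuation{e,B})$ — provided the two original-clock valuations had \emph{the same reset pattern}, which is exactly what a matching action transition guarantees. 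So the virtual-clock equality is an invariant that is maintained precisely when the two automata make matching moves, and it \emph{fails} exactly when a move available on one side forces a clock configuration unreachable on the other. Unfolding Definition~\ref{def:virtual-clocks:BoundedTimedBiSimulationForTSVCs}, the order-$(n+1)$ virtual bisimulation condition then says: the synchronized representatives have matching outgoing transitions whose targets are order-$n$ virtually bisimilar. By the structural lemma these outgoing transitions are in bijective label-preserving correspondence with those of the underlying TLTS states, and by the induction hypothesis order-$n$ virtual bisimilarity of the (semi-synchronized) targets coincides with order-$n$ bisimilarity of the corresponding TLTS states — giving the claimed equivalence.

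The main obstacle, and the step requiring genuine care, is the handling of \emph{resets under a matching action} in the inductive step. When $\TSVCState{e,A} \TSVCTrans{\sigma} \TSVCState{\sigma,A}$ and $\TSVCState{e,B} \TSVCTrans{\sigma} \TSVCState{\sigma,B}$, the two sides generally reset \emph{different} sets of original clocks (and different numbers of them — this is the whole motivation, cf. $A_{\text{TLTS},2}$ vs.\ $A_{\text{TLTS},3}$), so one must verify that after applying $\syncfunction$ the virtual-clock valuations $\text{virt}(\ClockValuation{\sigma,A})$ and $\text{virt}(\ClockValuation{\sigma,B})$ agree — not componentwise by original clock, but as the shared $|C_A|+|C_B|$-vector. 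The point is that $\syncfunction$ resets $\chi_i$ (for $i < |C_A|$) in \emph{both} states iff $A$ reset $C_A[i]$, and resets $\chi_{|C_A|+i}$ in both states iff $B$ reset $C_B[i]$; since "$A$ reset $C_A[i]$" and "$B$ reset $C_B[i]$" are facts about the chosen switches, not about which automaton we are in, the same set of virtual clocks is zeroed on both sides, and the non-zeroed ones were equal by hypothesis — so equality is preserved. I would state this as a separate lemma about $\syncfunction$ and matching transitions, prove it by direct computation from Definition~\ref{def:virtual_clocks:TSVC:sync-for-TSVC}, and then the inductive step becomes a routine unfolding. The converse direction (virtual bisimilarity $\Rightarrow$ TLTS bisimilarity) is symmetric once one observes that a transition \emph{not} matchable on one side in the TLTS is equally unmatchable in the TSVC, so the failure of one order-$(n+1)$ condition transfers; no additional machinery is needed.
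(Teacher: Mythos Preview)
Your approach matches the paper's: induction on $n$, exploiting that virtual clocks never appear in guards, invariants, or resets so that TSVC and TLTS transitions correspond one-to-one under the projection $u \mapsto \text{orig}(u)$. Two points deserve sharpening.

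First, the induction hypothesis must be stated for arbitrary state pairs, not just the initial ones. Proving only the statement $A \sim_n B \Leftrightarrow \TSVC{A} \sim_n \TSVC{B}$ by induction on $n$ does not go through, because at the inductive step the targets of matching transitions are not initial states, so the hypothesis as stated does not apply to them. The paper carries the following hypothesis through the induction: for any AB-semi-synchronized TSVC states $(l_A, u_{A,\mathrm{TV}})$, $(l_B, u_{B,\mathrm{TV}})$ with $\text{virt}(u_{A,\mathrm{TV}}) = \text{virt}(u_{B,\mathrm{TV}})$, and TLTS states $(l_A, u_{A,\mathrm{TS}})$, $(l_B, u_{B,\mathrm{TS}})$ with $u_{A,\mathrm{TS}} = \text{orig}(u_{A,\mathrm{TV}})$ and $u_{B,\mathrm{TS}} = \text{orig}(u_{B,\mathrm{TV}})$, order-$n$ TLTS bisimilarity holds iff order-$n$ TSVC virtual bisimilarity holds. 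Your phrase ``a pair of states reached along a common label sequence'' hints at this, but it must be the explicit hypothesis --- your declared ``stronger statement'' is not in fact stronger than the theorem.

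Second, you over-complicate the reset step. What the inductive step requires is that the \emph{targets} $(l_{\sigma,A}, u_{\sigma,A})$, $(l_{\sigma,B}, u_{\sigma,B})$ of matching action transitions from the synchronized pair satisfy $\text{virt}(u_{\sigma,A}) = \text{virt}(u_{\sigma,B})$; this is the order-$0$ clause of the order-$n$ condition on the targets, and in Definition~\ref{def:virtual-clocks:BoundedTimedBiSimulationForTSVCs} it is checked \emph{before} the next application of $\syncfunction$. Since $R_A \subseteq C_A$ and $R_B \subseteq C_B$ contain no virtual clocks, one has directly $\text{virt}([R_A \to 0]\, u_{e,A}) = \text{virt}(u_{e,A}) = \text{virt}(u_{e,B}) = \text{virt}([R_B \to 0]\, u_{e,B})$ --- the virtual part is simply untouched by the transition. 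No case analysis on which individual $\chi_i$ get zeroed is needed; the subsequent behaviour of $\syncfunction$ on the targets is already packaged in Proposition~\ref{prop:virtual_clocks:TSVC:Properties-of-sync-for-states} and need not be re-argued.
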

\begin{proof}
If $A \sim_n B$ holds if and only if $\TSVCBisimulation{\TSVC{A}}{\TSVC{B}}{n}$ holds, the second part of the theorem is trivially implied. Therefore, we prove only this statement by showing two hypothesis regarding the relationship between states of the TLTS and states of the TSVC. We show both hypothesis by induction in $n$.

\noindent \textbf{Hypothesis 1:} Given $\TLTSFullState{A}{A, \text{TS}}$ of $\TLTS{A}$ and $\TLTSFullState{B}{B, \text{TS}}$ of $\TLTS{B}$, such that $\TLTSFullState{A}{A, \text{TS}} \sim_n \TLTSFullState{B}{B, \text{TS}}$, then for any two AB-\semisynchronized \ states of the corresponding TSVC, $\TSVCFullState{A}{A, \text{TV}}$ of $\TSVC{A}$ and $\TSVCFullState{B}{B, \text{TV}}$ of $\TSVC{B}$ with $\ClockValuation{A, \text{TS}} = \text{orig}(\ClockValuation{A, \text{TV}})$, $\ClockValuation{B, \text{TS}} = \text{orig}(\ClockValuation{B, \text{TV}})$, and $\text{virt}(\ClockValuation{A, \text{TV}}) = \text{virt}(\ClockValuation{B, \text{TV}})$, the statement $\TSVCBisimulation{\TSVCFullState{A}{A, \text{TV}}}{\TSVCFullState{B}{B, \text{TV}}}{n}$ holds.

\noindent\textbf{Base Case:} If $n = 0$, then the statement trivially holds due to the fact that any two states with equivalent virtual clock valuations are virtually bisimilar in order $0$.

\noindent\textbf{Induction Step:} To show the induction step, we have to show the two conditions regarding the transitions of the synchronized states (see Definition~\ref{def:virtual-clocks:BoundedTimedBiSimulationForTSVCs}). Due to $\text{virt}(\ClockValuation{A, \text{TV}}) = \text{virt}(\ClockValuation{B, \text{TV}})$, we are allowed to apply the $\syncfunction$ and denote $(\TSVCFullState{A}{e, A}, \TSVCFullState{B}{e, B}) = \syncfunction(\TSVCFullState{A}{A, \text{TV}}, \TSVCFullState{B}{B, \text{TV}})$. To show the first condition, we assume a transition $\TSVCFullState{A}{e, A} \TSVCTrans{\mu} \TSVCFullState{\mu, A}{\mu, A}$ and show that the existence of a transition $\TSVCFullState{B}{e, B} \TSVCTrans{\mu} \TSVCFullState{\mu, B}{\mu, B}$ with $\TSVCBisimulation{\TSVCFullState{\mu, A}{\mu, A}}{\TSVCFullState{\mu, B}{\mu, B}}{n-1}$ is implied. 

If $\mu \in \TimeDomain$, $\ClockValuation{\mu, A} = \ClockValuation{e, A} + \mu$. By Definition~\ref{def:background:TLTS} and Definition~\ref{def:virtal_clocks:TSVC}, this implies $\ClockValuation{e, A} + \mu \models I(l_A)$. Since $I(l_A) \in \mathcal{B}(C_A)$and $\text{orig}(\ClockValuation{e, A}) = \text{orig}(\ClockValuation{A, \text{TV}}) = \ClockValuation{A, \text{TS}}$ by Definition~\ref{def:virtual_clocks:TSVC:sync-for-TSVC}, $\ClockValuation{A, \text{TS}} + \mu \models I(l_A)$. By Definition~\ref{def:background:TLTS}, this implies the existence of a transition $(l_A, \ClockValuation{A, \text{TS}}) \TLTSTrans{\mu} (l_A, \ClockValuation{A, \text{TS}} + \mu)$. Since $\TLTSFullState{A}{A, \text{TS}} \sim_n \TLTSFullState{B}{B, \text{TS}}$, this implies the existence of a transition $(l_B, \ClockValuation{B, \text{TS}}) \TLTSTrans{\mu} (l_B, \ClockValuation{B, \text{TS}} + \mu)$ with $(l_A, \ClockValuation{A, \text{TS}} + \mu) \sim_{n-1} (l_B, \ClockValuation{B, \text{TS}} + \mu)$. By Definition~\ref{def:background:TLTS}, this implies $\ClockValuation{B, \text{TS}} + \mu \models I(l_B)$ and due to $\text{orig}(\ClockValuation{e, B}) = \ClockValuation{B, \text{TS}}$ and $I(l_B) \in \mathcal{B}(C_B)$, this implies $\ClockValuation{e, B} + \mu \models I(l_B)$. Therefore, there is a transition $\TSVCFullState{B}{e, B} \TSVCTrans{\mu} \TSVCFullState{\mu, B}{\mu, B}$ and $\TSVCBisimulation{\TSVCFullState{\mu, A}{\mu, A}}{\TSVCFullState{\mu, B}{\mu, B}}{n-1}$ by the induction hypothesis.

If $\mu \in \Sigma$, there exists a transition $\TATrans{A}{g_A}{\mu}{R_A}{\mu, A}$ with $\ClockValuation{\mu, A, \text{TV}} = [R_A \rightarrow 0]\ClockValuation{e, A}$, $\ClockValuation{e, A} \models g_A$, and $[R_A \rightarrow 0]\ClockValuation{e, A} \models I(l_{\mu, A})$ by Definition \ref{def:virtal_clocks:TSVC} and Definition \ref{def:background:TLTS}. Since $\text{orig}(\ClockValuation{e, A}) = \ClockValuation{A, \text{TS}}$ and $g_A, I(l_{\mu, A}) \in \mathcal{B}(C_A)$, we also know $\ClockValuation{A, \text{TS}} \models g_A$, $[R_A \rightarrow 0]\ClockValuation{A, \text{TS}} \models I(l_{\mu, A})$ and, therefore, there exists a transition $(l_A, \ClockValuation{A, \text{TS}}) \TLTSTrans{\mu} (l_{\mu, A}, \ClockValuation{\mu, A, \text{TS}})$ by Definition \ref{def:background:TLTS}. Due to $(l_A, \ClockValuation{A, \text{TS}}) \sim_{n} (l_B, \ClockValuation{B, \text{TS}})$, there exists a transition $(l_B, \ClockValuation{B, \text{TS}}) \TLTSTrans{\mu} (l_{\mu, B}, \allowbreak \ClockValuation{\mu, B, \text{TS}})$ with $(l_{\mu, A}, \ClockValuation{\mu, A, \text{TS}}) \sim_{n-1} (l_{\mu, B}, \ClockValuation{\mu, B, \text{TS}})$ by Definition \ref{def:virtual-clocks:bounded-timed-bisim-for-TLTS}. This implies the existence of a transition $\TATrans{B}{g_B}{\mu}{R_B}{\mu, B}$. $\ClockValuation{B, \text{TS}} \models g_B$, $\ClockValuation{\mu, B, \text{TS}} \models I(l_{\mu, B})$, and $\ClockValuation{\mu, B, \text{TS}} = [R_B \rightarrow 0]\ClockValuation{B, \text{TS}}$ by Definition \ref{def:background:TLTS}. Due to $\ClockValuation{B, \text{TS}} = \text{orig}(\ClockValuation{e, B})$ and $g_B, I(l_B) \in \mathcal{B}(C_B)$, we know $\ClockValuation{e, B} \models g_B$ and $[R_B \rightarrow 0]\ClockValuation{e, B} \models I(l_B)$. Therefore, there exists a transition $\TSVCFullState{B}{e, B} \TSVCTrans{\mu} \TSVCFullState{\mu, B}{\mu, B, \text{TV}}$ by Definition \ref{def:virtal_clocks:TSVC} and $\TSVCBisimulation{\TSVCFullState{\mu, A}{\mu, A}}{\TSVCFullState{\mu, B}{\mu, B}}{n-1}$ holds by the induction hypothesis.

Since the second condition can be shown analogously, hypothesis 1 is true. Now we show the second hypothesis.

\noindent\textbf{Hypothesis 2:} Given the AB-\semisynchronized{} states $\TSVCFullState{A}{A, \text{TV}}$ of $\TSVC{A}$ and $\TSVCFullState{B}{B, \text{TV}}$ of $\TSVC{B}$, such that $\TSVCBisimulation{\TSVCFullState{A}{A, \text{TV}}}{\TSVCFullState{B}{B, \text{TV}}}{n}$, then for any states of the corresponding TLTS, $\TLTSFullState{A}{A, \text{TS}}$ of $\TLTS{A}$ and $\TLTSFullState{B}{B, \text{TS}}$ of $\TLTS{B}$, with $\ClockValuation{A, \text{TS}} = \text{orig}(\ClockValuation{A, \text{TV}})$ and $\ClockValuation{B, \text{TS}} = \text{orig}(\ClockValuation{B, \text{TV}})$, the statement $\TLTSFullState{A}{A, \text{TS}} \sim_n \TLTSFullState{B}{B, \text{TS}}$ holds.

\noindent\textbf{Base Case:} If $n = 0$, then the statement trivially holds due to the fact that any two states are bisimilar in order $0$.

\noindent\textbf{Induction Step:} To show the induction step, we have to show the two conditions of Definition~\ref{def:virtual-clocks:bounded-timed-bisim-for-TLTS}. To show the first condition, we assume a transition $(l_A, \ClockValuation{A, \text{TS}}) \TLTSTrans{\mu} (l_{\mu, A}, \ClockValuation{\mu, A, \text{TS}})$ and show that the existence of a transition $(l_B, \ClockValuation{B, \text{TS}}) \TLTSTrans{\mu} (l_{\mu, B}, \ClockValuation{\mu, B, \text{TS}})$ with $\TLTSFullState{\mu, A}{\mu, A, \text{TS}} \sim_{n-1} \TLTSFullState{\mu, B}{\mu, B, \text{TS}}$ is implied. We denote $(\TSVCFullState{A}{e, A}, \TSVCFullState{B}{e, B}) = \text{sync}(\TSVCFullState{A}{A, \text{TV}}, \TSVCFullState{B}{B, \text{TV}})$ (remember that $\TSVCBisimulation{\TSVCFullState{A}{A, \text{TV}}}{\TSVCFullState{B}{B, \text{TV}}}{n}$ implies $\text{virt}( \allowbreak \ClockValuation{A, \text{TV}}) = \text{virt}(\ClockValuation{B, \text{TV}})$).

If $\mu \in \TimeDomain$, $\ClockValuation{\mu, A, \text{TS}} = \ClockValuation{A, \text{TS}} + \mu$, and $\ClockValuation{A, \text{TS}} + \mu \models I(l_A)$ by Definition \ref{def:background:TLTS}. Due to $\ClockValuation{A, \text{TS}} = \text{orig}(\ClockValuation{A, \text{TV}}) = \text{orig}(\ClockValuation{e, A})$, $\ClockValuation{e, A} + \mu \models I(l_A)$ also holds, which implies a transition $\TSVCFullState{A}{e, A} \TSVCTrans{\mu} (l_A, \ClockValuation{e, A} + \mu)$ by Definition \ref{def:virtal_clocks:TSVC}. Therefore, there exists a transition $\TSVCFullState{B}{e, B} \TSVCTrans{\mu} (l_B, \ClockValuation{e, B} + \mu)$ with $\TSVCBisimulation{(l_A, \ClockValuation{e, A} + \mu)}{(l_B, \ClockValuation{e, B} + \mu)}{n-1}$ by Definition \ref{def:virtual-clocks:BoundedTimedBiSimulationForTSVCs}. By Definition \ref{def:virtal_clocks:TSVC}, this implies $\ClockValuation{B, \text{TV}} + \mu \models I(l_B)$. Due to $\ClockValuation{B, \text{TS}} = \text{orig}( \allowbreak \ClockValuation{B, \text{TV}}) = \text{orig}(\ClockValuation{e, B})$, $\ClockValuation{B, \text{TS}} + \mu \models I(l_B)$ and, therefore, there exists a transition $(l_B, \ClockValuation{B, \text{TS}}) \allowbreak \TLTSTrans{\mu} (l_{B}, \ClockValuation{B, \text{TS}} + \mu)$ by Definition \ref{def:background:TLTS}. $(l_{A}, \ClockValuation{A, \text{TS}} + \mu) \sim_{n-1} (l_{B}, \ClockValuation{B, \text{TS}} + \mu)$ holds by the induction hypothesis.

If $\mu \in \Sigma$, there exists a transition $\TATrans{A}{g_A}{\mu}{R_A}{\mu, A}$ with $\ClockValuation{\mu, A, \text{TS}} = [R_A \rightarrow 0]\ClockValuation{A, \text{TS}}$, $\ClockValuation{A, \text{TS}} \models g_A$, and $\ClockValuation{\mu, A, \text{TS}} \models I(l_{\mu, A})$ by Definition \ref{def:background:TLTS}. Due to $\ClockValuation{A, \text{TS}} = \text{orig}(\ClockValuation{A, \text{TV}}) = \text{orig}(\ClockValuation{e, A})$, $\ClockValuation{e, A} \models g_A$ and $[R_A \rightarrow 0]\ClockValuation{e, A} \models I(l_A)$ also hold. Therefore, there exists a transition $\TSVCFullState{A}{e, A} \TSVCTrans{\mu} \TSVCFullState{\mu, A}{\mu, A, \text{TV}}$ with $\ClockValuation{\mu, A, \text{TV}} = [R_A \rightarrow 0]\ClockValuation{e, A}$ by Definition \ref{def:virtal_clocks:TSVC}, which implies the transition $\TSVCFullState{B}{e, B} \TSVCTrans{\mu} \TSVCFullState{\mu, B}{\mu, B, \text{TV}}$ with $\TSVCBisimulation{\TSVCFullState{\mu, A}{\mu, A, \text{TV}}}{\TSVCFullState{\mu, B}{\mu, B, \text{TV}}}{n-1}$ by Definition \ref{def:virtual-clocks:BoundedTimedBiSimulationForTSVCs}, which implies the existence of a transition $\TATrans{B}{g_B}{\mu}{R_B}{\mu, B}$, $\ClockValuation{\mu, B, \text{TV}} = [R_B \rightarrow 0]\ClockValuation{e, B}$, $\ClockValuation{e, B} \models g_B$, and $\ClockValuation{\mu, B, \text{TV}} \models I(l_B)$. Due to $\ClockValuation{B, \text{TS}} = \text{orig}(\ClockValuation{B, \text{TV}})$ and $g_B, I(l_B) \in \mathcal{B}(C_B)$, $\ClockValuation{B, \text{TS}} \models g_B$, $[R_B \rightarrow 0]\ClockValuation{B, \text{TS}} \models I(l_B)$ hold and, therefore, there exists a transition $(l_B, \ClockValuation{B, \text{TS}}) \TLTSTrans{\mu} (l_{\mu, B}, \ClockValuation{\mu, B, \text{TS}})$ with $\ClockValuation{\mu, B, \text{TS}} = [R_B \rightarrow 0]\ClockValuation{B, \text{TS}}$ by Definition \ref{def:background:TLTS}. $\TLTSFullState{\mu, A}{\mu, A, \text{TS}} \sim_{n-1} (l_{B}, \ClockValuation{B, \text{TS}} + \mu)$ holds by the induction hypothesis.

Since the second condition can be shown analogously, hypothesis 2 is also true. For the initial states of TSVC and TLTS, the previous conditions obviously hold since the values of all clocks are set to zero. Therefore, Theorem~\ref{theorem:virtual-clocks:TSVCSimIffTASim} is implied by hypothesis 1, hypothesis 2, Definition~\ref{def:virtual-clocks:bounded-timed-bisim-for-TLTS}, and Definition~\ref{def:virtual-clocks:BoundedTimedBiSimulationForTSVCs}.
\end{proof}

Analogously to TLTS, TSVC are also infinite and cannot be used to effectively check for timed bisimulation. However, the following example shows how checking for timed bisimulation using TSVC works in theory.
\begin{example}
We reuse the extracts of the TSVCs shown in Figure \ref{figure:virtual-clocks:TSVC:example:sync-a2-reg-a4}. Some states that appear in the figure actually have an infinite number of outgoing transitions. However, we only consider the transitions shown to illustrate the construction.

To check for bisimulation of the initial states, we first have to show that the used clock valuations have the same virtual part. This is the case since all virtual clocks are set to zero. Since both initial states are $A_2A_4$-\synchronized, applying the $\syncfunction$ function does not change the states. Therefore, we only have to check, whether for each of their outgoing transitions there exists an outgoing transition at the other state using the same label such that the target states are bisimilar. Since we only consider the transitions shown in the extract, we proceed by using the transitions labeled with 1 in both TSVC.

Once again, the virtual parts of the clock valuations are the same and since both states are $A_2A_4$-\synchronized, we proceed by using the transition labeled with $a$.

We first recognize that the virtual parts of the clock valuations are equal. However, the states are $A_2A_4$-\semisynchronized \ but not $A_2A_4$-\synchronized. Therefore, we have to check the outgoing transitions of the output of the $\syncfunction$ function. Since the extract shows none of them, and we only consider those transitions shown in the extract, we are done.
Of course, if we would consider all transitions of the TSVC, we would recognize that some of them would violate Definition \ref{def:virtual-clocks:BoundedTimedBiSimulationForTSVCs} due to the same reasons we have already discussed in Example \ref{ex:background:timed-bisim}.
\end{example}

We next analyze the effect of virtual clocks on zone graphs.
\subsection{Zone Graphs with Virtual Clocks}

This section shows how virtual clocks are integrated into zone graphs.

\begin{definition}[Virtual Clock Graph (VCG)]
\label{def:virtual-clocks:VCG}
Let $A$ and $B$ be two TA. 
We define the \textit{Virtual Clock Graph (VCG)} of $A$ regarding $B$ to be the zone graph of $\text{add-virtual}(A, B)$.
\end{definition}

Since VCGs are zone graphs, all properties described in Section~\ref{sec:background} hold. Since TSVC are the TLTS of $\text{add-virtual}(A, B)$ and VCGs are the zone graphs of $\text{add-virtual}(A, B)$, VCGs are forward- and backward-stable to TSVC as described in Proposition~\ref{prop:background:backward-stability} and Proposition~\ref{prop:background:forward-stability}.

Analogous to the definition in Section \ref{subsection:virtual-clocks:TSVCs}, we define two classes of symbolic states. AB-\synchronized \ symbolic states represent symbolic states where all included states are AB-\synchronized.
AB-\semisynchronized \ symbolic states represent former AB-\synchronized \ symbolic states where a set of original clocks has been reset. Therefore, for any original clock the value is either equivalent to the corresponding virtual clock for all contained clock valuations, or the original clock has been reset for all contained clock valuations.

\begin{definition}[Classes of Symbolic States]
\label{def:virtual-clocks:VCG:Important-classes-of-symbolic-states}
Let $\VCG{A}$ be the VCG of TA $A$ regarding $B$ and $\VCG{B}$ be the VCG of TA $B$ regarding $A$. We assume a symbolic state $\VCGFullState{A}{}$ of $\VCG{A}$ with $\Zone{A} \neq \emptyset$ of $\VCG{A}$ and a symbolic state $\VCGFullState{B}{}$ of $\VCG{B}$ with $\Zone{B} \neq \emptyset$.
\begin{itemize}
	\item $\VCGFullState{A}{}$ is \textit{AB-\semisynchronized}, if and only if
\begin{equation*}
\forall i \in [0, |C_A| - 1] : ((\forall \ClockValuation{A} \in \Zone{A} : \ClockValuation{A}(C_A[i]) = \ClockValuation{A}(\chi_i)) \lor (\forall \ClockValuation{A} \in \Zone{A} : \ClockValuation{A}(C_A[i]) = 0))\text{.}
\end{equation*}

	\item $\VCGFullState{A}{}$ is \textit{AB-\synchronized}, if and only if
\begin{equation*}
\forall i \in [0, |C_A| - 1] : \forall \ClockValuation{A} \in \Zone{A} : \ClockValuation{A}(C_A[i]) = \ClockValuation{A}(\chi_i)\text{.}
\end{equation*}

	\item $\VCGFullState{B}{}$ is \textit{AB-\semisynchronized}, if and only if
\begin{equation*}
\forall i \in [0, |C_B| - 1] : ((\forall \ClockValuation{B} \in \Zone{B} : \ClockValuation{B}(C_B[i]) = \ClockValuation{B}(\chi_{i + |C_A|})) \lor (\forall \ClockValuation{B} \in \Zone{B} : \ClockValuation{B}(C_B[i]) = 0)).
\end{equation*}

	\item $\VCGFullState{B}{}$ is \textit{AB-\synchronized}, if and only if
\begin{equation*}
\forall i \in [0, |C_B| - 1] : \forall \ClockValuation{B} \in \Zone{B} : \ClockValuation{B}(C_B[i]) = \ClockValuation{B}(\chi_{i + |C_A|}).
\end{equation*}
\end{itemize}
\end{definition}

If a symbolic state is AB-\synchronized \ then it is also AB-\semisynchronized. 
Moreover, for any AB-\semisynchronized \ symbolic state $\VCGState{}$, any state $\TSVCState{}$ 
with $\TSVCState{} \in \VCGState{}$ is AB-\semisynchronized \ and the same holds for any AB-\synchronized \ symbolic state. For the outgoing transitions of an AB-\synchronized \ symbolic state, the analog to Proposition \ref{prop:virtual_clocks:TSVC:Outgoing-transitions-of-syncd-states} holds.

\begin{proposition}[Outgoing Transitions of AB-\synchronized \ Symbolic States]
\label{prop:virtual-clocks:VCG:outgoing-transitions-of-symbolic-syncd-states}
Let $\VCG{A}$ be the VCG of $A$ regarding $B$, $\VCG{B}$ be the VCG of $B$ regarding $A$, 
$\VCGState{A}$ be an AB-\synchronized \ symbolic state of $\VCG{A}$
and $\VCGState{B}$ be an AB-\synchronized \ symbolic state of $\VCG{B}$.
\begin{itemize}
  \item If $\VCGState{A} \VCGTrans{\varepsilon} \VCGState{\varepsilon, A}$, then $\VCGState{\varepsilon, A}$ is AB-\synchronized.
	\item If $\VCGState{A} \VCGTrans{\sigma} \VCGState{\sigma, A}$ with $\sigma \in \Sigma$, then  $\VCGState{\sigma, A}$ is AB-\semisynchronized.
  \item If $\VCGState{B} \VCGTrans{\varepsilon} \VCGState{\varepsilon, B}$, then $\VCGState{\varepsilon, B}$ is AB-\synchronized.
	\item If $\VCGState{B} \VCGTrans{\sigma} \VCGState{\sigma, B}$ with $\sigma \in \Sigma$, then $\VCGState{\sigma, B}$ is AB-\semisynchronized.
\end{itemize}
\end{proposition}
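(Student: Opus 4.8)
I would prove this as the symbolic mirror of Proposition~\ref{prop:virtual_clocks:TSVC:Outgoing-transitions-of-syncd-states}, now tracking zones instead of single valuations. Everything rests on one observation already recorded after Definition~\ref{def:virtual_clocks:Extend-TA}: the virtual clocks $\chi_0,\dots,\chi_{|C_A|+|C_B|-1}$ occur in no guard, no invariant, and no reset set of $\text{add-virtual}(A,B)$ (and symmetrically for $\text{add-virtual}(B,A)$). Hence, of the three zone operations of Definition~\ref{def:background:zonegraphs} used to form successors, the $\land$-operation and the reset-operation never mention a virtual clock, while the future-operation advances original and virtual clocks in lockstep. I would treat the two $\VCG{A}$-items carefully and then obtain the $\VCG{B}$-items verbatim by swapping $A$ and $B$ and shifting virtual-clock indices by the fixed offset $|C_A|$, which is harmless precisely because the offset does not depend on the run.

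\textbf{The $\varepsilon$-step.} Write $\VCGState{A}=(l,\Zone{A})$, so AB-\synchronization{} of $\VCGState{A}$ means $\ClockValuation{A}(C_A[i])=\ClockValuation{A}(\chi_i)$ for all $i\in[0,|C_A|-1]$ and all $\ClockValuation{A}\in\Zone{A}$, and $\Zone{A}\neq\emptyset$ as it is the zone of a symbolic state of $\VCG{A}$. Then $\VCGState{\varepsilon,A}=(l,\Zone{A}^{\uparrow}\land I(l))$, which is non-empty since $\Zone{A}=\Zone{A}\land I(l)\subseteq\Zone{A}^{\uparrow}\land I(l)$. Every valuation it contains has the form $\ClockValuation{A}+d$ with $\ClockValuation{A}\in\Zone{A}$ and $d\in\TimeDomain$; adding $d$ to both sides of $\ClockValuation{A}(C_A[i])=\ClockValuation{A}(\chi_i)$ preserves the equality, and since this holds for every valuation of the zone, $\VCGState{\varepsilon,A}$ is AB-\synchronized.

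\textbf{The $\sigma$-step.} The transition is induced by one fixed switch $(l,g,\sigma,R,l')$ of $\text{add-virtual}(A,B)$ with $R\subseteq C_A$, so $\VCGState{\sigma,A}=(l',R(\Zone{A}\land g)\land I(l'))$, non-empty by the side condition of the transition rule. An arbitrary valuation $\ClockValuation{}$ of this zone equals $[R\rightarrow 0]\ClockValuation{A}$ for some $\ClockValuation{A}\in\Zone{A}$ with $\ClockValuation{A}\models g$. For an index $i$ with $C_A[i]\in R$ we get $\ClockValuation{}(C_A[i])=0$, and for $i$ with $C_A[i]\notin R$ we get, using $\chi_i\notin R$ and AB-\synchronization{} of $\Zone{A}$, that $\ClockValuation{}(C_A[i])=\ClockValuation{A}(C_A[i])=\ClockValuation{A}(\chi_i)=\ClockValuation{}(\chi_i)$. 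Because the partition of indices into ``reset'' and ``not reset'' depends only on $R$, for each $i$ either all valuations of the zone send $C_A[i]$ to $0$ or all of them equate $C_A[i]$ with $\chi_i$, which is exactly AB-\semisynchronization.

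\textbf{Where the difficulty lies.} The only point that is not mechanical is the quantifier alternation in Definition~\ref{def:virtual-clocks:VCG:Important-classes-of-symbolic-states}: AB-\semisynchronization{} requires, for each index $i$, a choice of disjunct that is \emph{uniform} over the whole zone, whereas a naive pointwise argument would only give, separately for each valuation, that $C_A[i]$ is $0$ or equal to $\chi_i$. Committing to the single switch -- hence to one reset set $R$ -- is exactly what upgrades the pointwise statement to the uniform one, and I would flag that step explicitly. As an alternative and cross-check one could instead derive the whole proposition from backward stability (Proposition~\ref{prop:background:backward-stability}): every state contained in the target symbolic state has a predecessor contained in the AB-\synchronized{} $\VCGState{A}$, which is itself AB-\synchronized{} by the remark following Definition~\ref{def:virtual-clocks:VCG:Important-classes-of-symbolic-states}, so Proposition~\ref{prop:virtual_clocks:TSVC:Outgoing-transitions-of-syncd-states} transports AB-\synchronization{} (for the $\varepsilon$-step) or AB-\semisynchronization{} (for the $\sigma$-step) to that target state -- but even here one still has to use the one-switch form of the $\sigma$-transition to pass from ``every contained state is AB-\semisynchronized{}'' to ``the symbolic state is AB-\semisynchronized{}''.
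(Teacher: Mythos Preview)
Your proof is correct and essentially the same as the paper's. The paper phrases both cases via backward stability (Proposition~\ref{prop:background:backward-stability}) to extract a predecessor valuation and then computes as you do; your direct unfolding of the zone operations amounts to the same thing, and you explicitly flag the quantifier-alternation subtlety (uniform choice of disjunct via the fixed reset set $R$), which the paper handles implicitly in the same way.
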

\begin{proof}
Given any original clock $c$ with corresponding virtual clock $\chi$ and synchronized symbolic state $(l, \Zone{})$. For the $\varepsilon$-transition $(l, \Zone{}) \VCGTrans{\varepsilon} (l, \Zone{\varepsilon})$, we know that for any $(l, \ClockValuation{\varepsilon}) \in (l, \Zone{\varepsilon})$ exists a $(l, \ClockValuation{}) \in (l, \Zone{})$ and a $d \in \TimeDomain$ such that $(l, \ClockValuation{}) \TSVCTrans{d} (l, \ClockValuation{\varepsilon})$ by Proposition~\ref{prop:background:backward-stability}. Since $(l, \Zone{})$ is synchronized, $(l, \ClockValuation{})$ is also synchronized and by Proposition~\ref{prop:virtual_clocks:TSVC:Outgoing-transitions-of-syncd-states}, $(l, \ClockValuation{\varepsilon})$ is also synchronized. Therefore, for any original clock $c$ with corresponding clock $\chi$ and for any clock valuation $\ClockValuation{\varepsilon} \in \Zone{\varepsilon}$, $\ClockValuation{\varepsilon}(c) = \ClockValuation{\varepsilon}(\chi)$ and the first and the third statement hold. For any action transition $(l, \Zone{}) \VCGTrans{\sigma} (l_\sigma, \Zone{\sigma})$ we know that for any $(l, \ClockValuation{\sigma}) \in (l, \Zone{\sigma})$ exists a $(l, \ClockValuation{}) \in (l, \Zone{})$ such that $(l, \ClockValuation{}) \TSVCTrans{\sigma} (l_{\sigma}, \ClockValuation{\sigma})$ by Proposition~\ref{prop:background:backward-stability}. Moreover, there exists a transition $\TATrans{}{g}{\sigma}{R}{\sigma}$. If $c \in R$, then for all $\ClockValuation{\sigma} \in \Zone{\sigma}$, $\ClockValuation{\sigma}(c) = 0$. Otherwise, $c$ and $\chi$ are unchanged and for all clock valuations $\ClockValuation{\sigma}(c) = \ClockValuation{}(c) = \ClockValuation{}(\chi) = \ClockValuation{\sigma}(\chi)$ holds. Therefore, the second and fourth statement hold.
\end{proof}

Definition~\ref{def:virtual_clocks:TSVC:sync-for-TSVC} has the precondition that the virtual clock values of the two states to be synchronized are equivalent. Thus, before we can apply any $\syncfunction$ function to symbolic states, we have to check that each state included in these symbolic states has a corresponding state with equivalent virtual clock values in the other symbolic state.

\begin{definition}[Virtual Inclusion]
\label{def:virtual-clocks:VCG:VirtualComparisonOfZones}
Let $C_A$, $C_B$ be two sets of clocks, $n \in \mathbb{N}^{\geq 0}$, $\Zone{A} \in \mathcal{D}(C_A \cup \{\chi_0, ..., \chi_{n}\})$, and $\Zone{B} \in \mathcal{D}(C_B \cup \{\chi_0, ..., \chi_{n}\})$ be two zones.
Virtual inclusion of zones, denoted $\VirtualComparison{A}{B}$, holds if and only if $\forall \ClockValuation{A} \in \Zone{A} : \exists \ClockValuation{B} \in \Zone{B} : \text{virt}(\ClockValuation{A}) = \text{virt}(\ClockValuation{B})$.
Virtual equivalence of zones, denoted $\VirtualEquivalence{A}{B}$, holds if and only if $\VirtualComparison{A}{B}$ and $\VirtualComparison{B}{A}$.
\end{definition}
To understand the following $\syncfunction$ function definition for symbolic states, it is crucial that for an AB-\semisynchronized \ symbolic state $\VCGState{}$ with $\TSVCFullState{}{1}, \TSVCFullState{}{2} \in \VCGState{}$, $\text{virt}(\ClockValuation{1}) = \text{virt}(\ClockValuation{2})$ implies $\ClockValuation{1} = \ClockValuation{2}$
%
%
%
In other words, in an AB-\semisynchronized \ symbolic state, the virtual clock values uniquely identify a particular state. We define the $\syncfunction$ function for symbolic states such that the result contains the results of each individual application of the $\syncfunction$ for states. Using virtual equivalence, we ensure that for each included state there exists exactly a single corresponding state in the other symbolic state.
\begin{definition}[$\syncfunction$ Function for Symbolic States]
\label{def:virtual-clocks:sync-function-for-symbolic-states}
Let $\VCG{A}$ be the VCG of TA $A$ regarding $B$, $\VCG{B}$ be the VCG of TA $B$ regarding $A$,
$\VCGFullState{A}{}$ be an AB-\semisynchronized \ symbolic state of $\VCG{A}$, 
$\VCGFullState{B}{}$ be an AB-\semisynchronized \ symbolic state of $\VCG{B}$ with $\VirtualEquivalence{A}{B}$. 
We define the $\syncfunction$ function
\begin{equation*}
\text{sync} : \VCGAllStates{A} \times \VCGAllStates{B} \rightarrow \VCGAllStates{A} \times \VCGAllStates{B}
\end{equation*}
such that 
\begin{equation*}
\syncfunction (\VCGState{A}, \VCGState{B}) = (\VCGFullState{A}{e, A}, \VCGFullState{B}{e, B})
\end{equation*}
with
\begin{align*}
\Zone{e, A} = \{ \ClockValuation{e, A} \in & (C_A \cup \{\chi_0, ..., \chi_{|C_A| + |C_B| -  1}\} \rightarrow \TimeDomain) \ | \ \\
\exists \TSVCState{A} \in & \VCGState{A} : \exists \TSVCState{B} \in \VCGState{B} : \exists \ClockValuation{e, B} \in (C_B \cup \{\chi_0, ... \chi_{|C_A| + |C_B| - 1}\} \rightarrow \TimeDomain) : \\
& (\TSVCFullState{A}{e, A}, \TSVCFullState{B}{e, B}) = \syncfunction (\TSVCState{A}, \TSVCState{B}) \}
\end{align*}
and
\begin{align*}
\Zone{e, B} = \{ \ClockValuation{e, B} \in & (C_B \cup \{\chi_0, ..., \chi_{|C_A| + |C_B| - 1}\} \rightarrow \TimeDomain) \ | \ \\
\exists \TSVCState{A} \in & \VCGState{A} : \exists \TSVCState{B} \in \VCGState{B} : \exists \ClockValuation{e, A} \in (C_A \cup \{\chi_0, ... \chi_{|C_A| + |C_B| - 1}\} \rightarrow \TimeDomain):\\
& (\TSVCFullState{A}{e, A}, \TSVCFullState{B}{e, B}) = \syncfunction (\TSVCState{A}, \TSVCState{B}) \}.
\end{align*}
\end{definition}

The $\text{sync}$ function for states is calculated by checking whether an original clock has been set to zero and to reset the corresponding virtual clock. An analogue procedure is possible to compute the $\syncfunction$ function for symbolic states. We remind the reader of the reset-operator, defined in Definition~\ref{def:background:zones}.
\begin{proposition}[Properties of $\syncfunction$ for Symbolic States]
\label{prop:virtual-clocks:VCG:correctness-sync-function}
Let $\VCG{A}$ be the VCG of TA $A$ regarding $B$, $\VCG{B}$ be the VCG of TA $B$ regarding $A$,
$\VCGFullState{A}{}$ be an AB-\semisynchronized \ symbolic state of $\VCG{A}$, 
$\VCGFullState{B}{}$ be an AB-\semisynchronized \ symbolic state of $\VCG{B}$ with $\VirtualEquivalence{A}{B}$. 
We define $R = R_A \cup R_B$ with
\begin{itemize}
	\item $R_A = \{\chi_i \ | \ i \in [0, |C_A| - 1] \land \forall \ClockValuation{A} \in \Zone{A} : \ClockValuation{A}(C_A[i]) = 0\}$, and
	\item $R_B = \{\chi_{i + |C_A|} \ | \ i \in [0, |C_B| - 1] \land \forall \ClockValuation{B} \in \Zone{B} : \ClockValuation{B}(C_B[i]) = 0\}$, respectively.
\end{itemize}
$\text{sync}(\VCGState{A}, \VCGState{B}) = ((l_A, R(\Zone{A})), (l_B, R(\Zone{B})))$ and $R(\Zone{A}) \VirtualEquivalenceBare R(\Zone{B})$ hold. 
The resulting symbolic states are AB-\synchronized.
\end{proposition}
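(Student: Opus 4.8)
The plan is to push everything down to the state level: to show that $\syncfunction$ for symbolic states (Definition~\ref{def:virtual-clocks:sync-function-for-symbolic-states}) is assembled from applications of $\syncfunction$ for states (Definition~\ref{def:virtual_clocks:TSVC:sync-for-TSVC}) along a virt-matching bijection between the two zones, and to show that on this bijection each state-level $\syncfunction$ acts exactly like the uniform reset $[R\to 0]$.

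First I would establish the matching. Since $\VCGState{A}$ and $\VCGState{B}$ are AB-\semisynchronized{}, every contained state is AB-\semisynchronized{}, and in such a symbolic state the virtual-clock values uniquely identify the contained state (the fact recalled before Definition~\ref{def:virtual-clocks:sync-function-for-symbolic-states}, proven in Appendix~\ref{section:appendix:chi_is_unique}). Together with $\VirtualEquivalence{A}{B}$ this yields a bijection $\beta$ between $\Zone{A}$ and $\Zone{B}$ pairing $\ClockValuation{A}$ with the unique $\ClockValuation{B}$ satisfying $\text{virt}(\ClockValuation{A}) = \text{virt}(\ClockValuation{B})$; these are precisely the pairs on which $\syncfunction$ for states is defined, hence precisely the pairs contributing to $\Zone{e,A}$ and $\Zone{e,B}$ in Definition~\ref{def:virtual-clocks:sync-function-for-symbolic-states}.

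Next I would prove that for a matched pair $(\ClockValuation{A}, \ClockValuation{B})$ the output valuation $\ClockValuation{e,A}$ of $\syncfunction$ for states equals $[R\to 0]\ClockValuation{A}$ (and symmetrically $\ClockValuation{e,B} = [R\to 0]\ClockValuation{B}$). Both maps leave the original clocks untouched, so it suffices to compare the virtual clocks index by index. If $\chi_i \in R_A$, then every valuation of $\Zone{A}$, in particular $\ClockValuation{A}$, assigns $C_A[i]$ the value $0$, so $\syncfunction$ resets $\chi_i$, matching $[R\to 0]$. If $\chi_i \notin R_A$, then, since $\VCGState{A}$ is AB-\semisynchronized{}, the ``all-zero'' alternative for index $i$ is excluded and $\ClockValuation{A}(C_A[i]) = \ClockValuation{A}(\chi_i)$ holds; hence if $\ClockValuation{A}(C_A[i]) = 0$ the virtual clock $\chi_i$ is already $0$ and the (possibly performed) reset changes nothing, whereas if $\ClockValuation{A}(C_A[i]) \neq 0$ the value of $\chi_i$ is kept --- in both sub-cases agreeing with $[R\to 0]$, which does not touch $\chi_i$. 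The virtual clocks associated with $B$ are handled analogously, reading the reset condition off the matched valuation $\ClockValuation{B}$ and using $\text{virt}(\ClockValuation{A}) = \text{virt}(\ClockValuation{B})$ together with the dichotomy built into $\VCGState{B}$ being AB-\semisynchronized{}, which shows that such a clock is already $0$ whenever $\ClockValuation{B}$ forces its reset but it is not collected into $R_B$. Letting $\ClockValuation{A}$ range over $\Zone{A}$ (using $\VirtualComparison{A}{B}$ to supply the matching $\ClockValuation{B}$ for the ``$\supseteq$'' inclusion and Definition~\ref{def:virtual-clocks:sync-function-for-symbolic-states} directly for ``$\subseteq$''), this gives $\Zone{e,A} = R(\Zone{A})$ and $\Zone{e,B} = R(\Zone{B})$, both zones by the reset-operation of Definition~\ref{def:background:zones}, i.e.\ $\syncfunction(\VCGState{A},\VCGState{B}) = ((l_A, R(\Zone{A})), (l_B, R(\Zone{B})))$.

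The two remaining assertions then follow from Proposition~\ref{prop:virtual_clocks:TSVC:Properties-of-sync-for-states} applied state-wise along $\beta$: every state of $R(\Zone{A})$ is the $A$-component of some $\syncfunction(\TSVCState{A},\TSVCState{B})$, hence AB-\synchronized{} and with the same virtual part as its $B$-counterpart in $R(\Zone{B})$, so the two resulting symbolic states are AB-\synchronized{} and $R(\Zone{A}) \VirtualEquivalenceBare R(\Zone{B})$ holds. I expect the main obstacle to be the bookkeeping in the index-wise comparison above, specifically showing that the per-state reset performed by $\syncfunction$ for states and the uniform reset set $R$ have the same effect; the point making this work is precisely the dichotomy imposed by AB-\semisynchronized{} symbolic states, which guarantees that any clock that happens to be $0$ in an individual state but is not collected into $R_A$ (resp.\ $R_B$) already has its virtual counterpart at $0$, so nothing is lost by resetting uniformly.
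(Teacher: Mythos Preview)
Your proposal is correct and follows essentially the same approach as the paper: both reduce to showing that, on each virt-matched pair $(\ClockValuation{A},\ClockValuation{B})$, the state-level $\syncfunction$ coincides with the uniform reset $[R\to 0]$, via the same index-by-index case analysis driven by the AB-\semisynchronized{} dichotomy. The paper splits this off as a separate ``reshape'' lemma (Proposition~\ref{prop:appendix:sync-symbolic-states-is-correct-sync-for-symbolic-states-is-implementable}) and then proves virtual equivalence and AB-\synchronized{}ness of the result by direct computation with $[R\to 0]$, whereas you obtain these two facts more cleanly by invoking Proposition~\ref{prop:virtual_clocks:TSVC:Properties-of-sync-for-states} state-wise along your bijection $\beta$; this is a minor packaging difference, not a different argument.
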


\begin{proof}
To show the first statement, we denote $\text{sync}(\VCGState{A}, \VCGState{B}) = (\VCGFullState{A}{e, A}, \VCGFullState{B}{e, B})$ and show $R(\Zone{A}) \subseteq \Zone{e, A}$ and $\Zone{e, A} \subseteq R(\Zone{A})$ (as those statements for $B$ can be shown analogously). 

We consider any $\ClockValuation{R} \in R(\Zone{A})$. From the background section, we know that there exists a $\ClockValuation{A} \in \Zone{A}$ such that $R(\ClockValuation{A}) = \ClockValuation{R}$. By $\VirtualEquivalence{A}{B}$, we know that there exists exactly a single $\ClockValuation{B} \in \Zone{B}$ with $\text{virt}(\ClockValuation{A}) = \text{virt}(\ClockValuation{B})$ and denote $((l_A, \ClockValuation{e, A}), (l_B, \ClockValuation{e, B})) = \text{sync}((l_A, \ClockValuation{A}), (l_B, \ClockValuation{B}))$ with $\ClockValuation{e, A} \in \Zone{e, A}$ by Definition~\ref{def:virtual-clocks:sync-function-for-symbolic-states}. To show $\ClockValuation{R} \in \Zone{e, A}$, we show for each clock $c \in C_A \cup \{\chi_0, ..., \chi_{|C_A| + |C_B| - 1}\}$ that $\ClockValuation{R}(c) = \ClockValuation{e, A}(c)$. Since the original clock values are neither changed by the reset (as the reset set consists only of virtual clocks) nor by the sync function (see Definition~\ref{def:virtual_clocks:TSVC:sync-for-TSVC}), the original clock values are the same. For any $\chi_i$ with $i \in [0, |C_A| - 1]$, we distinguish the cases $\forall \ClockValuation{} \in \Zone{A} : \ClockValuation{}(C_A[i]) = 0$ and the opposite case. If the statement holds, $\chi_i$ is element of the reset set and $\chi_i$ is also set to zero by the $\syncfunction$. Therefore, $\ClockValuation{R}(\chi_i) = \ClockValuation{e, A}(\chi_i)$ in this case. Otherwise, $\chi_i$ is not element of $R$ and $\ClockValuation{A}(\chi_i) = \ClockValuation{A}(C_A[i])$ holds. Therefore, $\ClockValuation{e, A}(\chi_i) = \ClockValuation{A}(\chi_i) = \ClockValuation{R}(\chi_i)$ holds. For any $\chi_{i + |C_A|}$ with $i \in [0, |C_B| - 1]$, the equivalence can be shown analogously.

To show $\Zone{e, A} \subseteq R(\Zone{A})$, we consider any $\ClockValuation{e, A} \in \Zone{e, A}$ and show $\ClockValuation{e, A} \in R(\Zone{A})$. By Definition~\ref{def:virtual-clocks:sync-function-for-symbolic-states}, we know that there exists a $\ClockValuation{A} \in \Zone{A}$ and a $\ClockValuation{B} \in \Zone{B}$ with $\text{virt}(\ClockValuation{A}) = \text{virt}(\ClockValuation{B})$ and there exists a $\ClockValuation{e, B}$ such that $((l_A, \ClockValuation{e, A}), (l_B, \ClockValuation{e, B})) = \text{sync}((l_A, \ClockValuation{A}), (l_B, \ClockValuation{B}))$. Since $\ClockValuation{e, A} = R(\ClockValuation{A})$ can be shown analogously to the previous case, $\ClockValuation{e, A} \in R(\Zone{A})$ holds.

To show $R(\Zone{A}) \VirtualEquivalenceBare R(\Zone{B})$, we consider any $\ClockValuation{R, A} \in R(\Zone{A})$ and show that there exists a $\ClockValuation{R, B} \in R(\Zone{B})$ with $\text{virt}(\ClockValuation{R, A}) = \text{virt}(\ClockValuation{R, B})$, which implies $R(\Zone{A}) \leq_\text{virtual} R(\Zone{B})$. From the background section, we know that there exists a $\ClockValuation{A} \in \Zone{A}$ such that $R(\ClockValuation{A}) = \ClockValuation{R, A}$. By $\Zone{A} \VirtualEquivalenceBare \Zone{B}$, we know that there exists a $\ClockValuation{B} \in \Zone{B}$ such that $\text{virt}(\ClockValuation{A}) = \text{virt}(\ClockValuation{B})$. This implies $\text{virt}(R(\ClockValuation{A})) = \text{virt}(R(\ClockValuation{B}))$ (since all resets are exactly the same on both clock valuations) and since $R(\ClockValuation{B}) \in \Zone{B}$, $R(\Zone{A}) \leq_\text{virtual} R(\Zone{B})$ holds. Since the opposite direction can be shown analogously, $R(\Zone{A}) \VirtualEquivalenceBare R(\Zone{B})$ holds.

$R(\Zone{A})$ and $R(\Zone{B})$ are AB-\synchronized{} by the fact that any contained state is AB-\synchronized{} by Proposition~\ref{prop:virtual_clocks:TSVC:Properties-of-sync-for-states}.
\end{proof}

Analogously to AB-\synchronized \ states, also AB-\synchronized \ symbolic states are not affected 
if the $\syncfunction$ function is applied to them. The following example shows the impact of the $\syncfunction$ function, analogously to Example \ref{ex:virtual_clocks:TSVCs:impact-of-sync-function}.

\newcommand{\VCGExampleAutomataScalingFactor}{0.4}
\newcommand{\VCGExampleAutomataSpaceBetween}{10mm}
\newcommand{\VCGTikzFontSize}{\Large}
\newcommand{\VCGExampleArrowDesc}{\draw[-{Latex[length=3mm]}]}

\begin{example}

\begin{figure}
\centering
\scalebox{\VCGExampleAutomataScalingFactor}{
\begin{tikzpicture}
\tikzstyle{every node}=[font=\VCGTikzFontSize]
\tikzstyle{symstate} = [draw,rectangle,minimum width=5cm,inner sep=5pt,thick]
%
\node[symstate, align=center, initial left, initial text=] (02) {$l_0$\\$x_2=0$\\$\chi_0 = \chi_1 = \chi_2 = 0$};
\node[symstate, align=center, right = 1.5cm of 02] (12) {$l_0$\\$x_2<\infty$\\$\chi_0 = \chi_1 = \chi_2 = x_2$};
%
\node[symstate, align=center, right = 1.5cm of 12] (22) {$l_1$\\$x_2=0$\\$\chi_0 = \chi_1 = \chi_2 < \infty$};
\node[symstate, align=center, right = 5cm of 22] (32) {$l_1$\\$x_2=0$\\$\chi_0 = \chi_1 = 0; \chi_2 < \infty$};
%
\node[symstate, align=center, below = 0.7cm of 02, initial left, initial text=] (04) {$l_0$\\$x_4 = y_4 =0$\\$\chi_0 = \chi_1 = \chi_2 = 0$};
\node[symstate, align=center, below = 0.7cm of 12] (14) {$l_0$\\$x_4 = y_4<\infty$\\$\chi_0 = \chi_1 = \chi_2 = x_4$};
%
\node[symstate, align=center, below = 0.7cm of 22] (24) {$l_1$\\$x_4 = 0; y_4 < \infty$\\$\chi_0 = \chi_1 = \chi_2 = y_4$};
\node[symstate, align=center,  below = 0.7cm of 32] (34) {$l_1$\\$x_4 = 0; y_4 < \infty$\\$\chi_0 = \chi_1 = 0; \chi_2 = y_4$};
\node[rectangle, thick, align=center, inner sep=0pt, minimum size=12pt, below right = 0.5cm and 1cm of 22] (syncnode) {$\syncfunction$ function};
%
%
%
\VCGExampleArrowDesc (02) --node[above, align=center]{$\varepsilon$} (12);
\VCGExampleArrowDesc (12) --node[above, align=center]{a} (22);
\draw[dashed, -{Latex[length=2mm,width=3mm]}] (22) -- (syncnode.north west);
\draw[dashed, -{Latex[length=2mm,width=3mm]}] (syncnode.north east) -- (32);
\VCGExampleArrowDesc (04) --node[above, align=center]{$\varepsilon$} (14);
\VCGExampleArrowDesc (14) --node[above, align=center]{a} (24);
\draw[dashed, -{Latex[length=2mm,width=3mm]}] (24) -- (syncnode.south west);
\draw[dashed, -{Latex[length=2mm,width=3mm]}] (syncnode.south east) -- (34);
\end{tikzpicture}
}
\caption{Extracts of the VCG of $A_2$ regarding $A_4$ and $A_4$ regarding $A_2$}
\label{fig:virtual_clocks:VCG:extracts-of-the-vcgs-of-a2-reg-a4-and-a4-reg-a2}
\end{figure}
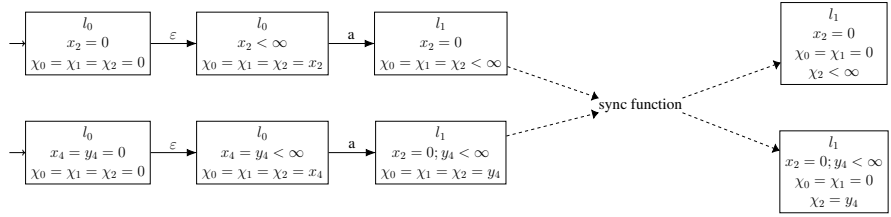

Figure \ref{fig:virtual_clocks:VCG:extracts-of-the-vcgs-of-a2-reg-a4-and-a4-reg-a2} shows (small) extracts from the VCG of $A_2$ regarding $A_4$ and the VCG of $A_4$ regarding $A_2$. We renamed the clocks analogously to Example \ref{ex:virtual_clocks:TSVCs:impact-of-sync-function}. Once again, we add three virtual clocks (since $A_2$ has a single original clock, while $A_4$ has two). Initially, all clocks are set to zero, which implies that both initial symbolic states are $A_2A_4$-\synchronized \ and $\VirtualEquivalence{0, A_2}{0, A_4}$ holds. Both properties also hold after the $\varepsilon$-transition. After the action transition, $x_2$ and $x_4$ are reset. While the virtual clock values of both symbolic states are still the same, the symbolic states are not $A_2A_4$-\synchronized \ anymore. We use the $\syncfunction$ function to reset the corresponding virtual clocks and the result is $A_2A_4$-\synchronized \ again.
\end{example}

We are now able to define a bisimulation equivalence for VCG.

\begin{definition}[Virtual Bisimulation for VCGs]
\label{def:virtual-clocks:VCG:CheckingForBoundedBisim:BoundedBiSim}
\newcommand{\VCGBiSimulateTransitionCommonPart}[3] {%
the existence of an outgoing transition $\VCGFullState{#1}{e, #1} \VCGTrans{\mu} \VCGFullState{\mu, #1}{\mu, #1}$ implies the existence of a finite set of symbolic states $\{\VCGFullState{\mu, #1}{\mu, 0, #1}, \VCGFullState{\mu, #1}{\mu, 1, #1}, ...\}$ with $(\bigcup \Zone{\mu, i, #1}) = \Zone{\mu, #1}$ and for any $\VCGFullState{\mu, #1}{\mu, i, #1}$ exists an outgoing transition $\VCGFullState{#2}{e, #2} \VCGTrans{\mu} \VCGFullState{\mu, #2}{\mu, #2}$ such that there exists a zone $\Zone{\mu, i, #2} \subseteq \Zone{\mu, #2}$ with $#3{\mu, #1}{\mu, i, #1}{\mu, #2}{\mu, i, #2}{n}$%
}%
\newcommand{\VCGBisimulateExtendToVCG}[3]{%
$\VCG{A}$ is #1 in order $n$ #2 $\VCG{B}$, denoted $#3{\VCG{A}}{\VCG{B}}{n}$, if and only if $#3{\VCGState{0, A}}{\VCGState{0, B}}{n}$ holds. $\VCG{A}$ is #1 #2 $\VCG{B}$, denoted $#3{\VCG{A}}{\VCG{B}}{}$, if and only if $\forall n \in \mathbb{N}^{\geq 0} : #3{\VCG{A}}{\VCG{B}}{n}$ holds.
}%
Assume two TA $A$, $B$ with VCG $\VCG{A}$ of $A$ regarding $B$ and VCG $\VCG{B}$ of $B$ regarding $A$, using the same alphabet $\Sigma$. Let $\VCGFullState{A}{}$ be an AB-\semisynchronized \ symbolic state of $\VCG{A}$ and let $\VCGFullState{B}{}$ be an AB-\semisynchronized \ symbolic state of $\VCG{B}$.

We define $\VCGFullState{A}{}$ to be virtually bisimilar in order $n=0$ to $\VCGFullState{B}{}$, denoted $\VCGFullState{A}{} \VCGBisimulationBare{0} \VCGFullState{B}{}$, if and only if $\VirtualEquivalence{A}{B}$ holds.

We define $\VCGFullState{A}{}$ to be virtually bisimilar in order $n+1$ (with $n \in \mathbb{N}^{\geq 0}$) to $\VCGFullState{B}{}$, denoted $\VCGFullState{A}{} \VCGBisimulationBare{n+1} \VCGFullState{B}{}$, if and only if $\VirtualEquivalence{A}{B}$ and for the AB-\synchronized \ symbolic states
\begin{equation*}
(\VCGFullState{A}{e, A}, \VCGFullState{B}{e, B}) = \text{sync}(\VCGFullState{A}{}, \VCGFullState{B}{})
\end{equation*}
\VCGBiSimulateTransitionCommonPart{A}{B}{\VCGBisimulationFullState} and \VCGBiSimulateTransitionCommonPart{B}{A}{\VCGBisimulationFullState}.

$\VCG{A}$ is virtual bisimilar in order $n$ to $\VCG{B}$, denoted $\VCGBisimulation{\VCG{A}}{\VCG{B}}{n}$, if and only if $\VCGBisimulation{\VCGState{0, A}}{\VCGState{0, B}}{n}$ holds. $\VCG{A}$ is virtual bisimilar to $\VCG{B}$, denoted $\VCGBisimulation{\VCG{A}}{\VCG{B}}{}$, if and only if $\forall n \in \mathbb{N}^{\geq 0} : \VCGBisimulation{\VCG{A}}{\VCG{B}}{n}$ holds.
\end{definition}

Virtual bisimulation holds for two VCG, if and only if the corresponding TA are timed bisimilar.

\begin{theorem}[Virtual Bisimulation of VCGs is correct]
\label{theorem:virtual-clocks:CheckingForBoundedBisim:BoundedSimIsCorrect}
Assume two TA $A$, $B$, the VCG $\VCG{A}$ of $A$ regarding $B$, and the VCG $\VCG{B}$ of $B$ regarding $A$.
$A \sim_n B$ holds if and only if $\VCG{A} \sim_n \VCG{B}$ holds. $A \sim B$ is equivalent to $\VCG{A} \sim \VCG{B}$.
\end{theorem}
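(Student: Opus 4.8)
The plan is to reduce everything to the bounded statement $A \sim_n B \iff \VCG{A} \sim_n \VCG{B}$ and then recover the unbounded claim for free: by Definition~\ref{def:virtual-clocks:VCG:CheckingForBoundedBisim:BoundedBiSim}, $\VCGBisimulation{\VCG{A}}{\VCG{B}}{}$ means $\VCGBisimulation{\VCG{A}}{\VCG{B}}{n}$ for all $n$, which by the bounded result is equivalent to $A \sim_n B$ for all $n$, which by Proposition~\ref{prop:virtual-clocks:BoundedBisimVsUnboundedBisim} is equivalent to $A \sim B$. For the bounded statement, Theorem~\ref{theorem:virtual-clocks:TSVCSimIffTASim} lets me replace $A \sim_n B$ by $\TSVCBisimulation{\TSVC{A}}{\TSVC{B}}{n}$, so it suffices to prove $\TSVCBisimulation{\TSVC{A}}{\TSVC{B}}{n} \iff \VCGBisimulation{\VCG{A}}{\VCG{B}}{n}$, i.e.\ that virtual bisimilarity as defined on the infinite TSVC (Definition~\ref{def:virtual-clocks:BoundedTimedBiSimulationForTSVCs}) coincides with virtual bisimilarity as defined symbolically on the VCG.

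The heart of the argument is an induction on $n$ establishing the following bridging invariant for every pair of AB-\semisynchronized{} symbolic states $\VCGState{A}$ of $\VCG{A}$ and $\VCGState{B}$ of $\VCG{B}$ arising along the alternating interleaving of transitions and $\syncfunction$-applications used in Definition~\ref{def:virtual-clocks:VCG:CheckingForBoundedBisim:BoundedBiSim}: $\VCGBisimulationState{A}{B}{n}$ holds if and only if $\VirtualEquivalence{A}{B}$ holds and $\TSVCBisimulation{\TSVCState{A}}{\TSVCState{B}}{n}$ holds for every $\TSVCState{A}\in\VCGState{A}$ and $\TSVCState{B}\in\VCGState{B}$ with $\text{virt}(\ClockValuation{A})=\text{virt}(\ClockValuation{B})$. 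Instantiating this at the initial symbolic states---which are trivially AB-\synchronized{} and contain only the initial TSVC states, so that $\syncfunction$ is the identity on them by Propositions~\ref{prop:virtual_clocks:TSVC:sync-On-syncd-States} and~\ref{prop:virtual-clocks:VCG:sync-On-syncd-Symbolic-States}---immediately yields $\TSVCBisimulation{\TSVC{A}}{\TSVC{B}}{n} \iff \VCGBisimulation{\VCG{A}}{\VCG{B}}{n}$. The base case $n=0$ is trivial, since both sides of the invariant reduce to virtual equivalence of $\Zone{A}$ and $\Zone{B}$. The induction step rests on four facts: the uniqueness property (Appendix~\ref{section:appendix:chi_is_unique}), which makes $\text{virt}$ a bijection between the states of two virtually equivalent AB-\semisynchronized{} symbolic states; Propositions~\ref{prop:virtual_clocks:TSVC:Properties-of-sync-for-states} and~\ref{prop:virtual-clocks:VCG:correctness-sync-function}, which together say that $\syncfunction$ on symbolic states is exactly the pointwise lift of $\syncfunction$ on states, so that passing to the AB-\synchronized{} representatives commutes between the two levels; backward and forward stability (Propositions~\ref{prop:background:backward-stability} and~\ref{prop:virtual-clocks:forward-stability-of-vcg}) together with soundness and completeness of zone graphs (Proposition~\ref{prop:background:soundness-and-completeness}), which make a symbolic transition $\VCGState{e,A}\VCGTrans{\mu}\VCGState{\mu,A}$ correspond state-by-state to the TSVC moves available from the states of $\VCGState{e,A}$ (forward stability: every such TSVC successor lies in some VCG successor; backward stability: no VCG successor contains spurious states); and Proposition~\ref{prop:virtual-clocks:VCG:outgoing-transitions-of-symbolic-syncd-states} together with its TSVC analogue Proposition~\ref{prop:virtual_clocks:TSVC:Outgoing-transitions-of-syncd-states}, which track whether a successor is AB-\synchronized{} or only AB-\semisynchronized{}. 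Finally, the standard alternating-sequence argument---an arbitrary TLTS run decomposes into $\varepsilon$-then-$\sigma$ blocks because zero-delays and additive delays are always available---is what lets the strictly alternating transitions of the VCG match the free delay/action transitions of the TSVC.

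The step I expect to be the main obstacle is the forth/back clause of Definition~\ref{def:virtual-clocks:VCG:CheckingForBoundedBisim:BoundedBiSim}, which does not merely ask for a matching successor transition but for a finite decomposition $\bigcup_i \Zone{\mu,i,A}=\Zone{\mu,A}$ of the successor zone, each piece being $n$-virtually bisimilar to a sub-zone of some successor on the $B$ side. I will have to show that such a decomposition always exists and faithfully reflects how the TSVC successors of the individual contained states of $\Zone{\mu,A}$ distribute among the finitely many VCG successor symbolic states on the opposite side: group the states of $\Zone{\mu,A}$ by which $B$-side successor symbolic state their $\text{virt}$-matched TSVC counterpart lands in (finitely many, since after each action or $\varepsilon$ step the VCG has only finitely many outgoing transitions), let $\Zone{\mu,i,A}$ be spanned by the $i$-th group, and apply the induction hypothesis to the resulting pairs of pieces. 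Verifying that each such group is again a genuine, DBM-representable zone forming a reachable symbolic state to which the invariant applies is the delicate bookkeeping; non-determinism aggravates it, since a single TSVC transition on one side may have to be matched by different successor symbolic states on the other side depending on the source clock valuation, which is precisely why the decomposition into finitely many pieces cannot be avoided. Once the step is proved for $\sim_n$, the companion statement for timed simulation (Appendix~\ref{section:appendix:VCGsSimIffTASim}) follows by dropping one direction of the forth/back clauses throughout.
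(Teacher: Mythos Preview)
Your overall architecture is right and matches the paper: reduce to the bounded claim, use Theorem~\ref{theorem:virtual-clocks:TSVCSimIffTASim} to replace $A\sim_n B$ by $\TSVCBisimulation{\TSVC{A}}{\TSVC{B}}{n}$, then prove a bridging invariant between TSVC states and VCG symbolic states by induction on $n$. Your invariant is essentially the one the paper uses (stated for simulation in Appendix~\ref{section:appendix:VCGsSimIffTASim} and then remarked to carry over to bisimulation), and you correctly identify uniqueness (Corollary in Appendix~\ref{section:appendix:chi_is_unique}), the $\syncfunction$ compatibility, and forward/backward stability as the drivers of the induction step.

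The genuine gap is exactly the place you flag as ``delicate bookkeeping.'' Your proposed decomposition---group the clock valuations of $\Zone{\mu,A}$ by which $B$-side VCG successor their virt-matched TSVC partner lands in---does not work as stated. First, the grouping is not well-defined: in the non-deterministic case a single virt-matched $B$-state can lie in several overlapping $B$-side successor zones, and nothing in the TSVC witness tells you which one to pick. Second, even after making a choice, the resulting groups need not be zones: the ``which $B$-transition to use'' function is determined only up to TSVC bisimilarity of the target, and there is no a priori reason the preimage of a fixed choice is convex or DBM-representable. The paper resolves this by refusing to compute that grouping at all. Instead it decomposes $\Zone{\mu,A}$ into its \emph{regions} using the $\text{extract-regions}$ operator (Appendix~\ref{section:appendix:split-a-zone-into-regions}); each region is a zone by construction, and the key extra ingredient you are missing---Proposition~\ref{prop:appendix:k-equivalence-tsvc:k-equivalence-preservers-bisimulation-of-tsvc} in Appendix~\ref{section:appendix:k-equi-preserves-bisimulation-of-tsvc}---says that region equivalence preserves TSVC bisimilarity. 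Hence all states in one region behave identically with respect to $\TSVCBisimulationBare{n}$, so a \emph{single} $B$-side TA transition (hence a single $B$-side VCG successor) can be chosen uniformly for the whole region, and the matching sub-zone on the $B$ side is again the corresponding region. This is what makes the finite decomposition in Definition~\ref{def:virtual-clocks:VCG:CheckingForBoundedBisim:BoundedBiSim} go through.

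One minor point: the alternating-sequence remark is a red herring here. Definition~\ref{def:virtual-clocks:VCG:CheckingForBoundedBisim:BoundedBiSim} does not assume alternation; that optimisation enters only later (Propositions~\ref{prop:virtual-clocks:alt-seq:mult-eps-trans} and~\ref{prop:virtual-clocks:alt-seq:check-eps-only}) for the algorithm, not for this theorem.
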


\begin{proof}
By Theorem~\ref{theorem:virtual-clocks:TSVCSimIffTASim}, $A \sim_n B$ holds if and only if $\TSVCBisimulation{\TSVC{A}}{\TSVC{B}}{n}$. Hence, to show Proposition~\ref{theorem:virtual-clocks:CheckingForBoundedBisim:BoundedSimIsCorrect}, it suffices to show that $\TSVCBisimulation{\TSVC{A}}{\TSVC{B}}{n}$ holds if and only if $\VCG{A} \sim_n \VCG{B}$. To do so, we formulate two hypotheses that relate virtual bisimulation of states to virtual bisimulation of symbolic states.
We show both hypotheses by induction in $n$.

\noindent\textbf{Hypothesis 1:} If $\VCGState{A}$ is non-empty and AB-\semisynchronized \ and $\VCGState{B}$ is non-empty and AB-\semisynchronized, then $\VCGBisimulationState{A}{B}{n}$ implies 
\begin{align*}
& \forall \TSVCState{A} \in \VCGState{A} : \exists \TSVCState{B} \in \VCGState{B} : \TSVCBisimulation{\TSVCState{A}}{\TSVCState{B}}{n} \land \\
& \forall \TSVCState{B} \in \VCGState{B} : \exists \TSVCState{A} \in \VCGState{A} : \TSVCBisimulation{\TSVCState{A}}{\TSVCState{B}}{n}.
\end{align*}

\noindent\textbf{Base Case:}  Assume any $\TSVCFullState{A}{} \in \VCGState{A}$. By Definition \ref{def:virtual-clocks:VCG:CheckingForBoundedBisim:BoundedBiSim}, $\VCGBisimulationState{A}{B}{0}$ implies $\VirtualEquivalence{A}{B}$. By Definition \ref{def:virtual-clocks:VCG:VirtualComparisonOfZones}, this implies the existence of a $\TSVCFullState{B}{} \in \VCGState{B}$, such that $\text{virt}(\ClockValuation{A}) = \text{virt}(\ClockValuation{B})$, which implies $\TSVCBisimulation{\TSVCFullState{A}{}{}}{\TSVCFullState{B}{}{}}{0}$ by Definition \ref{def:virtual-clocks:BoundedTimedBiSimulationForTSVCs}. Since the second part of the statement can be shown analogously, the base case holds.

\noindent\textbf{Induction Step:} Assume any $\TSVCFullState{A}{} \in \VCGState{A}$. $\VCGBisimulationState{A}{B}{n}$ implies $\VirtualEquivalence{A}{B}$. Therefore, there exists exactly a single $\TSVCFullState{B}{}{} \in \VCGState{B}$ such that $\text{virt}(\ClockValuation{A}) = \text{virt}(\ClockValuation{B})$ and we show $\TSVCBisimulation{\TSVCFullState{A}{}}{\TSVCFullState{B}{}}{n}$. To do so, we show that for the states $(\TSVCFullState{A}{e, A}, \TSVCFullState{B}{e, B}) = \text{sync}(\TSVCFullState{A}{}, \TSVCFullState{B}{})$ the existence of a transition $\TSVCFullState{A}{e, A} \TSVCTrans{\mu} \TSVCFullState{\mu, A}{}$ ($\mu \in \Sigma \cup \TimeDomain$) implies the existence of a transition $\TSVCFullState{B}{e, B} \TSVCTrans{\mu} \TSVCFullState{\mu, B}{}$ with $\TSVCBisimulation{\TSVCFullState{\mu, A}{}}{\TSVCFullState{\mu, B}{}}{n-1}$. The second condition can be shown analogously. 

We denote $(\VCGFullState{A}{e, A}, \VCGFullState{B}{e, B}) = \text{sync}(\VCGFullState{A}{}, \VCGFullState{B}{})$ and by Definition \ref{def:virtual-clocks:sync-function-for-symbolic-states}, we know that $\TSVCFullState{A}{e, A} \in \VCGFullState{A}{e, A}$ and $\TSVCFullState{B}{e, B} \in \VCGFullState{B}{e, B}$. Due to forward stability (Proposition \ref{prop:background:forward-stability}), the existence of $\TSVCFullState{A}{e, A} \TSVCTrans{\mu} \TSVCFullState{\mu, A}{}$ implies the existence of a transition $\VCGFullState{A}{e, A} \VCGTrans{\mu_\varepsilon} \VCGFullState{\mu, A}{}$ with $\ClockValuation{\mu, A} \in \Zone{\mu, A}$, $\mu_\varepsilon = \mu$ if $\mu \in \Sigma$ and $\mu_\varepsilon = \varepsilon$, else. Due to Definition \ref{def:virtual-clocks:VCG:CheckingForBoundedBisim:BoundedBiSim} and $\VCGBisimulationState{A}{B}{n}$, this implies the existence of a zone $\Zone{\mu, i, A} \subseteq \Zone{\mu, A}$ with $\TSVCFullState{\mu, A}{} \in \VCGFullState{\mu, A}{\mu, i, A}$ such that there exists an outgoing transition $\VCGState{B} \VCGTrans{\mu_\varepsilon} \VCGFullState{\mu, B}{}$ such that there exists a zone $\Zone{\mu, i, B} \subseteq \Zone{\mu, B}$ with $\VCGBisimulation{\VCGFullState{\mu, A}{\mu, i, A}}{\VCGFullState{\mu, B}{\mu, i, B}}{n-1}$.

Since $\VCGFullState{\mu, A}{\mu, i, A}$ and $\VCGFullState{\mu, B}{\mu, i, B}$ are both non-empty and AB-\semisynchronized \ (due to Proposition \ref{prop:virtual-clocks:VCG:outgoing-transitions-of-symbolic-syncd-states} and since substates of AB-\semisynchronized \ states are AB-\semisynchronized, too), we can apply the induction hypothesis here. Therefore,
\begin{equation*}
\forall \TSVCFullState{\mu, A}{} \in \VCGFullState{\mu, A}{} : \exists \TSVCFullState{\mu, B}{} \in \VCGFullState{\mu, B}{} : \TSVCBisimulation{\TSVCFullState{\mu, A}{}}{\TSVCFullState{\mu, B}{}}{n-1}
\end{equation*}
holds. Therefore, there exists a $\TSVCFullState{\mu, B}{} \in \VCGFullState{\mu, B}{}$ such that $\TSVCBisimulation{\TSVCFullState{\mu, A}{}}{\TSVCFullState{\mu, B}{}}{n-1}$ and we only have to show that there exists a transition $\TSVCFullState{B}{e, B} \TSVCTrans{\mu} \TSVCFullState{\mu, B}{}$.

Due to backward stability (Proposition \ref{prop:background:backward-stability}), the existence of a transition $\VCGFullState{B}{e, B} \VCGTrans{\mu_\varepsilon} \VCGFullState{\mu, B}{}$ with $\TSVCFullState{\mu, B}{} \in \VCGFullState{\mu, B}{}$ implies the existence of a transition $\TSVCFullState{B}{\text{help}, B} \TSVCTrans{\mu} \TSVCFullState{\mu, B}{}$ with $\TSVCFullState{B}{\text{help}, B} \in \VCGFullState{B}{e, B}$ and we have to show $\ClockValuation{\text{help}, B} = \ClockValuation{e, B}$.  

If $\mu \in \TimeDomain$, we know that $\ClockValuation{\text{help}, B} + \mu = \ClockValuation{\mu, B}$. Due to $\text{virt}(\ClockValuation{e, A}) = \text{virt}(\ClockValuation{e, B})$, $\ClockValuation{e, A} + \mu = \ClockValuation{\mu, A}$, and $\text{virt}(\ClockValuation{\mu, A}) = \text{virt}(\ClockValuation{\mu, B})$,  $\text{virt}(\ClockValuation{\text{help}, B}) = \text{virt}(\ClockValuation{e, B})$.

If $\mu \in \Sigma$, we know that $\text{virt}(\ClockValuation{\text{help}, B}(\chi_i)) = \text{virt}(\ClockValuation{\mu, B}(\chi_i))$. Due to $\text{virt}(\ClockValuation{e, A}) = \text{virt}(\ClockValuation{e, B})$, $\text{virt}(\ClockValuation{e, A}) = \text{virt}(\ClockValuation{\mu, A})$, and $\text{virt}(\ClockValuation{\mu, A}) = \text{virt}(\ClockValuation{\mu, B})$,  $\text{virt}(\ClockValuation{\text{help}, B}) = \text{virt}(\ClockValuation{e, B})$.

Since $\VCGFullState{B}{e, B}$ is AB-\synchronized, and $\text{virt}(\ClockValuation{\text{help}, B}) = \text{virt}(\ClockValuation{e, B})$ hold, $\ClockValuation{\text{help}, B} = \ClockValuation{e, B}$ is implied by the fact that a clock valuation within an AB-\semisynchronized{} symbolic state is uniquely identified by its virtual clocks. Since the second part can be shown analogously, the induction step holds.

Now we show the opposite direction.

\noindent\textbf{Hypothesis 2:} If $\VCGState{A}$ is non-empty and AB-\semisynchronized \ and $\VCGState{B}$ is non-empty and AB-\semisynchronized \ with $\VirtualEquivalence{A}{B}$, then
\begin{align*}
& \forall \TSVCState{A} \in \VCGState{A} : \exists \TSVCState{B} \in \VCGState{B} : \TSVCBisimulation{\TSVCState{A}}{\TSVCState{B}}{n} \land \\
& \forall \TSVCState{B} \in \VCGState{B} : \exists \TSVCState{A} \in \VCGState{A} : \TSVCBisimulation{\TSVCState{A}}{\TSVCState{B}}{n}
\end{align*}
implies 
\begin{equation*}
\VCGBisimulationState{A}{B}{n}.
\end{equation*}

\noindent\textbf{Base Case:} The base case is implied by the precondition $\VirtualEquivalence{A}{B}$.

\noindent\textbf{Induction Step:} Since $\VirtualEquivalence{A}{B}$ holds by the precondition, the first condition is fulfilled and we have to show that for the AB-\synchronized \ symbolic states $(\VCGFullState{A}{e, A}, \VCGFullState{B}{e, B}) = \text{sync}(\VCGFullState{A}{}, \VCGFullState{B}{})$, the existence of an outgoing transition $\VCGFullState{A}{e, A} \VCGTrans{\mu} \VCGFullState{\mu, A}{}$ implies the existence of a finite set of symbolic states $\{\VCGFullState{\mu, A}{\mu, 0, A}, \VCGFullState{\mu, A}{\mu, 1, A}, ...\}$ with $(\bigcup \Zone{\mu, i, A}) = \Zone{\mu, A}$ and for any $\VCGFullState{\mu, A}{\mu, i, A}$ exists an outgoing transition $\VCGFullState{B}{e, B} \VCGTrans{\mu} \VCGFullState{\mu, B}{\mu, B}$ such that there exists a zone $\Zone{\mu, i, B} \subseteq \Zone{\mu, B}$ with $\VCGBisimulationFullState{\mu, A}{\mu, i, A}{\mu, B}{\mu, i, B}{n-1}$. Since the second condition can be shown analogously, we skip this part.

Due to backward stability (Proposition \ref{prop:background:backward-stability}), for each $\TSVCFullState{\mu, A}{} \in \VCGFullState{\mu, A}{}$ exists a $\TSVCFullState{A}{e, A} \in \VCGFullState{A}{e, A}$, such that there exists a transition $\TSVCFullState{A}{e, A} \TSVCTrans{\mu_d} \TSVCFullState{\mu, A}{}$ (with $\mu_d \in \TimeDomain$ if $\mu = \varepsilon$ and $\mu_d = \mu$ else). Due to Definition \ref{def:virtual-clocks:sync-function-for-symbolic-states}, we know that there exists a $\TSVCFullState{B}{e, B} \in \VCGFullState{B}{e, B}$ such that there exists a $\TSVCFullState{A}{} \in \VCGFullState{A}{}$ and a $\TSVCFullState{B}{} \in \VCGFullState{B}{}$ such that $\text{virt}(\ClockValuation{A}) = \text{virt}(\ClockValuation{B})$  and $(\TSVCFullState{A}{e, A}, \TSVCFullState{B}{e, B}) = \text{sync}(\TSVCFullState{A}{}, \TSVCFullState{B}{})$ holds. Since $\TSVCFullState{B}{}$ is the only state with $\text{virt}(\ClockValuation{A}) = \text{virt}(\ClockValuation{B})$, we know that $\TSVCBisimulation{\TSVCFullState{A}{}{}}{\TSVCFullState{B}{}{}}{n}$ by the precondition. By Definition \ref{def:virtual-clocks:BoundedTimedBiSimulationForTSVCs}, the existence of the transition $\TSVCFullState{A}{e, A} \TSVCTrans{\mu_d} \TSVCFullState{\mu, A}{}$ implies the existence of a transition $\TSVCFullState{B}{e, B} \TSVCTrans{\mu_d} \TSVCFullState{\mu, B}{}$ with $\TSVCBisimulation{\TSVCFullState{\mu, A}{}{}}{\TSVCFullState{\mu, B}{}{}}{n-1}$. 

If $\mu = \varepsilon$, we define the finite set of symbolic states $\{\VCGFullState{\mu, A}{\mu, 0, A}, \VCGFullState{\mu, A}{\mu, 1, A}, ...\}$ to be equal to $\{\VCGFullState{\mu, A}{}\}$.  For the $\varepsilon$-transition $\VCGFullState{B}{e, B} \VCGTrans{\varepsilon} \VCGFullState{\mu, B}{}$, we know by forward stability $\TSVCFullState{\mu, B}{} \in \VCGFullState{\mu, B}{}$ and, therefore, $\VirtualComparison{\mu, A}{\mu, B}$ holds. Since the same can be shown for $B$ analogously, we can follow $\VirtualEquivalence{\mu, A}{\mu, B}$. Both symbolic states are non-empty and AB-\synchronized \ by Proposition \ref{prop:virtual-clocks:VCG:outgoing-transitions-of-symbolic-syncd-states}, and for any $\TSVCFullState{\mu, A}{}$ exists a $\TSVCFullState{\mu, B}{}$ with $\TSVCBisimulation{\TSVCFullState{\mu, A}{}{}}{\TSVCFullState{\mu, B}{}{}}{n-1}$ and vice versa. Therefore, we can apply the induction hypothesis, which implies $\VCGBisimulation{\VCGFullState{\mu, A}{}}{\VCGFullState{\mu, B}{\mu, B}}{n-1}$ and the induction step holds in this case.

If $\mu \in \Sigma$, we consider the set of regions $\{r_0, ..., r_m\}$ that have a non-empty intersection with $\Zone{\mu, A}$, denote $\Zone{\mu, i, A} = \Zone{\mu, A} \land r_i$, and split $\Zone{\mu, A}$ into the set of symbolic states $\{\VCGFullState{\mu, A}{\mu, 0, A}, ..., \VCGFullState{\mu, A}{\mu, m, A}\}$. Since $(\bigcup \Zone{\mu, i, A}) = \Zone{\mu, A}$ holds, we only have to show that for any $\Zone{\mu, i, A}$ exists a transition $\VCGFullState{B}{e, B} \VCGTrans{\mu} \VCGFullState{\mu, B}{}$ such that there exists a $\Zone{\mu, i, B} \subseteq \Zone{\mu, B}$ with $\VCGBisimulation{\VCGFullState{\mu, A}{\mu, i, A}}{\VCGFullState{\mu, B}{\mu, i, B}}{n-1}$.

We know that for any $\ClockValuation{\mu, A} \in \Zone{\mu, i, A}$ exists a $\ClockValuation{e, A} \in \Zone{e, A}$ with $\TSVCFullState{A}{e, A} \TSVCTrans{\mu} \TSVCFullState{\mu, A}{\mu, A}$ and $\text{virt}(\ClockValuation{e, A}) = \text{virt}(\ClockValuation{\mu, A})$. Moreover, there is only a single $\ClockValuation{e, A}$ fulfilling this property. Due to the precondition, we know that there exists a $\ClockValuation{e, B} \in \Zone{e, B}$ such that $\TSVCBisimulation{\TSVCFullState{A}{e, A}}{\TSVCFullState{B}{e, B}}{n}$. Therefore, there exists a transition $\TSVCFullState{B}{e, B} \TSVCTrans{\mu} \TSVCFullState{\mu, B}{}$ with $\TSVCBisimulation{\TSVCFullState{\mu, A}{}}{\TSVCFullState{\mu, B}{}}{n-1}$. 

By the hypotheses, used to prove Theorem~\ref{theorem:virtual-clocks:TSVCSimIffTASim}, we know that this implies that the corresponding states of the TLTS are timed bisimilar. The induction step is now implied by the results of \u{C}er\={a}ns~\cite{Cerans1992}, which shows that the elements of the regions of those states of the TLTS are, therefore, also pairwise bisimilar. Once again by the hypotheses, used to prove Theorem~\ref{theorem:virtual-clocks:TSVCSimIffTASim}, and the induction hypothesis, we can therefore follow $\VCGBisimulation{\VCGFullState{\mu, A}{\mu, i, A}}{\VCGFullState{\mu, B}{\mu, i, B}}{n-1}$ and the induction step holds.

Since the initial states of the TSVC are the only elements of the initial states of the VCGs, hypothesis 1 and hypothesis 2 imply Theorem~\ref{theorem:virtual-clocks:CheckingForBoundedBisim:BoundedSimIsCorrect}.
\end{proof}

We remind the reader of the two main challenges, described in Example \ref{ex:background:zone-graphs-do-not-contain-sufficient-information}. Using virtual equivalence,  we can compare the timing behavior of two TA even if the set of original clocks differ. The following example shows how the second main challenge, non-observability of clock resets, is solved by using virtual clocks.

\begin{example}
\label{ex:virtual_clocks:VCG:observability-of-clk-resets}
\begin{figure}
\centering
\scalebox{\VCGExampleAutomataScalingFactor}{
\begin{tikzpicture}
\tikzstyle{every node}=[font=\VCGTikzFontSize]
\tikzstyle{symstate} = [draw,rectangle,minimum width=3.5cm,inner sep=5pt,thick]
%
%
\node[symstate, align=center, initial left, initial text=] (02) {$l_0$\\$x_2=0$\\$\chi_0 = \chi_1 = 0$};
\node[symstate, align=center, right = 1.5cm of 02] (12) {$l_0$\\$x_2<\infty$\\$\chi_0 = \chi_1 = x_2$};
%
\node[symstate, align=center, right = 1.5cm of 12] (22) {$l_1$\\$x_2=0$\\$\chi_0 = \chi_1 < \infty$};
%
\node[symstate, align=center, left = 1.5cm of 02] (21) {$l_1$\\$x_1 = 0$\\$\chi_0 = \chi_1 = x_1$};
\node[symstate, align=center, left = 1.5cm of 21] (11) {$l_0$\\$x_1<\infty$\\$\chi_0 = \chi_1 = x_1$};
\node[symstate, align=center, left = 1.5cm of 11, initial left, initial text=] (01) {$l_0$\\$x_1 =0$\\$\chi_0 = \chi_1 = 0$};
\VCGExampleArrowDesc (01) --node[above, align=center]{$\varepsilon$} (11);
\VCGExampleArrowDesc (11) --node[above, align=center]{a} (21);
\VCGExampleArrowDesc (02) --node[above, align=center]{$\varepsilon$} (12);
\VCGExampleArrowDesc (12) --node[above, align=center]{a} (22);
\VCGExampleArrowDesc (11) to[loop above] node[below, align=center]{$\varepsilon$} (11);
\VCGExampleArrowDesc (12) to[loop below] node[above, align=center]{$\varepsilon$} (12);
\end{tikzpicture}
}
\caption{Extracts of the VCG of $A_1$ regarding $A_2$ (left) and the VCG of $A_2$ regarding $A_1$ (right)}
\label{fig:virtual_clocks:VCG:extracts-of-the-vcgs-of-a1-reg-a2-and-a2-reg-a1}
\end{figure}
In Example \ref{ex:background:zone-graphs-do-not-contain-sufficient-information}, we have seen that the zone graphs of $A_1$ and $A_2$ are the same despite the fact that $A_1$ and $A_2$ are not timed bisimilar. In Figure \ref{fig:virtual_clocks:VCG:extracts-of-the-vcgs-of-a1-reg-a2-and-a2-reg-a1} (small) extracts of the VCG $A_{1, \text{VCG}}$ of $A_1$ regarding $A_2$ and the VCG $A_{2, \text{VCG}}$ of $A_2$ regarding $A_1$ are shown. Obviously, the VCGs differ. In $A_{1, \text{VCG}}$, the original clock is not reset during the action labeled transition. Therefore, the virtual clock values are the same as the value of $x_1$, which is equal to zero. In contrast, the virtual clock values in $A_{2, \text{VCG}}$ can be greater than zero and the VCGs of $A_1$ and $A_2$ are not the same.
\end{example}

Using Theorem \ref{theorem:virtual-clocks:CheckingForBoundedBisim:BoundedSimIsCorrect}, we can also check for timed bisimilarity of non-deterministic TA.

\begin{example}
\label{ex:virtual-clocks:VCG:non-determ}
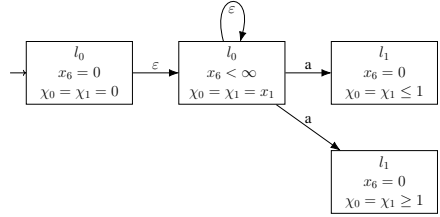
\begin{figure}
\centering
\scalebox{\VCGExampleAutomataScalingFactor}{
\begin{tikzpicture}
\tikzstyle{every node}=[font=\VCGTikzFontSize]
\tikzstyle{symstate} = [draw,rectangle,minimum width=3.5cm,inner sep=5pt,thick]
%
\node[symstate, align=center, initial left, initial text=] (06) {$l_0$\\$x_6 = 0$\\$\chi_0 = \chi_1 = 0$};
\node[symstate, align=center, right = 1.5cm of 06] (16) {$l_0$\\$x_6<\infty$\\$\chi_0 = \chi_1 = x_1$};
%
\node[symstate, align=center, right = 1.5cm of 16] (26) {$l_1$\\$x_6 = 0$\\$\chi_0 = \chi_1 \leq 1$};
\node[symstate, align=center, right = 1.5cm of 26] (36) {$l_1'$\\$x_6 = 0$\\$\chi_0 = \chi_1 \geq 1$};
\VCGExampleArrowDesc (06) --node[above, align=center]{$\varepsilon$} (16);
\VCGExampleArrowDesc (16) --node[above, align=center]{a} (26);
\VCGExampleArrowDesc (16) to[out=45, in=135] node[above, align=center]{a} (36);
\VCGExampleArrowDesc (16) to[loop above] node[below, align=center]{$\varepsilon$} (16);
\end{tikzpicture}
}
\caption{Extract of the VCG of $A_6$ regarding $A_2$}
\label{fig:virtual_clocks:VCG:extracts-of-the-vcgs-of-a2-reg-a6-and-a6-reg-a2}
\end{figure}
From Example \ref{ex:background:timed-bisim}, we know that $A_2$ and $A_6$ are timed bisimilar. Figure \ref{fig:virtual_clocks:VCG:extracts-of-the-vcgs-of-a2-reg-a6-and-a6-reg-a2} shows an extract of the VCG of $A_6$ regarding $A_2$. The VCG of $A_2$ regarding $A_6$ is equivalent to the VCG of $A_2$ regarding $A_1$ and, therefore, we do not redraw the extract but refer to Figure \ref{fig:virtual_clocks:VCG:extracts-of-the-vcgs-of-a1-reg-a2-and-a2-reg-a1}. The crucial part is the comparison of the symbolic states reached after the $\varepsilon$-transition. Since the symbolic states are virtually equivalent and $A_2A_6$-\synchronized, we know that applying the $\syncfunction$ function does not change the symbolic states. The outgoing $\varepsilon$-transitions of those symbolic states are self-loops and, obviously, we cannot find a contradiction using the $\varepsilon$-transitions if we cannot find a contradiction without using them (actually, we will use this fact in the next chapter). The symbolic state of $\VCG{A_{2,} {} }$ has a single outgoing transition $\VCGFullState{0}{\varepsilon, 2} \VCGTrans{a} \VCGFullState{1}{2}$, while the symbolic state of $\VCG{A_{6,} {} }$ has two outgoing transitions $\VCGFullState{0}{\varepsilon, 6} \VCGTrans{a} \VCGFullState{1}{6, \leq 1}$ and $\VCGFullState{0}{\varepsilon, 6} \VCGTrans{a} \VCGFullState{1}{6, \geq 1}$. We focus on the outgoing transition of $\VCG{A_{2, } {}}$.

According to Definition \ref{def:virtual-clocks:VCG:CheckingForBoundedBisim:BoundedBiSim}, we have to find a finite set of symbolic states $\{\VCGFullState{1}{0, 2}, \VCGFullState{1}{1, 2}, ...\}$ with $(\bigcup \Zone{i, 2}) = \Zone{2}$ and for any $\VCGFullState{1}{i, 2}$ exists an outgoing transition $\VCGFullState{0}{\varepsilon, 6} \VCGTrans{a} \VCGFullState{}{6}$ such that there exists a zone $\Zone{i, 6} \subseteq \Zone{6}$ with $\VCGBisimulation{\VCGFullState{1}{i, 2}}{\VCGFullState{}{i, 6}}{}$.

To do so, we split the zone $\Zone{2}$ into the subzones $\Zone{1, 2} = \Zone{2} \land \chi_0 \leq 1$ and $\Zone{2, 2} = \Zone{2} \land \chi_0 > 1$. We define $\Zone{1, 6} = \Zone{6, \leq 1}$ and $\Zone{2, 6} = (\Zone{6, \geq 1} \land \chi_0 > 1) \subseteq \Zone{6, \geq 1}$, such that the needed properties are satisfied (under the assumptions $\VCGBisimulation{\VCGFullState{1}{i, 2}}{\VCGFullState{1}{i, 6}}{}$, which actually hold).
\end{example}

We remind the reader that zone graphs can be infinite. This can also happen to VCGs. Given two TA $A$ and $B$ using the sets of clocks $C_A$ and $C_B$, we use the normalization function $k : C_A \cup C_B \cup \{\chi_0, ..., \chi_{|C_A| + |C_B| - 1}\} \rightarrow \mathbb{N}^{\geq 0}$, such that for any clock constraint $c \sim m$, which occurs in $A$ or $B$, $m < k(c)$, for any $i \in [0, |C_A| - 1] : k(C_A[i]) = k(\chi_i)$, and for any $i \in [0, |C_B| - 1] : k(C_B[i]) = k(\chi_{i + |C_A|})$ hold. Using this normalization function, it is implied by the definition of k-normalization as shown in \cite{Bengtsson2004} and the results of \u{C}er\={a}ns~\cite{Cerans1992} that two symbolic states are virtually bisimilar if and only if the k-normalized versions are virtually bisimilar. For more information see \cite{FullReport}. Therefore, we can use k-normalization during the check for virtual bisimulation and the VCGs become finite. In the next section, we describe an algorithm to effectively check virtual bisimulation on two input models.
\section{Checking for Timed Bisimulation}
\label{sec:CheckingForBisimulation}

\newcommand{\AlgExampleAutomataScalingFactor}{0.5}
\newcommand{\AlgExampleAutomataSpaceBetween}{10mm}
\newcommand{\AlgTikzFontSize}{\Large}
\newcommand{\AlgExampleArrowDesc}{\draw[-{Latex[length=3mm]}]}

During the implementation of the construction described in Definition \ref{def:virtual-clocks:VCG:CheckingForBoundedBisim:BoundedBiSim}, two open questions remain to be solved. The first of these can be observed in Example \ref{ex:virtual-clocks:VCG:non-determ}. While for small TA, it is straightforward to see how to split the target zone, we have to solve this problem in general. The second problem we solve is to provide evidence that alternating sequences are indeed permissible.

\subsection{Virtual Constraints}

To solve the first open question, we use so-called \emph{contradictions}. Each contradiction describes a symbolic substate of either $\VCGFullState{A}{}$ or $\VCGFullState{B}{}$, which is not bisimilar to any substate of the other symbolic state. If the set of contradictions is empty, the symbolic states are virtually bisimilar. \emph{Virtual constraints} describe the contradictions to virtual bisimulation of the given pair of symbolic states. The following example shows the desired input/output behavior of our algorithm.
\begin{example}
\label{ex:virtual-clocks:virtual-constraint:initial-example}
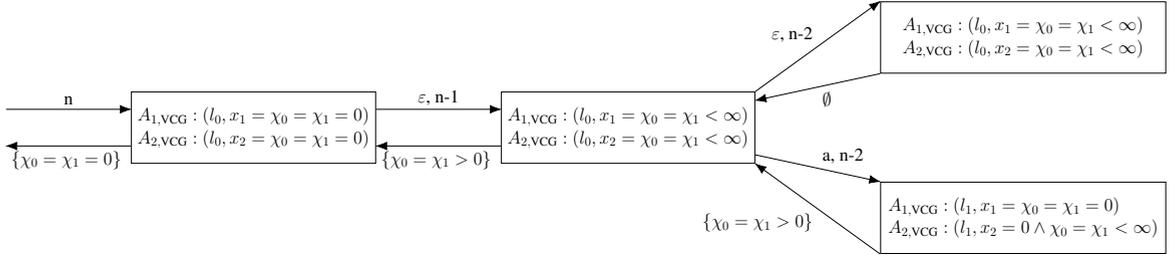
\begin{figure}
\centering
\scalebox{0.47}{
\begin{tikzpicture}
\tikzstyle{every node}=[font=\AlgTikzFontSize]
\tikzstyle{foo} = [draw,rectangle,minimum width=5cm, minimum height = 2cm, inner sep=5pt,thick, align=left]
\node[foo](calln) {$A_{1, \text{VCG}} : (l_0, x_1 = \chi_0 = \chi_1 = 0)$ \\ $A_{2, \text{VCG}} : (l_0, x_2 = \chi_0 = \chi_1 = 0)$};
\node[foo, right = 3.5cm of calln](callnminusone) {$A_{1, \text{VCG}} : (l_0, x_1 = \chi_0 = \chi_1 < \infty)$ \\$A_{2, \text{VCG}} : (l_0, x_2 = \chi_0 = \chi_1 < \infty)$};
\node[foo, above right=-0.2cm and 3.5cm of callnminusone, minimum width=8cm](epsilontransition) {$A_{1, \text{VCG}} : (l_0, x_1 = \chi_0 = \chi_1 < \infty)$ \\ $A_{2, \text{VCG}} : (l_0, x_2 = \chi_0 = \chi_1 < \infty)$};
\node[foo, below right=-0.2cm and 3.5cm of callnminusone, minimum width=8cm](atransition) {$A_{1, \text{VCG}} : (l_1, x_1 = \chi_0 = \chi_1 = 0)$ \\ $A_{2, \text{VCG}} : (l_1, x_2 = 0 \land \chi_0 = \chi_1 < \infty)$};
\AlgExampleArrowDesc ($(calln.north west) + (-3.5cm, -0.5cm)$) --node[above, align=center]{n} ($(calln.north west) + (0cm, -0.5cm)$);
\AlgExampleArrowDesc ($(calln.north east) + (0cm, -0.5cm)$) --node[above, align=center]{$\varepsilon$, n-1} ($(callnminusone.north west) + (0cm, -0.5cm)$);
\AlgExampleArrowDesc (callnminusone.north east) --node[above left, align=center]{$\varepsilon$, n-2} (epsilontransition.north west);
\AlgExampleArrowDesc (callnminusone) --node[above right, align=center]{a, n-2} (atransition.north west);
\AlgExampleArrowDesc (epsilontransition.south west) --node[below right, align=center]{ $ \emptyset$} (callnminusone);
\AlgExampleArrowDesc (atransition.south west) --node[below left, align=center]{$\{ \chi_0 = \chi_1 > 0 \}$} (callnminusone.south east);
\AlgExampleArrowDesc ($(callnminusone.south west) + (0cm, 0.5cm)$) --node[below, align=center] { $\{ \chi_0 = \chi_1 > 0 \}$ } ($(calln.south east) + (0cm, 0.5cm)$);
\AlgExampleArrowDesc ($(calln.south west) + (0cm, 0.5cm)$) --node[below, align=center] { $\{ \chi_0 = \chi_1 = 0 \}$ } ($(calln.south west) + (-3.5cm, 0.5cm)$);
\end{tikzpicture}
}
\caption{Illustration of Example \ref{ex:virtual-clocks:virtual-constraint:initial-example}}
\label{fig:algorithm:initial-example:Illustration}
\end{figure}

We reuse the extracts of the VCGs from Example~\ref{ex:virtual_clocks:VCG:observability-of-clk-resets} and apply our algorithm to the initial symbolic states to determine whether they are virtually bisimilar in order $n > 1$. This is illustrated on the left-hand side of Figure~\ref{fig:algorithm:initial-example:Illustration}, where the value $n$ is provided to the initial symbolic states. We have to check whether the zones of the initial symbolic states are virtually equivalent (which is indeed the case) and we have to check all outgoing transitions. As we are only considering transitions depicted in Figure~\ref{fig:virtual_clocks:VCG:extracts-of-the-vcgs-of-a1-reg-a2-and-a2-reg-a1}, we focus on the targets of the $\varepsilon$-transitions and check them for virtual bisimulation in order $n-1$, as illustrated by the rectangle in the centre of Figure~\ref{fig:algorithm:initial-example:Illustration}. Since the zones of the symbolic states are virtually equivalent, we have to check the outgoing transitions.

The outgoing $\varepsilon$-transitions (illustrated by the upper rectangle of Figure~\ref{fig:algorithm:initial-example:Illustration}) of the symbolic states are both self-loops. Consequently, to check whether virtual bisimulation in order $n-1$ holds for this pair of symbolic states, we have to check whether virtual bisimulation in order $n-2$ holds for the same pair of symbolic states. Should a contradiction arise within the subsequent $n-2$ transitions, it will inevitably be found when we check the symbolic states for virtual bisimulation in order $n-1$. Therefore, the application of the algorithm may be terminated at this point.

The transitions labeled with $a$ (illustrated by the lower rectangle of Figure~\ref{fig:algorithm:initial-example:Illustration}) lead into a pair of symbolic states that are not virtually bisimilar in order $n-2$, as they are not virtually equivalent. The virtual constraint $(\chi_0 = \chi_1 > 0)$ represents the non-overlapping part of the zones. Consequently, we expect the algorithm to return a set that contains exactly this constraint.

Since our checks for virtual bisimulation in order $n-2$ returned a contradiction, it is necessary to analyze which parts of the symbolic states in the centre of Figure~\ref{fig:algorithm:initial-example:Illustration} have a path leading to that contradiction. In $A_{2, \text{VCG}}$, any symbolic substate of $(l_0, (\chi_0 = \chi_1 = x_2) \land (\chi_0 = \chi_1 > 0))$ has an outgoing action transition to the contradiction we have already found. Consequently, $\{(\chi_0 = \chi_1 > 0)\}$ should be returned when our algorithm is applied to the symbolic states depicted in the centre of Figure~\ref{fig:algorithm:initial-example:Illustration}. Since the initial symbolic states have an outgoing transition to this contradiction, the set $\{(\chi_0 = \chi_1 = 0)\}$ is returned when the VCGs are checked for virtual bisimulation.
\end{example}

If the algorithm has found a contradiction, it stops. It should be noted that the algorithm returns a contradiction if one exists, but it does not guarantee the identification of all contradictions. The termination criterion described in Example~\ref{ex:virtual-clocks:virtual-constraint:initial-example} is analogous to the well-known termination criterion when checking for untimed bisimulation \cite{SangiorgiIntroToBisim}. When we check for virtual bisimulation in order $n$, we assume that virtual bisimulation in order $n-1$ holds for this pair of symbolic states. If there is no contradiction, this assumption is validated by Proposition \ref{prop:virtual-clocks:bisimInOrderPlusImpliesBisimInOrder}. Conversely, the discovery of a contradiction indicates that the assumption was an overfit and, thus, the identified contradiction remains valid.
The following definition introduces virtual constraints and an operator to extract the virtual constraint of a zone. All operators we define in this section can be implemented using DBMs and the operations described by Bengtsson and Yi~\cite{Bengtsson2004}. For more information see \cite{FullReport}.

\begin{definition}[$\text{extract-virtual-constraint}$ Operator]
\label{def:virtual-clocks:CheckingForBoundedBisim:Extract-Virtual-Constraint-Operator}
Assume a set of clocks $C$. A constraint $\phi \in \mathcal{B}(\{\chi_0, \chi_1, ..., \chi_{i-1}\})$ is called the virtual constraint of a zone $\Zone{} \in \mathcal{D}(C \cup \{\chi_0, ..., \chi_{i-1}\})$, if and only if the following conditions hold:
\begin{enumerate}
\item (Soundness) if $\ClockValuation{\text{virtual}} \models \phi$, then there exists a $\ClockValuation{} \in D$ such that $\text{virt}(\ClockValuation{}) = \text{virt}(\ClockValuation{\text{virtual}})$,
\item (Completeness) if $\ClockValuation{} \in D$, then $u \models \phi$,
\item (Simple Structure) $\phi$ consists of up to $(i + 1)^2$ conjugated atomic constraints of the form $\phi_{j, k} = (\chi_{j - 1} - \chi_{k - 1} \preccurlyeq_{j, k} n_{j, k})$, with $i, j \in [1, i]$, and $\phi_{j, 0} = \chi_{j-1} \preccurlyeq_{j, 0} n_{j, 0}$ respectively $\phi_{0, j} = - \chi_{j-1} \preccurlyeq_{0, j} n_{0, j}$ with $\forall j, k \in [0, i] : \preccurlyeq_{j, k} \in \{<, \leq\} \land n_{j, k} \in \mathbb{N}^{\geq 0}$ such that $\phi = \bigwedge_{j, k \in [0, i]} \phi_{j, k}$, and
\item (Canonical) none of the atomic constraints of a virtual constraint can be strengthened without changing the solution set.
\end{enumerate}
The operator $\text{extract-virtual-constraint}: \mathcal{D}(C \cup \{\chi_0, ..., \chi_{i-1}\}) \rightarrow \mathcal{B}(\{\chi_0, \chi_1, ..., \chi_{i-1}\})$ takes a zone $D \in \mathcal{D}(C \cup \{\chi_0, ..., \chi_{i-1}\})$ and returns the virtual constraint of $D$.
\end{definition}

$\VirtualEquivalence{A}{B}$ holds if and only if $[\text{extract-virtual-constraint}(\Zone{A})] = [\text{extract-virtual-constraint}( \allowbreak \Zone{B})]$.
We may use the shorthand notation $\text{extract-virtual-constraint}(\{\Zone{0}, ..., \Zone{m}\})$ instead of $\{\text{extract-} \allowbreak \text{virtual-} \allowbreak \text{constraint}(\Zone{0}), ..., \text{extract-virtual-constraint}(\Zone{m})\}$.
We apply this convention for any operator that we introduce from now on. It should be noted that any AB-\synchronized \ symbolic state is uniquely identified by its location and its virtual constraint.

No logical disjunction is defined for virtual constraints. This has practical reasons as they are stored using DBMs, which are unable to handle such operations. To solve this issue, sets of virtual constraints are used, where a set of virtual constraints represents a disjunction of all included virtual constraints. Furthermore, we introduce the shorthand notation $\phi_{A} \land \lnot \phi_B$, which describes a set of virtual constraints such that a clock valuation $\ClockValuation{}$ satisfies exactly one of the resulting virtual constraints if and only if $\ClockValuation{} \models \phi_{A}$ and $\ClockValuation{} \not\models \phi_{B}$ holds.

\subsection{Virtual Bisimulation in Order $n$}

Using virtual constraints, we can present the \textsc{check-for-virt-bisim-in-order-impl} function in Algorithm~\ref{alg:algorithm:check-for-virt-bisim-in-order}. We remind the reader of the termination criterion described in Example~\ref{ex:virtual-clocks:virtual-constraint:initial-example}. In order to apply this termination criterion, a set of pairs of symbolic states that we assume to be virtual bisimilar in order $n$ is required. This set is referred to as \textit{visited}.

The first if-statement addresses the scenarios where either $n=0$ holds or where the symbolic states are not virtually equivalent. In both cases, the non-overlapping parts of the zones are returned. If $n=0$ and the zones are virtually equivalent, this is the empty set.

If $n>0$ and the symbolic states are virtually equivalent, we proceed with the application of the $\syncfunction{}$ function according to Definition \ref{def:virtual-clocks:VCG:CheckingForBoundedBisim:BoundedBiSim}. Afterwards, we check whether the given symbolic states are part of the assumptions, which are represented by the elements of the visited set. For this, we remind the reader that the k-normalized symbolic states are virtually bisimilar if and only if the original symbolic states are virtually bisimilar. Since there is only a finite number of normalized symbolic states, visited has a finite maximum size. If the normalized symbolic states are part of the assumptions, no contradiction is returned. In the absence of such an assumption, the normalized symbolic states are added to $\text{visited}$ for the reasons we have previously discussed in Example~\ref{ex:virtual-clocks:virtual-constraint:initial-example}.

For enhanced readability, we introduce the shorthand notation $\textsc{func}$ for the recursive call with the extended $\text{visited}$ set and order $n-1$. We first check the outgoing $\varepsilon$-transitions by applying \textsc{func} to the targets of the $\varepsilon$-transitions. $\textsc{func}$ returns the contradictions for $\VCGFullState{A}{\varepsilon, A}$ and $\VCGFullState{B}{\varepsilon, B}$. Since we are interested in the resulting contradictions for $\VCGFullState{A}{e, A}$ and $\VCGFullState{B}{e, B}$, we have to find those parts of $\VCGFullState{A}{e, A}$ and $\VCGFullState{B}{e, B}$ that have an outgoing transition leading to the contradictions. We do this by utilizing the so-called $\text{revert-}\varepsilon\text{-transition}$ operator, which uses the established past operator \cite{Bengtsson2004}. If we have found a contradiction for $\VCGFullState{A}{e, A}$ and $\VCGFullState{B}{e, B}$, we need to revert the $\syncfunction{}$ operation and return the resulting contradictions. Since Proposition~\ref{prop:virtual-clocks:VCG:correctness-sync-function} shows that $\syncfunction{}$ is essentially a reset operation, we can revert it with the well-known $\text{free}$ operation \cite{Bengtsson2004}. If we have not found a contradiction, we go on with the outgoing action transitions.

\begin{algorithm}[H]
  \caption{\textsc{check-for-virt-bisim-in-order-impl} function}
  \label{alg:algorithm:check-for-virt-bisim-in-order}
  \begin{algorithmic}[1]
  \LeftComment{Let $\VCGFullState{A}{}$, $\VCGFullState{B}{}$ be AB-\semisynchronized{} symbolic states,}
  \LeftComment{$k : C_A \cup C_B \cup \{\chi_{0}, ..., \chi_{|C_A| + |C_B| - 1}\} \rightarrow \mathbb{N}^{\geq 0}$,$\text{visited}$ be a set, and $n \in \mathbb{N}^{\geq 0}$.}
  \LeftComment{The return value of \textsc{check-for-virt-bisim-in-order-impl} is a set of virtual constraints.}
  \Function{check-for-virt-bisim-in-order-impl${}_{\VCG{A}, \VCG{B}, k, \textup{visited}, n}$}{$(l_A, \Zone{A})$, $(l_B, \Zone{B})$}
    \If{$n = 0 \lor \lnot (\VirtualEquivalence{A}{B})$}
      \State \Return $\text{extract-virtual-constraint}(\Zone{A}) \land \lnot \ \text{extract-virtual-constraint}(\Zone{B}) \cup $
      \State \qquad \qquad \qquad \qquad $\text{extract-virtual-constraint}(\Zone{B}) \land \lnot \ \text{extract-virtual-constraint}(\Zone{A})$
    \EndIf
    \State $(\VCGFullState{A}{e, A}, \VCGFullState{B}{e, B}) \gets \syncfunction(\VCGFullState{A}{}, \VCGFullState{B}{})$
    \State
    \State $\VCGFullState{A}{\text{norm}, A} \gets (l_A, \text{norm}(\Zone{e, A}, k))$, $\VCGFullState{B}{\text{norm}, B} \gets (l_B, \text{norm}(\Zone{e, B}, k))$
    \IfThen{($(\VCGFullState{A}{\text{norm}, A}, \VCGFullState{B}{\text{norm}, B}) \in \text{visited}$)}{\Return $\emptyset$}
    \State $\text{new-visited} \gets \text{visited} \cup \{(\VCGFullState{A}{\text{norm}, A}, \VCGFullState{B}{\text{norm}, B})\}$
  
    \State $\textsc{func} = \textsc{check-for-virt-bisim-in-order-impl}_{\VCG{A}, \VCG{B}, k, \textup{new-visited}, n-1}$
    \State
    \LeftComment{Assume $\VCGFullState{A}{e, A} \VCGTrans{\varepsilon} \VCGFullState{A}{\varepsilon, A}$ and $\VCGFullState{B}{e, B} \VCGTrans{\varepsilon} \VCGFullState{B}{\varepsilon, B}$.}
    \State $\varepsilon\text{-result} \gets $\Call{func}{$(l_A, \Zone{\varepsilon, A})$, $(l_B, \Zone{\varepsilon, B})$}
    \State $\text{sync-cond} \gets \text{revert-}\varepsilon \text{-trans}(\Zone{e, A}, \Zone{\varepsilon, A}, \varepsilon\text{-result} \land \text{extract-virtual-constraint}(\Zone{\varepsilon, A})))$
    \State \qquad \qquad \qquad $\cup \ \text{revert-}\varepsilon \text{-trans}(\Zone{e, B}, \Zone{\varepsilon, B}, \varepsilon\text{-result} \land \text{extract-virtual-constraint}(\Zone{\varepsilon, B}))$
    
    \IfThen{($\text{sync-cond} \neq \emptyset$)}{\Return $\revertsyncfunction(\VCGFullState{A}{}, \VCGFullState{B}{}, \text{sync-cond})$}
    \State
    \ForAll{$\sigma \in \Sigma$}
      \LeftComment{$\text{out-trans}(\sigma, \VCGFullState{}{})$ is the set of all outgoing transitions of $\VCGFullState{}{}$ labeled with $\sigma$.}
      \State $\text{sync-cond} \gets \textsc{check-outgoing-transitions-impl}_{\VCG{A}, \VCG{B}, \textsc{func}} (\Zone{e, A}, \Zone{e, B},$
      \State \qquad \qquad \qquad \qquad $\text{out-trans}(\sigma, \VCGFullState{A}{e, A}), \text{out-trans}(\sigma, \VCGFullState{B}{e, B}))$
      \IfThen{($\text{sync-cond} \neq \emptyset$)}{\Return $\revertsyncfunction(\VCGFullState{A}{}, \VCGFullState{B}{}, \text{sync-cond})$}
    \EndFor
    \State \Return $\emptyset$
  \EndFunction{}
  \end{algorithmic}
  \end{algorithm}

To check the action transitions, we iterate through all available actions and provide all outgoing transitions of a certain action to the $\textsc{check-outgoing-transitions-impl}$ operator. This operator will be explained in the subsequent section. It returns the contradictions of the outgoing action transitions. If we find a contradiction, we revert the $\syncfunction{}$ and return the result. Otherwise, we proceed with the next action. If all actions are checked without finding a contradiction, no contradiction exists and the empty set is returned.
To check for virtual bisimulation of a certain pair of symbolic states, we use the $\textsc{check-for-virt-}$ $\textsc{bisim-in-order-impl}$ function with an empty $\text{visited}$ set. 
The following example illustrates the algorithm in practice.

\begin{example}
We extend Example~\ref{ex:virtual-clocks:virtual-constraint:initial-example} here. First, we calculate
\begin{align*}
\textsc{check-for-virt-bisim-} & \textsc{in-order-impl}_{A_{1, \text{VCG}}, A_{2, \text{VCG}}, k, \emptyset, n} (\\
& (l_0, x_1 = \chi_0 = \chi_1 = 0), (l_0, x_2 = \chi_0 = \chi_1 = 0)).
\end{align*}

Since we assumed $n > 1$ and since the symbolic states are virtually equivalent, we omit the first if-statement. The symbolic states are already AB-\synchronized{} and the $\syncfunction{}$ function does not affect them. All values are equal to zero and k-normalization does not change them. Due to the fact that $\text{visited}$ is empty, the second if-statement is also omitted and the initial symbolic states are added to $\text{new-visited}$. Since we only consider the transitions depicted in Figure~\ref{fig:algorithm:initial-example:Illustration}, the outgoing action transitions at the initial symbolic states are ignored and we only check the outgoing $\varepsilon$-transitions.

We now apply the recursive call with parameters $n-1$ and the new $\text{visited}$ set to the targets of the $\varepsilon$-transitions. Since we assumed $n > 1$, we can conclude $(n-1) > 0$. Furthermore, since the zones are virtually equivalent, we can omit the first if-statement. Since the symbolic states are already AB-\synchronized{}, the $\syncfunction{}$ function does not change the symbolic states. Once again, k-normalization has no impact here, and the symbolic states included in $\text{new-visited}$ are not equal to the symbolic states under test. Consequently, the second if-statement can be ignored. The algorithm now generates another visited set. For reasons of uniqueness, we call this set "$\text{last-visited}$". $\text{last-visited}$ contains the pair of initial symbolic states and the current pair of symbolic states. We first check the outgoing $\varepsilon$-transitions. We consider two scenarios for the function
\begin{align*}
\textsc{check-for-virt-bisim-} & \textsc{in-order-impl}_{A_{1, \text{VCG}}, A_{2, \text{VCG}}, k, \text{last-visited}, n-2}( \\
& (l_0, x_1 = \chi_0 = \chi_1 < \infty), (l_0, x_2 = \chi_0 = \chi_1 < \infty)).
\end{align*}
If the statement $n-2 = 0$ is true, the empty set is returned as the zones are virtually equivalent. If the statement $n-2 > 0$ is true, the first if-statement is bypassed and since the given pair is an element of $\text{last-visited}$, the empty set is returned by the second if-statement. Therefore, in both cases, the expected value as described in Example~\ref{ex:virtual-clocks:virtual-constraint:initial-example} is returned.

We now check the outgoing action transition by calculating
\begin{align*}
\textsc{check-for-virt-bisim-} & \textsc{in-order-impl}_{A_{1, \text{VCG}}, A_{2, \text{VCG}}, k, \text{last-visited}, n-2}( \\
& (l_1, x_1 = \chi_0 = \chi_1 = 0), (l_1, x_2 = 0 \land \chi_0 = \chi_1 < \infty)).
\end{align*}
Since the zones are not virtually equivalent, the statement
\begin{align*}
& (\text{extract-virtual-constraint}(x_1 = \chi_0 = \chi_1 = 0) \land \\
& \qquad \lnot \text{extract-virtual-constraint}(x_2 = 0 \land \chi_0 = \chi_1 < \infty)) \\
 \cup \ & (\text{extract-virtual-constraint}(x_2 = 0 \land \chi_0 = \chi_1 < \infty) \land \\
 & \qquad \lnot \text{extract-virtual-constraint}(x_1 = \chi_0 = \chi_1 = 0)) \\
= & \{\chi_0 = \chi_1 > 0\}
\end{align*}
is returned, which is the return value we expected in Example~\ref{ex:virtual-clocks:virtual-constraint:initial-example}. Consequently, the second function call finds a contradiction returned by $\textsc{check-outgoing-transitions-impl}$ and returns it to the initial function call. The $\text{revert-}\varepsilon \text{-transition}$ operator is then applied, resulting into the contradiction $\chi_0 = \chi_1 = 0$, as the returned contradiction can be reached from the new contradiction. Ultimately, since the $\syncfunction{}$ function had no impact on the symbolic states, the $\revertsyncfunction{}$ also has no impact on the returned contradiction and the initial function call returns the contradiction we expected in  Example~\ref{ex:virtual-clocks:virtual-constraint:initial-example}.
\end{example}

The following section describes the \textsc{check-outgoing-transitions-impl} operator.

\subsection{Checking Two Sets of Transitions}

The $\textsc{check-outgoing-transitions-impl}$ operator is presented in Algorithm~\ref{alg:virtual-clocks:check-outgoing-transitions}. The cases in which at least one of the sets is empty are skipped here, as they are straightforward. The operator takes as input two sets of outgoing action transitions, both using the same label, and a function that checks for virtual bisimulation in order $n-1$. If contradictions exist, at least one of them is returned. 

While implementing this operator is straightforward in the case of deterministic transitions, the non-deterministic case is more challenging, as we know from Example~\ref{ex:virtual-clocks:VCG:non-determ}. Definition~\ref{def:virtual-clocks:VCG:CheckingForBoundedBisim:BoundedBiSim} requires the existence of a split satisfying the specified properties. If such a split exists, it must be identified. However, if no such split exists, we have to show nonexistence. Both tasks are non-trivial in the non-deterministic case, since the algorithm presented in the previous section only guarantees finding a contradiction if one exists, but does not guarantee that all contradictions will be detected.

The main idea of the operator is to analyze each pair of transitions, with one transition drawn from each set, until a contradiction is found that cannot be resolved by analyzing a different split or until it is clear that no such contradiction exists. To achieve this, we initialize two matrices, each with as many rows as the first transition set has elements and as many columns as the second transition set has elements. The first matrix contains all contradictions found for each pair of transitions, while the second matrix contains a boolean for each pair of transitions, indicating whether all contradictions for that particular pair have already been found. Since we do not know anything about any pair in the beginning, we initialize each element of the first matrix with the empty set (no contradictions found yet) and each element of the second matrix with false.

\begin{algorithm}[!ht]
  \caption{\textsc{check-outgoing-transitions-impl} function}
  \label{alg:virtual-clocks:check-outgoing-transitions}
  \begin{algorithmic}[1]
  \LeftComment{Let $\text{trans}_A$ and $\text{trans}_B$ be two sets of transitions and}
  \LeftComment{$\textsc{func} : (L_A \times \mathcal{D}(\CCupChi{A})) \times (L_B \times \mathcal{D}(\CCupChi{B})) \rightarrow 2^{\mathcal{B}(\{\chi_0, ..., \chi_{|C_A| + |C_B| - 1}\})}$}
  \LeftComment{\textsc{check-outgoing-transitions-impl} returns a set of virtual constraints}
  \Function{check-outgoing-transitions-impl${}_{\VCG{A}, \VCG{B}, \textsc{func}}$}{$\Zone{A}$, $\Zone{B}$, $\text{trans}_A$, $\text{trans}_B$}
    \State $\text{found-cont} \gets \text{matrix-of-size}(|\text{trans}_A| \times |\text{trans}_B|, \emptyset)$ \Comment{matrix of empty sets of vc}
    \State $\text{finished} \gets \text{matrix-of-size}(|\text{trans}_A| \times |\text{trans}_B|, \text{false})$ \Comment{matrix of booleans, all set to false}
  
    \Repeat
      \ForAll{$\VCGFullState{A}{} \VCGTrans{\sigma} \VCGFullState{\sigma, A}{} \in \text{trans}_A$}
          \ForAll{$\VCGFullState{B}{} \VCGTrans{\sigma} \VCGFullState{\sigma, B}{} \in \text{trans}_B$}
\LeftComment{Let $i_A$ and $i_B$ denote the index of the transition from $A$ and $B$, respectively.}
              \IfThen{$\text{finished}[i_A][i_B]$}{$\text{continue with the next pair}$}
              \State $\Zone{\text{eq}, A} \gets \Zone{\sigma, A} \land \text{extract-virtual-constraint}(\Zone{\sigma, B})$
              \State $\Zone{\text{eq}, B} \gets \Zone{\sigma, B} \land \text{extract-virtual-constraint}(\Zone{\sigma, A})$
              \State $\text{cont} \gets \textsc{check-target-pair-impl}_{\VCG{A}, \VCG{B}, \textsc{func}}($
              \State \qquad \qquad $\VCGFullState{\sigma, A}{\text{eq}, A}, \VCGFullState{\sigma, B}{\text{eq}, B}, \text{found-cont}[i_A][i_B])$
              \State $\text{finished}[i_A][i_B] \gets (\text{cont} = \emptyset)$ \Comment{finished is set to true if cont is empty.}
              \State $\text{found-cont}[i_A][i_B] \gets \text{found-cont}[i_A][i_B] \cup \text{cont}$
        \EndFor
      \EndFor
      \State $\text{contradiction} \gets \text{search-contradiction}(\Zone{A}, \Zone{B}, \text{trans}_A, \text{trans}_B, \text{found-cont})$
      \IfThen{$\text{contradiction} \neq \emptyset$}{\Return \text{contradiction}}
    \Until{$\text{no-contradiction-possible}(\Zone{A}, \Zone{B}, \text{trans}_A, \text{trans}_B, \text{found-cont}, \text{finished})$}
    \State \Return $\emptyset$
  \EndFunction{}
  \end{algorithmic}
\end{algorithm}

The for-loops iterate through each pair of transitions. To identify the corresponding matrix entries, we assume that each transition has a unique index within its respective set, which is denoted as $i_A$ and $i_B$. If a pair is already marked as finished, we skip it. 

Otherwise, the targets are made virtually equivalent and the \textsc{check-target-pair-impl} function is used. This function takes the virtually equivalent symbolic states and the already found contradictions of this pair and returns a set of new contradictions, if there exist more, or the empty set if no further contradictions exist. The \textsc{check-target-pair-impl} function will be explained in the next section. If there are no more contradictions, we mark this pair as finished. Otherwise, the newly found contradictions are added to the contradictions of this pair.

The search for contradictions is repeated until either a contradiction is found that cannot be removed by analyzing a different split or no such contradiction can be found anymore. This is indicated by the outer loop, which repeats until $\text{no-contradiction-possible}$ returns $\text{true}$, and the last if-statement, which returns the contradictions if any are present. The $\text{search-contradiction}$ operator iterates through each transition of $\text{trans}_A$ and $\text{trans}_B$, building the intersection of the contradiction sets belonging to this transition. If the result is not empty, it contains the contradictions that are contained by all sets of contradictions regarding a certain transition. Therefore, any split, done accordingly to Definition~\ref{def:virtual-clocks:VCG:CheckingForBoundedBisim:BoundedBiSim}, will have these contradictions and we return them. The $\text{no-contradiction-possible}$ operator uses only those entries in $\text{found-cont}$ for which the corresponding $\text{finished}$ entry is $\text{true}$. Subsequently, the $\text{search-contradiction}$ operator is invoked. If the $\text{search-contradiction}$ operator cannot find any contradiction using only those entries for which $\text{finished}$ is $\text{true}$, we can conclude that no contradictions will be found in the subsequent iterations, as the contradiction sets for those entries stay the same. Consequently, we terminate the outer loop. It should be noted that the outer loop will only be executed more than once if both sets of transitions contain non-deterministic choices. If one of the sets contains only deterministic choices, then any pair checked will either return a contradiction that is directly a contradiction for the deterministic choice, or none of the pairs will return a contradiction.

The practical application of $\textsc{check-outgoing-transitions-impl}$ is illustrated in the following example. We use the nondeterministic TA $A_5$ from Figure~\ref{fig:examples} for this.

\begin{example}
\label{ex:algorithm:checking-two-sets:nondeterministic-ex}
\begin{figure}
\centering
\scalebox{\AlgExampleAutomataScalingFactor}{
\begin{tikzpicture}
\tikzstyle{every node}=[font=\AlgTikzFontSize]
\tikzstyle{foo} = [draw,rectangle,minimum width=3.5cm, minimum height = 2cm, inner sep=5pt,thick]
\node[foo, minimum height = 3cm, align=center](callnminusone) {$(l_0, x = y = \chi_0 = \chi_1 = \chi_2 = \chi_3 < \infty)$ \\ $(l_0, x = y = \chi_0 = \chi_1 = \chi_2 = \chi_3 < \infty)$};
\node[foo, align=center, above=2cm of callnminusone, minimum width=6cm](calln) {$(l_0, x = y = \chi_0 = \chi_1 = \chi_2 = \chi_3 = 0)$ \\ $(l_0, x = y = \chi_0 = \chi_1 = \chi_2 = \chi_3 = 0)$};
\node[foo, align=center, above right=3cm and 5cm of callnminusone, minimum width=6cm](bothleft) {$(l_1', x = y = 0 \land \chi_0 = \chi_1 = \chi_2 = \chi_3 < \infty)$ \\ $(l_1', x = y = 0 \land \chi_0 = \chi_1 = \chi_2 = \chi_3 < \infty)$};
\node[foo, align=center, right=2cm of bothleft, minimum width=6cm](bothleftdots) {$\dots$};
\node[foo, align=center, below = 1cm of bothleft, minimum width=6cm](leftright) {$(l_1', x = y = 0 \land \chi_0 = \chi_1 = \chi_2 = \chi_3 < \infty)$ \\ $(l_1, x = 0 \land y = \chi_0 = \chi_1 = \chi_2 = \chi_3 < \infty)$};
\node[foo, align=center, right=2cm of leftright, minimum width=6cm](leftrightdots) {$\dots$};
\node[foo, align=center, below = 1cm of leftright, minimum width=6cm](rightleft) {$(l_1, x = 0 \land y = \chi_0 = \chi_1 = \chi_2 = \chi_3 < \infty)$ \\ $(l_1', x = y = 0 \land \chi_0 = \chi_1 = \chi_2 = \chi_3 < \infty)$};
\node[foo, align=center, right=2cm of rightleft, minimum width=6cm](rightleftdots) {$\dots$};
\node[foo, align=center, below = 1cm of rightleft, minimum width=6cm](bothright) {$(l_1, x = 0 \land y = \chi_0 = \chi_1 = \chi_2 = \chi_3 < \infty)$ \\ $(l_1, x = 0 \land y = \chi_0 = \chi_1 = \chi_2 = \chi_3 < \infty)$};
\node[foo, align=center, right=2cm of bothright, minimum width=6cm](bothrightdots) {$\dots$};
\AlgExampleArrowDesc ($(calln.north) + (1cm, 1.5cm)$) --node[left, align=center]{n} ($(calln.north) + (1cm, 0)$);
\AlgExampleArrowDesc ($(calln.north) + (-1cm, 1.5cm)$) --node[left, align=center]{$\emptyset$} ($(calln.north) + (-1cm, 0)$);
\AlgExampleArrowDesc ($(calln.south) + (1cm, 0)$) --node[left, align=center]{$\varepsilon$, n-1} ($(callnminusone.north) + (1cm, 0)$);
\AlgExampleArrowDesc ($(callnminusone.north) + (-1cm, 0)$) --node[left, align=center]{$\emptyset$} ($(calln.south) + (-1cm, 0)$);
\AlgExampleArrowDesc ($(callnminusone.north east)$) --node[left, align=center, yshift=2cm, xshift=1.5cm]{a, n-2} ($(bothleft.north west)$);
\AlgExampleArrowDesc ($(bothleft.south west)$) --node[below right, align=center, yshift=1.2cm, xshift=1.5cm]{$\emptyset$} ($(callnminusone.north east) + (0, -0.5cm)$);
\AlgExampleArrowDesc ($(bothleft.east)$) --node[above, align=center]{...} ($(bothleftdots.west)$);
\AlgExampleArrowDesc ($(callnminusone.north east) + (0, -1cm)$) --node[right, align=center, xshift=0.2cm]{a, n-2} ($(leftright.north west)$);
\AlgExampleArrowDesc ($(leftright.south west)$) -- ($(callnminusone.north east) + (0, -1.5cm)$);
\node[below right = 0.1cm and -1cm of leftright.south west, align=center, xshift=0.1cm]{$\{ \chi_0 = \chi_1 = \chi_2 = \chi_3 < \infty \}$};
\AlgExampleArrowDesc ($(leftright.east)$) --node[above, align=center]{...} ($(leftrightdots.west)$);
\AlgExampleArrowDesc ($(callnminusone.south east) + (0, 1.5cm)$) --node[below, align=center, xshift=1cm]{a, n-2} ($(rightleft.north west)$);
\AlgExampleArrowDesc ($(rightleft.south west)$) -- ($(callnminusone.south east) + (0, 1cm)$);
\node[below right = -0.1cm and -0.5cm of rightleft.south west, align=center, xshift=-0.2cm]{$\{ \chi_0 = \chi_1 = \chi_2 = \chi_3 < \infty \}$};
\AlgExampleArrowDesc ($(rightleft.east)$) --node[above, align=center]{...} ($(rightleftdots.west)$);
\AlgExampleArrowDesc ($(callnminusone.south east) + (0, 0.5cm)$) -- ($(bothright.north west)$);
\node[below left = -0.1cm and 0.2cm of bothright.north west, align=center]{a, n-2};
\AlgExampleArrowDesc ($(bothright.south west)$) --node[below left, align=center]{$\emptyset$} ($(callnminusone.south east)$);
\AlgExampleArrowDesc ($(bothright.east)$) --node[above, align=center]{...} ($(bothrightdots.west)$);
\end{tikzpicture}
}
\caption{Illustration of Example~\ref{ex:algorithm:checking-two-sets:nondeterministic-ex}}
\label{fig:algorithm:checking-two-sets:nondeterministic-ex}
\end{figure}
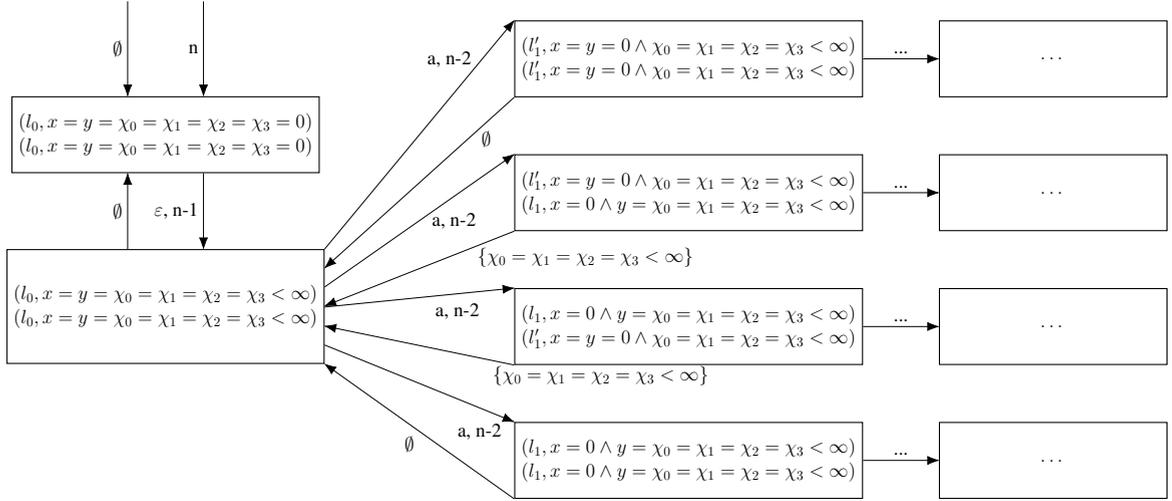
We use the $\textsc{check-for-virt-bisim-in-order-impl}$ function to check whether $A_5$ is bisimilar in order $n > 6$ to itself (which is obviously the case). The recursive calls are illustrated in Figure~\ref{fig:algorithm:checking-two-sets:nondeterministic-ex}. For the initial symbolic state, we ignore all outgoing action transitions. After the first $\varepsilon$-transition, we reach a pair that consists two times of the symbolic state $(l_0, x = y = \chi_0 = \chi_1 = \chi_2 = \chi_3 < \infty)$. Therefore, we apply the $\textsc{check-for-virt-bisim-in-order-impl}$ function to this pair. After the check for virtual equivalence and the check of the outgoing $\varepsilon$-transitions, we analyze the outgoing action transitions. Since both symbolic states have two outgoing transitions labeled with $a$, $\text{found-cont}$ and $\text{finished}$ are both 2x2-matrices. In accordance with the illustration in Figure~\ref{fig:algorithm:checking-two-sets:nondeterministic-ex}, we know that after the for-loops have been executed, the matrices have the following entries:
\begin{equation*}
\begin{pmatrix}
\emptyset & \{ \chi_0 = \chi_1 = \chi_2 = \chi_3 < \infty \} \\
\{ \chi_0 = \chi_1 = \chi_2 = \chi_3 < \infty \} & \emptyset
\end{pmatrix} \text{ and }
\begin{pmatrix}
\text{true} & \text{false} \\
\text{false} & \text{true}
\end{pmatrix}
\end{equation*}
Since the conjunction of the disjunction of the contradiction sets of any row or column is equal to false, no contradictions have been found for any transition. From the second matrix, we know that the upper left and lower right elements will remain unaltered upon the completion of another iteration. After we have replaced all other entries by $\text{true}$, it becomes evident that the $\text{search-contradiction}$ operator is still unable to find any contradictions. Therefore, no other iteration will result into any contradictions and the outer loop is terminated, which results into the return of the empty set.
\end{example}

We show the \textsc{check-target-pair-impl} operator in the next section.

\subsection{Checking a Pair of Targets}

In the deterministic case, checking a pair of target symbolic states is straightforward: We simply apply the \textsc{check-for-virt-bisim-in-order} function with order $n-1$. In the non-deterministic case it is necessary to have the ability to remove previously found contradictions, ensuring that only those parts of the target symbolic states are compared for which no contradiction has been found.

Algorithm \ref{alg:virtual-clocks:check-target-pair} shows the \textsc{check-target-pair-impl} function. If $\text{found-cont}$ is empty, the operator is trivial: We invoke $\textsc{func}$ on the given symbolic states and return the result (we use the convention $\bigvee \emptyset = \emptyset$ here). If $\text{found-cont}$ is not empty, it is necessary to extract the parts of the target zones without an already found contradiction. This is achieved by negating the contradictions, which results in a different set of virtual constraints representing the parts of the zones, for which no contradiction has been found, and iterate through that set. In the case that a contradiction is encountered, we return it.

\begin{algorithm}
\caption{\textsc{check-target-pair-impl} function}
\label{alg:virtual-clocks:check-target-pair}
\begin{algorithmic}[1]
\LeftComment{Let $\VCGFullState{\sigma, A}{}$ be a symbolic state of $\VCG{A}$, $\VCGFullState{\sigma, B}{}$ be a symbolic state of $\VCG{B}$,}
\LeftComment{$\text{found-cont}$ be a set of contradictions, and}
\LeftComment{$\textsc{func} : (L_A \times \mathcal{D}(\CCupChi{A})) \times (L_B \times \mathcal{D}(\CCupChi{B})) \rightarrow 2^{\mathcal{B}(\{\chi_0, ..., \chi_{|C_A| + |C_B| - 1}\})}$.}
\LeftComment{\textsc{check-target-pair-impl} returns a set of virtual constraints}
\Function{check-target-pair-impl${}_{\VCG{A}, \VCG{B}, \textsc{func}}$}{$(l_{\sigma, A}, \Zone{\sigma, A})$, $(l_{\sigma, B}, \Zone{\sigma, B})$, $\text{found-cont}$}

  \State $\text{without-cont} \gets \text{extract-virtual-constraint}(\Zone{\sigma, A}) \land \lnot (\bigvee \text{found-cont})$
  \State $\text{contradictions} \gets \emptyset$

  \For{$\phi \in \text{without-cont}$}
    \State $\text{new-cont} \gets \textsc{func}( (l_{\sigma, A}, \Zone{\sigma, A} \land \phi), (l_{\sigma, B}, \Zone{\sigma, B} \land \phi))$
    \State $\text{contradictions} = \text{contradictions} \cup \text{new-cont}$
  \EndFor
  \State \Return $\text{contradictions}$

\EndFunction{}
\end{algorithmic}
\end{algorithm}

We now show that our algorithm is correct.

\subsection{Correctness of the Algorithm}

Previously, we have explained the desired behavior of our algorithm. We formalize these requirements by the following definition.

\begin{definition}[$\text{check-for-virt-bisim-in-order}$ function]
  \label{def:appendix:check-for-virt-bisim-in-order:check-for-virt-bisim-in-order-function}
  Assume two TA $A$, $B$, using the sets of clocks $C_A$, $C_B$, the AB-\semisynchronized \ symbolic states $\VCGFullState{A}{}$ of the VCG $\VCG{A}$ of $A$ regarding $B$ and $\VCGFullState{B}{}$ of the VCG $\VCG{B}$ of $B$ regarding $A$. Moreover, let $n \in \mathbb{N}^{\geq 0}$, $\text{\textit{visited}}$ be a set of pairs of symbolic states, and $k : C_A \cup C_B \cup \{\chi_0, ..., \chi_{|C_A| + |C_B| - 1}\} \rightarrow \mathbb{N}^{\geq 0}$ be a function, such that for any clock constraint $c \sim n$, which occurs in $A$ or $B$, $n < k(c)$, for any $i \in [0, |C_A| - 1] : k(C_A[i]) = k(\chi_i)$, and for any $i \in [0, |C_B| - 1] : k(C_B[i]) = k(\chi_{i + |C_A|})$. We define a function 
  \begin{equation*}
    \text{check-for-virt-bisim-in-order}_{n, \VCG{A}, \VCG{B}, k, \text{visited}}(\VCGFullState{A}{}, \VCGFullState{B}{}) = \{\phi_0, ..., \phi_m\}
  \end{equation*}
  with $\phi_0, ..., \phi_m$ being virtual constraints, such that
  \begin{enumerate}
    \item (Usability) $\phi \in \{\phi_0, ..., \phi_m\} : [\phi] \neq \emptyset$, $\forall i, j \in [0, m] : [\phi_i \land \phi_j] \neq \emptyset \text{ implies } i = j$, and
    \begin{equation*}
      \forall \phi \in \{\phi_0, ..., \phi_m\} : [\phi] \subseteq [\text{extract-virtual-constraint}(\Zone{A})] \cup [\text{extract-virtual-constraint}(\Zone{B})],
    \end{equation*}
    \item (Soundness) for any non-empty zone $\Zone{\text{sub}, A} \subseteq \Zone{A}$ with $\exists \phi \in \{\phi_0, ..., \phi_m\} : \Zone{\text{sub}, A} \land \phi \neq \emptyset$ exists no $\Zone{\text{sub}, B} \subseteq \Zone{B}$ such that $\VCGBisimulation{\VCGFullState{A}{\text{sub}, A}}{\VCGFullState{B}{\text{sub}, B}}{n}$ and vice versa, and
    \item (Completeness) if $\forall (\VCGFullState{v, A}{}, \VCGFullState{v, B}{}) \in \text{visited} : \VCGBisimulation{\VCGFullState{v, A}{}}{\VCGFullState{v, B}{}}{n}$ and $\text{check-for-virt-bisim-in-order}_{n, \VCG{A}, \VCG{B}, k, \text{visited}}(\VCGFullState{A}{}, \VCGFullState{B}{}) = \emptyset$, then for any $\Zone{\text{sub}, A} \subseteq \Zone{A}$ with $\Zone{\text{sub}, A} \neq \emptyset$ the statement $\VCGBisimulation{\VCGFullState{A}{\text{sub}, A}}{(l_B, \Zone{B} \land \text{extract-virtual-constraint}(\Zone{\text{sub}, A}))}{n}$ and the analog for $B$ hold.
  \end{enumerate}
\end{definition}

We now show that \textsc{check-for-virt-bisim-in-order-impl} is such a function. To do so, we first have to show some properties of \textsc{check-outgoing-transitions-impl} of Algorithm~\ref{alg:virtual-clocks:check-outgoing-transitions}.

\begin{proposition}[\textsc{check-outgoing-transitions-impl}]
  \label{prop:appendix:check-for-virt-bisim-in-order:check-outgoing-transitions}
  Assume two diagonal-free TA $A$, $B$, using the sets of clocks $C_A$, $C_B$, the VCG $\VCG{A}$ of $A$ regarding $B$, the VCG $\VCG{B}$ of $B$ regarding $A$ and two AB-\synchronized{} symbolic states $\VCGFullState{A}{}$, $\VCGFullState{B}{}$ with $\VirtualEquivalence{A}{B}$. Assume a $\sigma \in \Sigma$ and let $\text{trans}_A \in 2^{\VCGTrans{}}$ be the set of all outgoing transitions of $\VCGFullState{A}{}$ labeled with $\sigma$ and let $\text{trans}_B \in 2^{\VCGTrans{}}$ be the set of all outgoing transitions of $\VCGFullState{B}{}$ labeled with $\sigma$. We assume a $n \in \mathbb{N}^{\geq 0}$ and $\textsc{func}$ to be a $\text{check-for-virt-bisim-in-order}_{n, \VCG{A}, \VCG{B}, k, \text{visited}}$ function. We denote
  \begin{align*}
    \textsc{check-outgoing-transitions-impl}_{\VCG{A}, \VCG{B}, \textsc{func}} (\Zone{A}, \Zone{B}, \text{trans}_A, \text{trans}_B) = \{\phi_0, ..., \phi_m\}.
  \end{align*}
  $\{\phi_0, ..., \phi_m\}$ are virtual constraints such that
  \begin{enumerate}
    \item (Usability) $\forall \phi \in \{\phi_0, ..., \phi_m\} : [\phi] \neq \emptyset$, $\forall i, j \in [0, m] : [\phi_i \land \phi_j] \neq \emptyset$ implies $i = j$, and
    \begin{equation*}
    \forall \phi \in \{\phi_0, ..., \phi_m\} : [\phi] \subseteq [\text{extract-virtual-constraint}(\Zone{A})] ( = [\text{extract-virtual-constraint}(\Zone{B})]),
    \end{equation*}
    \item (Soundness) for any non-empty zone $\Zone{\text{sub}, A} \subseteq \Zone{A}$ with $\exists \phi \in \{\phi_0, ..., \phi_m\} : \Zone{\text{sub}, A} \land \phi \neq \emptyset$ either exists a transition $\VCGFullState{A}{\text{sub}, A} \VCGTrans{\sigma} \VCGFullState{\sigma, A}{\text{sub}, \sigma, A}$ such that there exists \textit{no} finite set of symbolic states $\{\VCGFullState{\sigma, A}{0, \text{sub}, \sigma, A}, \VCGFullState{\sigma, A}{1, \text{sub}, \sigma, A}, ...\}$ with $(\bigcup \Zone{i, \text{sub}, \sigma, A}) = \Zone{\sigma, A}$ such that for any $\VCGFullState{\sigma, A}{i, \text{sub}, \sigma, A}$ there exists a transition $\VCGFullState{B}{} \VCGTrans{\sigma} \VCGFullState{\sigma, B}{}$ such that there exists a zone $\Zone{\text{sub}, \sigma, B} \subseteq \Zone{\sigma, B}$ with $\textsc{func}(\VCGFullState{\sigma, A}{i, \sigma, A}, \VCGFullState{\sigma, B}{\text{sub}, \sigma, B}) = \emptyset$ or the analog for $B$, and
    \item (Completeness) $\{\phi_0, ..., \phi_m\} = \emptyset$ implies that for any transition  $\VCGFullState{A}{} \VCGTrans{\sigma} \VCGFullState{\sigma, A}{}$ exists a finite set of symbolic states $\{\VCGFullState{\sigma, A}{0, \sigma, A}, \VCGFullState{\sigma, A}{1, \sigma, A}, ...\}$ with $(\bigcup \Zone{i, \sigma, A}) = \Zone{\sigma, A}$ such that for any $\VCGFullState{\sigma, A}{i, \sigma, A}$ there exists a transition $\VCGFullState{B}{} \VCGTrans{\sigma} \VCGFullState{\sigma, B}{}$ such that there exists a zone $\Zone{\text{sub}, \sigma, B} \subseteq \Zone{\sigma, B}$ with
    \begin{equation*}
    \textsc{func}(\VCGFullState{\sigma, A}{i, \sigma, A}, \VCGFullState{\sigma, B}{\text{sub}, \sigma, B}) = \emptyset
    \end{equation*}
    and the analog statement for $B$.
  \end{enumerate}
\end{proposition}

\begin{proof}
The for-loops terminate as we defined TA to be finite. The search-contradiction operator can easily handle the case in which one of the transition sets is empty. Therefore, we assume this not to be the case. Moreover, we assume \textsc{check-target-pair-impl} to return the empty set if and only if $\text{found-cont}$ contains all contradictions of $(l_{\sigma, A}, \Zone{\text{eq}, A})$ and $(l_{\sigma, B}, \Zone{\text{eq}, B})$. Otherwise, it returns a new valid contradiction. This assumption can be shown straightforward.

A zone can be split into its regions (however, the region might not be fully included in the zone). This number of regions is finite. Since virtual constraints are clock constraints and, therefore, can only use natural numbers for comparison, $\text{extract-virtual-constraint}(\Zone{A}) \land \lnot \ \text{extract-virtual-constraint}(\Zone{B})$ (and analog for $B$) always returns at least one of those regions (or that part of the region that is contained in the zone). Therefore, there are finitely many contradictions possible for each pair of symbolic states. We will use this later to show termination of the outer loop. We now analyze the two for-loops. We skip the proof for Usability.

\noindent \textbf{Precondition:} For any valid indices $i_A$ and $i_B$, any $\phi \in \text{found-cont}[i_A][i_B]$ is a valid contradiction for the corresponding symbolic states. $\text{finished}[i_A][i_B] = \text{true}$ implies $\textsc{check-target-pair-impl}((l_{\sigma, A}, \allowbreak \Zone{\text{eq}, A}), (l_{\sigma, B}, \Zone{\text{eq}, B}), \text{found-cont}[i_A][i_B]) = \emptyset$. \\
\textbf{Postcondition:} The precondition still holds and for any valid indices $i_A$ and $i_B$ either $\text{finished}[i_A][i_B] \allowbreak = \text{true}$ or $[ \ \text{found-cont}[i_A][i_B] \ ]$ is larger than before.

Given the precondition, the postcondition can be shown by using Definition~\ref{def:appendix:check-for-virt-bisim-in-order:check-for-virt-bisim-in-order-function}. 

Since $\text{found-} \allowbreak \text{cont}[i_A][i_B]$ cannot grow infinitely, more and more entries of $\text{finished}$ will become $ \text{true}$ until the outer loop terminates (at the latest when all finished entries are set to true). Therefore, we only have to show that after the outer loop terminates, Soundness and Completeness hold.

To show Soundness, we assume a $\Zone{\text{sub}, A} \subseteq \Zone{A}$,
\begin{align*}
  & \textsc{check-outgoing-transitions-impl}_{\VCG{A}, \VCG{B}, \textsc{func}} (\Zone{A}, \Zone{B}, \text{trans}_A, \text{trans}_B) \\
= & \{\phi_0, ..., \phi_m\} \neq \emptyset,
\end{align*}
and $\exists \phi \in \{\phi_0, ..., \phi_m\} : \Zone{\text{sub}, A} \land \phi \neq \emptyset$. By definition of search-contradiction and the postcondition of the for-loops, this can only be the case, if there is a transition $\VCGFullState{A}{A} \VCGTrans{\sigma} \VCGFullState{\sigma, A}{\sigma, A}$, a $\phi_{\sigma}$ such that for any transition $\VCGFullState{B}{B} \VCGTrans{\sigma} \VCGFullState{\sigma, B}{\sigma, B}$, $\phi_{\sigma}$ is a valid contradiction for $(\VCGFullState{\sigma, A}{\sigma, A}, \VCGFullState{\sigma, B}{\sigma, B})$, and there is a transition $(l_A, \Zone{\text{sub}, A} \land \phi) \VCGTrans{\sigma} \VCGFullState{\sigma, A}{\sigma, A}$ with $\Zone{\sigma, A} \land \phi_{\sigma} \neq \emptyset$. In any split $\{\VCGFullState{\sigma, A}{0, \text{sub}, \sigma, A}, \VCGFullState{\sigma, A}{1, \text{sub}, \sigma, A}, ...\}$ with $(\bigcup \Zone{i, \text{sub}, \sigma, A}) = \Zone{\sigma, A}$, there exists a $\VCGFullState{\sigma, A}{i, \text{sub}, \sigma, A}$ with $\Zone{\sigma, A} \land \phi_{\sigma} \neq \emptyset$ and, therefore, there is no $\VCGFullState{B}{} \VCGTrans{\sigma} \VCGFullState{\sigma, B}{}$ such that there exists a zone $\Zone{\text{sub}, \sigma, B} \subseteq \Zone{\sigma, B}$ with $(l_A, \Zone{\text{sub}, A} \land \phi) \VCGBisimulationBare{n} (l_B, \Zone{\text{sub}, B})$ and Soundness holds. Soundness for $B$ can be shown analogously.

To show Completeness, we assume any transition $\VCGFullState{A}{A} \VCGTrans{\sigma} \VCGFullState{\sigma, A}{\sigma, A}$ and
\begin{align*}
  \textsc{check-outgoing-transitions-impl}_{\VCG{A}, \VCG{B}, \textsc{func}} (\Zone{A}, \Zone{B}, \text{trans}_A, \text{trans}_B) = \emptyset.
\end{align*}
In this case, we can split $\VCGFullState{\sigma, A}{\sigma, A}$ with index $i_A$ into $\{\VCGFullState{\sigma, A}{0, \sigma, A}, \VCGFullState{\sigma, A}{1, \sigma, A}, ...\}$ with $(\bigcup \Zone{i, \sigma, A}) = \Zone{\sigma, A}$ such that for any $\VCGFullState{\sigma, A}{i, \sigma, A}$ there exists an $i_B$ with transition $\VCGFullState{B}{} \VCGTrans{\sigma} \VCGFullState{i, \sigma, B}{},$ $\text{finished}[i_A][i_B] = \text{true}$, and $\text{found-cont}[i_A][i_B] \land \Zone{i, \sigma, A} = \emptyset$, as we know that the intersection of those contradiction sets is empty by the fact that no-contradiction-possible returned true. This implies that there exists a subzone $\Zone{\text{sub}, i, \sigma, B} \subseteq \Zone{i, \sigma, B}$ such that $(l_{\sigma, A}, \Zone{i, \sigma, A}) \VCGBisimulationBare{n} \VCGFullState{\text{sub}, i, \sigma, B}{}$ and Completeness holds. Completeness for $B$ can be shown analogously.
\end{proof}

Using Proposition~\ref{prop:appendix:check-for-virt-bisim-in-order:check-outgoing-transitions}, we can show that \textsc{check-for-virt-bisim-in-order-impl} is a check-for-virt-bisim-in-order function.

\begin{proposition}[\textsc{check-for-virt-bisim-in-order-impl} is Correct]
  \label{prop:appendix:check-for-virt-bisim-in-order-impl:check-for-virt-bisim-in-order-impl-is-correct}
  \textsc{check-for-virt-bisim-in-order-impl} is a $\text{check-for-virt-bisim-in-order}$ function.
\end{proposition}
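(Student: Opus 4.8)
The plan is to prove by induction on the order $n$ that the function \textsc{check-for-virt-bisim-in-order-impl} of Algorithm~\ref{alg:algorithm:check-for-virt-bisim-in-order} satisfies all five requirements of Definition~\ref{def:appendix_def-of-check-for-virt-bisim-in-order:check-for-virt-bisim-in-order-function} (Non-Emptiness, Usefulness, Disjointness, Soundness, Completeness). Termination is immediate: every recursive call to $\textsc{func}$ — be it the direct one on the $\varepsilon$-targets or the indirect ones routed through \textsc{check-outgoing-transitions-impl} and \textsc{check-target-pair-impl} — is made with order $n-1$, and \textsc{check-outgoing-transitions-impl} itself terminates by Proposition~\ref{prop:background:k-norm-is-finite} (only finitely many normalized symbolic states). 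Hence the induction hypothesis is exactly that $\textsc{func}$, i.e.\ \textsc{check-for-virt-bisim-in-order-impl} at order $n-1$, is a $\text{check-for-virt-bisim-in-order}$ function.

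For the base case $n=0$, and likewise for the branch of the inductive step in which $\lnot(\VirtualEquivalence{A}{B})$, the first \texttt{if}-statement fires and the return value is the union of the two sets produced by the $\land\lnot$ operator of Definition~\ref{def:appendix:d-land-not-phi:d-land-lnot-phi-operator}. Non-Emptiness, Usefulness and Disjointness follow from the stated properties of that operator together with those of \textsc{extract-virtual-constraint} (Definition~\ref{def:virtual-clocks:CheckingForBoundedBisim:Extract-Virtual-Constraint-Operator}), a small merging argument handling disjointness across the two unioned sets. Soundness holds because, by soundness of \textsc{extract-virtual-constraint}, any non-empty sub-zone meeting one of the returned virtual constraints has no virtually equivalent partner in the other symbolic state, and since virtual bisimilarity in order zero demands virtual equivalence (Definition~\ref{def:virtual-clocks:VCG:CheckingForBoundedBisim:BoundedBiSim}) there is no bisimilar partner; Corollary~\ref{cor:virtual-clocks:uChi_IIsUnique} supplies the uniqueness needed to make this precise. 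Completeness: an empty return means the zones are virtually equivalent, which is exactly $\VCGBisimulationState{A}{B}{0}$ for every sub-zone.

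The inductive step for $n+1$ with $\VirtualEquivalence{A}{B}$ is where the real work is. First, Proposition~\ref{prop:virtual-clocks:VCG:correctness-sync-function} gives that $\syncfunction$ produces AB-\synchronized{} symbolic states with virtual equivalence preserved, and Proposition~\ref{prop:virtual-clocks:kNormPreservesBiSimOfVCG} licenses replacing them by their $k$-normalizations without changing virtual bisimilarity in order $n+1$ — which is what makes storing normalized pairs in $\text{visited}$ legitimate. The $\text{visited}$ cut-off is justified exactly as the termination criterion illustrated in Example~\ref{ex:virtual-clocks:virtual-constraint:initial-example}: the Completeness clause of Definition~\ref{def:appendix_def-of-check-for-virt-bisim-in-order:check-for-virt-bisim-in-order-function} assumes all pairs in $\text{visited}$ are bisimilar in order $n$, so returning $\emptyset$ on a revisited pair is sound for Completeness, while for Soundness one argues that whenever the algorithm does return a non-empty set this overfitting assumption was never actually exercised (a genuine contradiction survives the cut-off). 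Then, mirroring Definition~\ref{def:virtual-clocks:VCG:CheckingForBoundedBisim:BoundedBiSim}, the algorithm checks the unique outgoing $\varepsilon$-transition of each synced state — calling $\textsc{func}$ (valid by the induction hypothesis, the $\varepsilon$-targets being AB-\synchronized{} by Proposition~\ref{prop:virtual-clocks:VCG:outgoing-transitions-of-symbolic-syncd-states}), then pulling contradictions back along the $\varepsilon$-edge with $\text{revert-}\varepsilon\text{-trans}$ (Definition~\ref{def:appendix:revert-eps-operator:revert-eps-trans}) and undoing $\syncfunction$ via $\revertsyncfunction$ (Definition~\ref{def:appendix:implementing-revert-sync:revert-sync-operator}) — and, for each $\sigma\in\Sigma$, feeds all outgoing $\sigma$-transitions to \textsc{check-outgoing-transitions-impl}, whose specification (Definition~\ref{def:appendix:check-outgoing-transitions:check-outgoing-transitions}) already encapsulates the non-deterministic splitting, again followed by $\revertsyncfunction$. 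The five properties of the output then follow by chaining the corresponding clauses of these operator specifications plus a final \textsc{combine}; in particular Completeness reads Definition~\ref{def:virtual-clocks:VCG:CheckingForBoundedBisim:BoundedBiSim} forward and uses monotonicity (Proposition~\ref{prop:virtual-clocks:bisimInOrderPlusImpliesBisimInOrder}) to validate the cut-off assumption.

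The main obstacle I anticipate is the Soundness direction of the inductive step: tracking a single witnessing sub-zone backwards through the composition $\revertsyncfunction$ after $\text{revert-}\varepsilon\text{-trans}$, respectively $\revertsyncfunction$ applied to the output of \textsc{check-outgoing-transitions-impl}, while maintaining at every stage that the relevant symbolic states stay AB-\synchronized{} or AB-\semisynchronized{}, that virtual equivalence is respected, and that ``has no bisimilar partner'' is transported faithfully. Each operator specification supplies exactly one such transport step, so the proof is a careful but largely mechanical chain of invocations; the delicate point is the bookkeeping of which zone is a sub-zone of which and matching the partition used inside \textsc{check-outgoing-transitions-impl} with the partition implicitly demanded by Definition~\ref{def:virtual-clocks:VCG:CheckingForBoundedBisim:BoundedBiSim}. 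Finally, the companion statements, Proposition~\ref{prop:algorithm:algorithm:check-for-virt-bisim-in-order-impl-is-correct} and Proposition~\ref{prop:algorithm:algorithm:return-values-of-check-for-virt-bisim-in-order-impl}, drop out by instantiating the proven specification at the initial symbolic states and composing with Theorem~\ref{theorem:virtual-clocks:CheckingForBoundedBisim:BoundedSimIsCorrect}.
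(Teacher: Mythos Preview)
Your proposal is correct and follows essentially the same approach as the paper: induction on $n$, handling the base case and the $\lnot(\VirtualEquivalence{A}{B})$ branch via the properties of the $\land\lnot$ operator, dispatching the $\text{visited}$ cut-off via Proposition~\ref{prop:virtual-clocks:kNormPreservesBiSimOfVCG}, and in the main inductive step chaining the specifications of $\syncfunction$, $\text{revert-}\varepsilon\text{-trans}$, $\revertsyncfunction$, and \textsc{check-outgoing-transitions-impl} to verify the five conditions, with monotonicity (Proposition~\ref{prop:virtual-clocks:bisimInOrderPlusImpliesBisimInOrder}) validating the enlarged $\text{new-visited}$ for Completeness. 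The only minor inaccuracy is the mention of a ``final \textsc{combine}'' at the top level---the algorithm returns the output of $\revertsyncfunction$ directly, which already guarantees Non-Emptiness, Usefulness, and Disjointness by Definition~\ref{def:appendix:implementing-revert-sync:revert-sync-operator}.
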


\begin{proof}
Since $\Sigma$ is finite and with every recursion step, $n$ is decreased by one and, therefore, will eventually become zero, we skip the termination proof. For every recursion step, we define a precondition and an invariant. \\
\textbf{Precondition: } We assume the preconditions of Definition~\ref{def:appendix:check-for-virt-bisim-in-order:check-for-virt-bisim-in-order-function}. \\
\textbf{Invariant: } After the function returns, the properties of Definition~\ref{def:appendix:check-for-virt-bisim-in-order:check-for-virt-bisim-in-order-function} hold.

\noindent We show the invariant by induction.

\noindent\textbf{Base Case: } If $n = 0$, the condition of the first if statement evaluates to true. Therefore,
\begin{align*}
  & \text{extract-virtual-constraint}(\Zone{A}) \land \lnot \text{extract-virtual-constraint}(\Zone{B}) \cup \\
  & \qquad \qquad \text{extract-virtual-constraint}(\Zone{B}) \land \lnot \text{extract-virtual-constraint}(\Zone{A})
\end{align*}
is returned. Usability holds straightforward. We show Soundness only for a non-empty zone $\Zone{\text{sub}, A} \subseteq \Zone{A}$ since the statement for a subzone of $\Zone{B}$ can be shown analogously. 
\begin{align*}
  & \exists \phi \in \text{extract-virtual-constraint}(\Zone{A}) \land \lnot \text{extract-virtual-constraint}(\Zone{B}) \cup \\
  & \qquad \text{extract-virtual-constraint}(\Zone{B}) \land \lnot \text{extract-virtual-constraint}(\Zone{A}) : \Zone{\text{sub}, A} \land \phi \neq \emptyset 
\end{align*}
implies $\exists \phi \in \lnot \text{extract-virtual-constraint}(\Zone{B}) : \Zone{\text{sub}, A} \land \phi \neq \emptyset$ since a subzone of $\Zone{A}$ cannot fulfill any virtual constraint of $\lnot \text{extract-virtual-constraint}(\Zone{A})$.
By Definition~\ref{def:virtual-clocks:CheckingForBoundedBisim:Extract-Virtual-Constraint-Operator}, there cannot be any subzone of $\Zone{B}$ which is virtually equivalent to $\Zone{\text{sub}, A}$, which is required for virtual bisimulation in order $0$ by Definition~\ref{def:virtual-clocks:VCG:CheckingForBoundedBisim:BoundedBiSim}. Therefore, Soundness holds.
The return value can only be the empty set if and only if $\VirtualEquivalence{A}{B}$ holds. Therefore, Completeness also holds.

\noindent\textbf{Induction Step: } If $\Zone{A}$ and $\Zone{B}$ are not virtual equivalent, we can show all statements analogously to the base case. Therefore, we assume the zones to be virtually equivalent. 

If the pair of normalized symbolic states is element of $\text{visited}$, the condition of the second if statement evaluates to true, $\emptyset$ is returned, and the induction step holds by the precondition regarding the visited set.
If the pair of normalized symbolic states is not element of $\text{visited}$, either $\emptyset$ or the return value of $\text{revert-sync}$ are returned.
Usability holds straightforward. To show Soundness, we assume a non-empty zone $\Zone{\text{sub}, A} \subseteq \Zone{A}$ with
\begin{align*}
  \exists \phi \in & \textsc{check-for-virt-bisim-in-order-impl}_{\VCG{A}, \VCG{B}, k, \text{visited}, n}(\VCGFullState{A}{}, \VCGFullState{B}{}) : \\
  & \qquad \Zone{\text{sub}, A} \land \phi \neq \emptyset.
\end{align*}
$\Zone{\text{sub}, B} = \Zone{B} \land \text{extract-virtual-} \allowbreak \text{constraint}(\Zone{\text{sub}, A})$ is the only subzone of $\Zone{B}$ being virtual equivalent to $\Zone{\text{sub}, A}$. We denote $(\VCGFullState{A}{\text{sub}, e, A}, \VCGFullState{B}{\text{sub}, e, B}) = \text{sync}(\VCGFullState{A}{\text{sub}, A}, \VCGFullState{B}{\text{sub}, B})$. By definition, $\Zone{\text{sub}, e, A} \subseteq \Zone{e, A}$ and $\Zone{\text{sub}, e, B} \subseteq \Zone{e, B}$. Since the revert-sync operator returned a set of virtual constraints, we know that $\exists \phi_e \in \text{sync-cond} : \Zone{e, \text{sub}, A} \land \phi_e \neq \emptyset$. Since the returned value is not $\emptyset$, either 
\begin{align*}
  \exists \sigma \in \Sigma: & \textsc{check-outgoing-transitions-impl}_{\VCG{A}, \VCG{B}, \textsc{func}}( \\
  & \qquad \Zone{e, A}, \Zone{e, B}, \text{out-trans}(\sigma, \VCGFullState{A}{e, A}), \text{out-trans}(\sigma, \VCGFullState{B}{e, B})) \neq \emptyset
\end{align*}
or $\textsc{func}(\VCGFullState{A}{\varepsilon, A}, \VCGFullState{B}{\varepsilon, B}) \neq \emptyset$ holds.
In the first case, Soundness holds by Proposition~\ref{prop:appendix:check-for-virt-bisim-in-order:check-outgoing-transitions} and the induction hypothesis, which states that $\textsc{func}$ is a check-for-virt-bisim-in-order function for n-1. Therefore, we focus on the second case. 

We denote $\textsc{func}(\VCGFullState{A}{\varepsilon, A}, \VCGFullState{B}{\varepsilon, B}) = \{\phi_{\varepsilon, 0}, ..., \phi_{\varepsilon, p}\} \neq \emptyset$.
We denote the target of the $\varepsilon$-transition of $\VCGFullState{A}{e, \text{sub}, A}$ with $\VCGFullState{A}{\varepsilon, \text{sub}, A}$ and remind the reader of the fact that $\Zone{e, \text{sub}, A} \subseteq \Zone{e, A}$ implies $\Zone{\varepsilon, \text{sub}, A} \subseteq \Zone{\varepsilon, A}$ and analogous for $B$. By the given assumptions, $\exists \phi_{\varepsilon} \in \varepsilon\text{-result} : \Zone{\varepsilon, \text{sub}, A} \land \phi_{\varepsilon} \neq \emptyset$.
By the induction hypothesis, we know that $\textsc{func}$ is a check-for-virt-bisim-in-order function for order n-1. Therefore, for $\Zone{\varepsilon, \text{sub}, A}$ exists no $\Zone{\text{sub}, \varepsilon, B} \subseteq \Zone{\varepsilon, e, B}$ such that $\VCGFullState{A}{\text{sub}, \varepsilon, A} \VCGBisimulationBare{n} \VCGFullState{B}{\text{sub}, \varepsilon, B}$.
Therefore, $\VCGFullState{A}{\text{sub}, e, A}$ has an outgoing $\varepsilon$-transition, such that the target is not virtual bisimilar to any symbolic substate of the target of the $\varepsilon$-transition of $\VCGFullState{B}{\text{sub}, e, B}$ and Soundness holds by Definition~\ref{def:virtual-clocks:VCG:CheckingForBoundedBisim:BoundedBiSim}.
  
To show Completeness, we assume that $\textsc{check-for-virt-bisim-in-order-impl}$ returns $\emptyset$. Therefore,
\begin{align*}
  \forall \sigma \in \Sigma: & \textsc{check-outgoing-transitions-impl}_{\VCG{A}, \VCG{B}, \textsc{func}}( \\
  & \qquad \Zone{e, A}, \Zone{e, B}, \text{out-trans}(\sigma, \VCGFullState{A}{}), \text{out-trans}(\sigma, \VCGFullState{B}{})) = \emptyset
\end{align*}
and $\textsc{func}(\VCGFullState{A}{\varepsilon, A}, \VCGFullState{B}{\varepsilon, B}) = \emptyset$ hold. We assume any non-empty $\Zone{\text{sub}, A} \subseteq \Zone{A}$. Since $\Zone{\text{sub}, A} \VirtualEquivalenceBare \Zone{B} \land \text{extract-virtual-constraint}(\Zone{\text{sub}, A})$ obviously holds, we know by Definition~\ref{def:virtual-clocks:VCG:CheckingForBoundedBisim:BoundedBiSim} that we have to analyze the outgoing transitions of
\begin{equation*}
  (\VCGFullState{A}{\text{sub}, e, A}, \VCGFullState{B}{\text{sub}, e, B}) = \text{sync}(\VCGFullState{A}{\text{sub}, A}, (l_B, \Zone{B} \land \text{extract-virtual-constraint}(\Zone{\text{sub}, A}))).
\end{equation*}
For any outgoing action transition of either $\VCGFullState{A}{\text{sub}, e, A}$ or $\VCGFullState{B}{\text{sub}, e, B}$, the existing of a corresponding outgoing transition of the other symbolic state can be shown by the induction hypothesis and Proposition~\ref{prop:appendix:check-for-virt-bisim-in-order:check-outgoing-transitions}. We denote the target of the $\varepsilon$-transition of $\VCGFullState{A}{\text{sub}, e, A}$ with $\VCGFullState{A}{\text{sub}, \varepsilon, A}$ and the target of the $\varepsilon$-transition of $\VCGFullState{B}{\text{sub}, e, B}$ with $\VCGFullState{B}{\text{sub}, \varepsilon, B}$. We can show $\VirtualEquivalence{\text{sub}, \varepsilon, A}{\text{sub}, \varepsilon, B}$ and, therefore, the induction hypothesis implies $\VCGBisimulation{\VCGFullState{A}{\text{sub}, \varepsilon, A}}{\VCGFullState{B}{\text{sub}, \varepsilon, B}}{n-1}$ due to $\textsc{func}(\VCGFullState{A}{\varepsilon, A}, \VCGFullState{B}{\varepsilon, B}) = \emptyset$. Therefore, Completeness holds.
\end{proof}

Therefore, we can use $\textsc{check-for-virt-bisim-in-order-impl}$ to check for virtual bisimulation.

\begin{corollary}[$\textsc{check-for-virt-bisim-in-order-impl}$ is correct]
\label{cor:algorithm:algorithm:check-for-virt-bisim-in-order-impl-is-correct}
Assume two TA $A$, $B$, using the sets of clocks $C_A$, $C_B$, the initial symbolic states $\VCGFullState{0, A}{}$ of the VCG $\VCG{A}$ of $A$ regarding $B$ and $\VCGFullState{0, B}{}$ of the VCG $\VCG{B}$ of $B$ regarding $A$. Let $n \in \mathbb{N}^{\geq 0}$ and $k$ be a normalization function. We denote
\begin{equation*}
\textsc{check-for-virt-bisim-in-order-impl}_{\VCG{A}, \VCG{B}, k, \emptyset, n}(\VCGFullState{0, A}{}, \VCGFullState{0, B}{}) = \{ \phi_0, ..., \phi_m \}.
\end{equation*}
$\{ \phi_0, ..., \phi_m \} = \emptyset$ holds if and only if $\TLTSBisimulation{A}{B}{n}$ holds.
\end{corollary}

\begin{proof}
This is a direct consequence of Proposition~\ref{prop:appendix:check-for-virt-bisim-in-order-impl:check-for-virt-bisim-in-order-impl-is-correct}.
\end{proof}

Therefore, the problem on how to split the target zone is solved. However, the algorithm still suffers from the problem described in Example~\ref{ex:background:alternating-sequences}.

\subsection{Virtual Bisimulation}

In the preceding section, we used the \textsc{check-for-virt-bisim-in-order-impl} function to check for virtual bisimulation in any order $n$. However, from \cite{Bengtsson2004}, we know that the number of k-normalized symbolic states is finite and there is an upper bound. Let $N$ be the upper bound for the number of pairs of k-normalized symbolic states. Since in each recursion step one element is added to $\text{visited}$, we can follow that the recursion terminates after no more than $N$ recursion steps. Therefore, if we check for virtual bisimulation in order $N+1$, the order will never become zero and the algorithm will not change its output if any larger value for $n$ is used. The validity of Proposition~\ref{prop:appendix:check-for-virt-bisim-in-order-impl:check-for-virt-bisim-in-order-impl-is-correct} remains intact and, therefore, this function is able to check for virtual bisimulation. However, it suffers from the problem described in Example~\ref{ex:background:alternating-sequences}.

From Yi et al.~\cite{Yi1995}, we know that for any outgoing $\varepsilon$-transitions $\VCGFullState{A}{} \VCGTrans{\varepsilon} \VCGFullState{A}{A, \varepsilon}$ the statement $\Zone{A} \subseteq \Zone{A, \varepsilon}$ and for any outgoing $\varepsilon$-transitions $\VCGFullState{A}{A, \varepsilon} \VCGTrans{\varepsilon} \VCGFullState{A}{A, \varepsilon, \varepsilon}$ holds. Due to $\Zone{A} \subseteq \Zone{A, \varepsilon}$ (and analogously for B), it suffices to check the targets of the outgoing $\varepsilon$-transitions. Any contradiction for the original pair is still valid for the new pair (according to the hypotheses used to prove Theorem~\ref{theorem:virtual-clocks:TSVCSimIffTASim}). Therefore, if there exists a contradiction for $\VCGFullState{A}{}$ and $\VCGFullState{B}{}$, a contradiction for $\VCGFullState{A}{A, \varepsilon}$ and $\VCGFullState{B}{B, \varepsilon}$ will be found (and we have to check the outgoing $\varepsilon$-transitions anyway). This reasoning, however, only holds under the assumption that $n$ never reaches zero. $\VCGFullState{A}{A, \varepsilon} = \VCGFullState{A}{A, \varepsilon, \varepsilon}$ (and analogously for B) implies that once the outgoing $\varepsilon$-transition of a pair of symbolic states that is already the target of an outgoing $\varepsilon$-transition is checked, it will always return $\emptyset$, as it belongs to the visited set. Therefore, we are allowed to use alternating sequences.

We now present the \textsc{check-for-virt-bisim-impl} function in Algorithm~\ref{alg:virtual-clocks:bisim-no-n}, which checks for virtual bisimulation and uses alternating sequences. Furthermore, in the event that an $\varepsilon$-transition is deemed superfluous, the algorithm proceeds without it. \textsc{check-for-virt-bisim-impl} mainly differs to \textsc{check-for-virt-bisim-in-order-impl} by the missing $n$ and the fact that either the $\varepsilon$-transitions or the action transitions are checked. After the application of the $\syncfunction{}$ function, we check whether the pair of targets of the $\varepsilon$-transitions is equivalent to the current pair of symbolic states.

\begin{algorithm}[!ht]
  \caption{\textsc{check-for-virt-bisim-impl} function}
  \label{alg:virtual-clocks:bisim-no-n}
  \begin{algorithmic}[1]
  \LeftComment{Let $\VCGFullState{A}{}$, $\VCGFullState{B}{}$ be AB-\semisynchronized{} symbolic states,}
  \LeftComment{$k : C_A \cup C_B \cup \{\chi_{0}, ..., \chi_{|C_A| + |C_B| - 1}\} \rightarrow \mathbb{N}^{\geq 0}$, and $\text{visited}$ be a set.}
  \LeftComment{The return value of \textsc{check-for-virt-bisim-impl} is a set of virtual constraints}
  \Function{check-for-virt-bisim-impl${}_{\VCG{A}, \VCG{B}, k, \textup{visited}}$}{$(l_A, \Zone{A})$, $(l_B, \Zone{B})$}
    \If{$\lnot (\VirtualEquivalence{A}{B})$}
      \State \Return $\text{extract-virtual-constraint}(\Zone{A}) \land \lnot \ \text{extract-virtual-constraint}(\Zone{B}) \cup $
      \State \qquad \qquad \qquad \qquad $\text{extract-virtual-constraint}(\Zone{B}) \land \lnot \ \text{extract-virtual-constraint}(\Zone{A})$
    \EndIf
    \State $(\VCGFullState{A}{e, A}, \VCGFullState{B}{e, B}) \gets \syncfunction(\VCGFullState{A}{}, \VCGFullState{B}{B})$
    \State
    \LeftComment{Assume $\VCGFullState{A}{e, A} \VCGTrans{\varepsilon} \VCGFullState{A}{\varepsilon, A}$ and $\VCGFullState{B}{e, B} \VCGTrans{\varepsilon} \VCGFullState{B}{\varepsilon, B}$.}
    \If{$(\Zone{e, A}, \Zone{e, B}) \neq (\Zone{\varepsilon, A}, \Zone{\varepsilon, B})$}
      \State $\varepsilon\text{-result} \gets $ \Call{check-for-virt-bisim-impl${}_{\VCG{A}, \VCG{B}, k, \textup{visited}}$}{$(l_A, \Zone{\varepsilon, A})$, $(l_B, \Zone{\varepsilon, B})$}
      \State $\text{contra} \gets \text{revert-}\varepsilon \text{-trans}(\Zone{e, A}, \Zone{\varepsilon, A}, \varepsilon\text{-result} \land \text{extract-virtual-constraint}(\Zone{\varepsilon, A})))$
      \State \qquad \qquad \qquad $\cup \text{revert-}\varepsilon \text{-trans}(\Zone{e, B}, \Zone{\varepsilon, B}, \varepsilon\text{-result} \land \text{extract-virtual-constraint}(\Zone{\varepsilon, B}))$
  
      \State \Return $\revertsyncfunction(\VCGFullState{A}{}, \VCGFullState{B}{}, \text{contra})$
  
    \EndIf
    \State
    \State $\VCGFullState{A}{\text{norm}, A} \gets (l_A, \text{norm}(\Zone{e, A}, k))$, $\VCGFullState{B}{\text{norm}, B} \gets (l_B, \text{norm}(\Zone{e, B}, k))$
    \IfThen{($(\VCGFullState{A}{\text{norm}, A}, \VCGFullState{B}{\text{norm}, B}) \in \text{visited}$)}{\Return $\emptyset$}
    \State $\text{new-visited} \gets \text{visited} \cup \{(\VCGFullState{A}{\text{norm}, A}, \VCGFullState{B}{\text{norm}, B})\}$
    \State $\textsc{func} = \textsc{check-for-virt-bisim-impl}_{\VCG{A}, \VCG{B}, k, \textup{new-visited}}$
    \State
    \ForAll{$\sigma \in \Sigma$}
      \LeftComment{$\text{out-trans}(\sigma, \VCGFullState{}{})$ is the set of all outgoing transitions of $\VCGFullState{}{}$ labeled with $\sigma$.}
      \State $\text{contradiction} \gets \textsc{check-outgoing-transitions-impl}_{\VCG{A}, \VCG{B}, \textsc{func}}(\Zone{e, A}, \Zone{e, B}$
      \State \qquad \qquad \qquad \qquad $\text{out-trans}(\sigma, \VCGFullState{A}{e, A}), \text{out-trans}(\sigma, \VCGFullState{B}{e, B}))$
      \IfThen{($\text{contradiction} \neq \emptyset$)}{\Return $\revertsyncfunction(\VCGFullState{A}{e, A}, \VCGFullState{B}{e, B}, \text{contradiction})$}
    \EndFor
    \State \Return $\emptyset$
  \EndFunction{}
  \end{algorithmic}
\end{algorithm}

If this is not the case, we check only the outgoing $\varepsilon$-transitions. By Proposition~\ref{prop:virtual-clocks:VCG:outgoing-transitions-of-symbolic-syncd-states}, we know that these targets are AB-\synchronized{} and, therefore, if the targets are virtually equivalent, the $\syncfunction{}$ function has no impact. Therefore, after an $\varepsilon$-transition, the condition in the second if-statement is false. This implies that at least at every second recursion step, the pair of targets of the $\varepsilon$-transitions is equivalent to the current pair of symbolic states.

If this is the case, the following part of the algorithm is identical to the corresponding part of the aformentioned \textsc{check-for-virt-bisim-in-order-impl} function. However, the check of the outgoing $\varepsilon$-transitions is omitted, as it would inevitably yield an empty set.

\begin{proposition}[$\textsc{check-for-virt-bisim-impl}$ is Correct]
\label{prop:virtual-clocks:implementing-check-for-virt-bisim}
For $n$ larger than the maximum size of the visited set, \NoHyper Proposition~\ref{prop:appendix:check-for-virt-bisim-in-order-impl:check-for-virt-bisim-in-order-impl-is-correct}\endNoHyper{} also holds for \textsc{check-for-virt-bisim-} \textsc{impl}.
\end{proposition}

\begin{proof}
Follows directly from Proposition~\ref{prop:appendix:check-for-virt-bisim-in-order-impl:check-for-virt-bisim-in-order-impl-is-correct}.
\end{proof}

Since the region-based approach of \u{C}er\={a}ns~\cite{Cerans1992} has an exponential behavior in the number of symbolic states when applied to linear TA, our algorithm improves the complexity of the bisimulation check significantly.

\subsection{Comparison to the Region-Based Approach}

Since \u{C}er\={a}ns~\cite{Cerans1992} uses parallel timer processes (PTPs) and only shows the construction of a finite set but not an actual algorithm, we have to do some interpretation for the comparison. While it may be possible to do slightly better using the region-based construction, the main point remains valid. Moreover, to the best of our knowledge, there is no tool implementing the approach and we need to do the calculations by hand.

The region-based approach proposed by \u{C}er\={a}ns~\cite{Cerans1992} uses a product graph. It thus follows that, given two TA $A$ and $B$ with sets of clocks $C_A$ and $C_B$, the analyzed symbolic states have $|C_A| + |C_B|$ many clocks. Since the symbolic states contained in the $\text{visited}$ set in Algorithm \ref{alg:virtual-clocks:bisim-no-n} are AB-\synchronized, it follows that the original clocks have the same values as the corresponding virtual clocks. Consequently, only the $|C_A| + |C_B|$ virtual clocks are relevant for the number of recursion steps and, therefore, the number of relevant clocks is the same in both constructions. 
It should be noted, however, that the region-based approach is subject to an analog problem as described in Example \ref{ex:background:alternating-sequences}.

\begin{example}

\newcommand{\AlgExampleRegionScaling}{1cm}
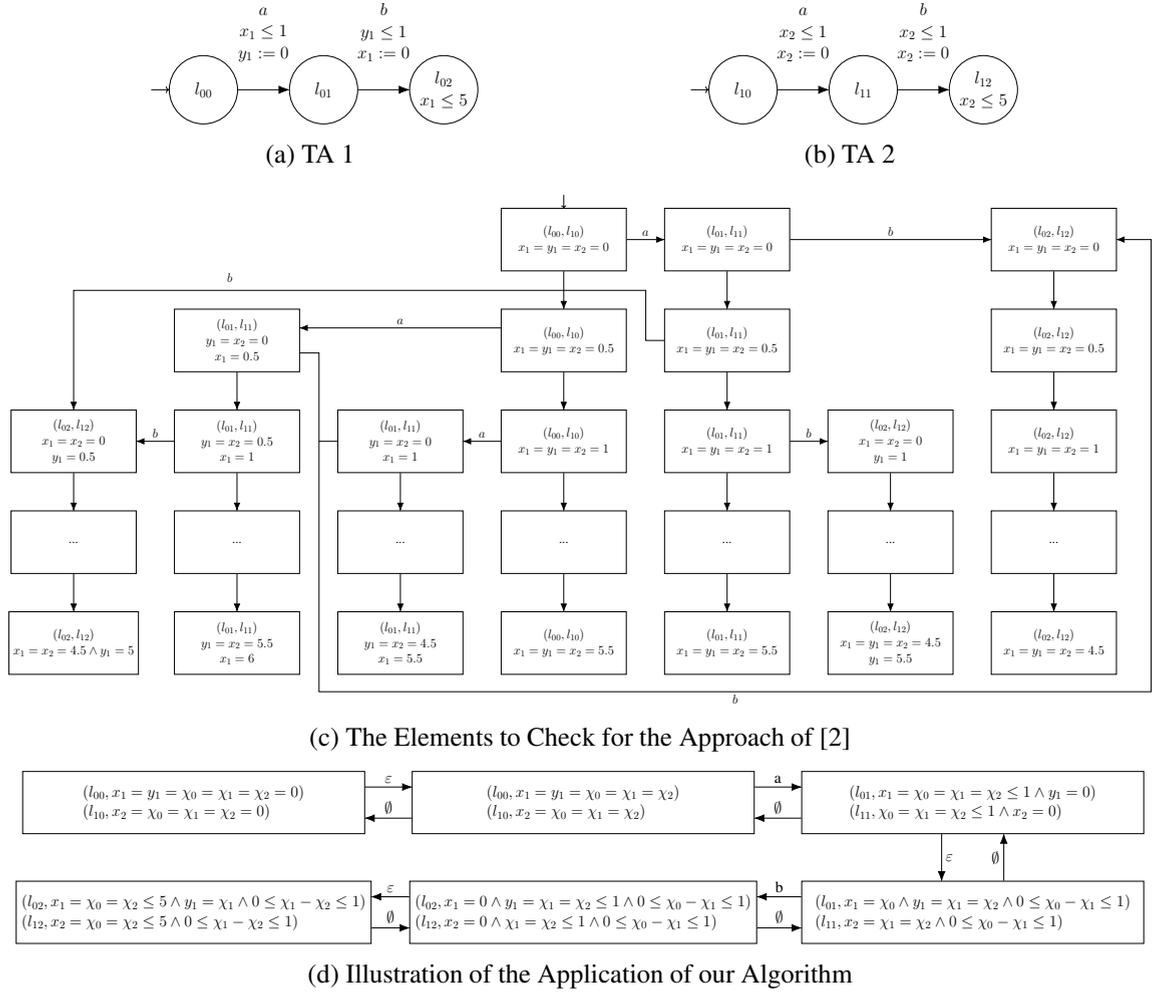
\begin{figure}[h!]
\centering
\begin{subfigure}{0.45\textwidth}
\centering
\scalebox{0.45}{
\begin{tikzpicture}
\tikzstyle{every node}=[font=\backgroundTikzFontSize]
\tikzstyle{state} = [draw,circle,minimum size=2cm,inner sep=0pt,semithick]

\node[state, align=center, initial left, initial text=] (0) {$l_{00}$};
\node[state, align=center, right = 1.5cm of 0] (1) {$l_{01}$};
\node[state, align=center, right = 1.5cm of 1] (2) {$l_{02}$ \\ $x_1 \leq 5$};
\backgroundExampleTAArrowDesc (0) --node[above = 0.7cm, align=center]{$a$\\$x_1 \leq 1$\\$y_1:=0$} (1);
\backgroundExampleTAArrowDesc (1) --node[above = 0.7cm, align=center]{$b$\\$y_1 \leq 1$\\$x_1:=0$} (2);
\end{tikzpicture}
}
\caption{TA 1}
\label{subfig:algorithm:example:comparison:TA1}
\end{subfigure}
\begin{subfigure}{0.45\textwidth}
\centering
\scalebox{0.45}{
\begin{tikzpicture}
\tikzstyle{every node}=[font=\backgroundTikzFontSize]
\tikzstyle{state} = [draw,circle,minimum size=2cm,inner sep=0pt,semithick]

\node[state, align=center, initial left, initial text=] (0) {$l_{10}$};
\node[state, align=center, right = 1.5cm of 0] (1) {$l_{11}$};
\node[state, align=center, right = 1.5cm of 1] (2) {$l_{12}$ \\ $x_2 \leq 5$};
\backgroundExampleTAArrowDesc (0) --node[above = 0.7cm, align=center]{$a$\\$x_2 \leq 1$\\$x_2:=0$} (1);
\backgroundExampleTAArrowDesc (1) --node[above = 0.7cm, align=center]{$b$\\$x_2 \leq 1$\\$x_2:=0$} (2);
\end{tikzpicture}
}
\caption{TA 2}
\label{subfig:algorithm:example:comparison:TA2}
\end{subfigure}
\begin{subfigure}{\textwidth}
\centering
\scalebox{0.33}{
\begin{tikzpicture}
\tikzstyle{every node}=[font=\backgroundTikzFontSize]
\tikzstyle{symstate} = [draw,rectangle,minimum width=5cm,minimum height=2.5cm,inner sep=5pt,thick]
%
\node[symstate, align=center, initial above, initial text=] (00) {$(l_{00}, l_{10})$\\$x_1=y_1=x_2=0$};
\node[symstate, align=center, below = \AlgExampleRegionScaling of 00] (01) {$(l_{00}, l_{10})$\\$x_1=y_1=x_2=0.5$};
\node[symstate, align=center, below = \AlgExampleRegionScaling of 01] (02) {$(l_{00}, l_{10})$\\$x_1=y_1=x_2=1$};
\node[symstate, align=center, below = \AlgExampleRegionScaling of 02] (03) {...};
\node[symstate, align=center, below = \AlgExampleRegionScaling of 03] (04) {$(l_{00}, l_{10})$\\$x_1=y_1=x_2=5.5$};
%
\node[symstate, align=center, right = \AlgExampleRegionScaling of 00] (10) {$(l_{01}, l_{11})$\\$x_1=y_1=x_2=0$};
\node[symstate, align=center, below = \AlgExampleRegionScaling of 10] (11) {$(l_{01}, l_{11})$\\$x_1=y_1=x_2=0.5$};
\node[symstate, align=center, below = \AlgExampleRegionScaling of 11] (12) {$(l_{01}, l_{11})$\\$x_1=y_1=x_2=1$};
\node[symstate, align=center, below = \AlgExampleRegionScaling of 12] (13) {...};
\node[symstate, align=center, below = \AlgExampleRegionScaling of 13] (14) {$(l_{01}, l_{11})$\\$x_1=y_1=x_2=5.5$};
%
\node[symstate, align=center, right = \AlgExampleRegionScaling of 12] (60) {$(l_{02}, l_{12})$\\$x_1=x_2=0$\\$y_1 = 1$};
\node[symstate, align=center, below = \AlgExampleRegionScaling of 60] (61) {...};
\node[symstate, align=center, below = \AlgExampleRegionScaling of 61] (62) {$(l_{02}, l_{12})$\\$x_1=y_1=x_2=4.5$\\$y_1=5.5$};
%
\node[symstate, align=center, right = \AlgExampleRegionScaling of 60] (42) {$(l_{02}, l_{12})$\\$x_1=y_1=x_2=1$};
\node[symstate, align=center, above = \AlgExampleRegionScaling of 42] (41) {$(l_{02}, l_{12})$\\$x_1=y_1=x_2=0.5$};
\node[symstate, align=center, above = \AlgExampleRegionScaling of 41] (40) {$(l_{02}, l_{12})$\\$x_1=y_1=x_2=0$};
\node[symstate, align=center, below = \AlgExampleRegionScaling of 42] (43) {...};
\node[symstate, align=center, below = \AlgExampleRegionScaling of 43] (44) {$(l_{02}, l_{12})$\\$x_1=y_1=x_2=4.5$};
%
\node[symstate, align=center, left = \AlgExampleRegionScaling of 02] (32) {$(l_{01}, l_{11})$\\$y_1= x_2 = 0$ \\ $x_1=1$};
\node[symstate, align=center, below = \AlgExampleRegionScaling of 32] (33) {...};
\node[symstate, align=center, below = \AlgExampleRegionScaling of 33] (34) {$(l_{01}, l_{11})$\\$y_1= x_2 = 4.5$ \\ $x_1=5.5$};
%
\node[symstate, align=center, left = \AlgExampleRegionScaling of 32] (22) {$(l_{01}, l_{11})$\\$y_1= x_2 = 0.5$ \\ $x_1=1$};
\node[symstate, align=center, above = \AlgExampleRegionScaling of 22] (21) {$(l_{01}, l_{11})$\\$y_1=x_2 = 0$\\$x_1=0.5$};
\node[symstate, align=center, below = \AlgExampleRegionScaling of 22] (23) {...};
\node[symstate, align=center, below = \AlgExampleRegionScaling of 23] (24) {$(l_{01}, l_{11})$\\$y_1= x_2 = 5.5$ \\ $x_1=6$};
%
\node[symstate, align=center, left = \AlgExampleRegionScaling of 22] (51) {$(l_{02}, l_{12})$\\$x_1=x_2=0$\\$y_1=0.5$};
\node[symstate, align=center, below = \AlgExampleRegionScaling of 51] (52) {...};
\node[symstate, align=center, below = \AlgExampleRegionScaling of 52] (53) {$(l_{02}, l_{12})$\\$x_1=x_2=4.5 \land y_1 = 5$};
%
%
\backgroundExampleTAArrowDesc (00) -- (01);
\backgroundExampleTAArrowDesc (01) -- (02);
\backgroundExampleTAArrowDesc (02) -- (03);
\backgroundExampleTAArrowDesc (03) -- (04);
%
\backgroundExampleTAArrowDesc (10) -- (11);
\backgroundExampleTAArrowDesc (11) -- (12);
\backgroundExampleTAArrowDesc (12) -- (13);
\backgroundExampleTAArrowDesc (13) -- (14);
%
\backgroundExampleTAArrowDesc (60) -- (61);
\backgroundExampleTAArrowDesc (61) -- (62);
%
\backgroundExampleTAArrowDesc (40) -- (41);
\backgroundExampleTAArrowDesc (41) -- (42);
\backgroundExampleTAArrowDesc (42) -- (43);
\backgroundExampleTAArrowDesc (43) -- (44);
%
\backgroundExampleTAArrowDesc (32) -- (33);
\backgroundExampleTAArrowDesc (33) -- (34);
%
\backgroundExampleTAArrowDesc (21) -- (22);
\backgroundExampleTAArrowDesc (22) -- (23);
\backgroundExampleTAArrowDesc (23) -- (24);
%
\backgroundExampleTAArrowDesc (51) -- (52);
\backgroundExampleTAArrowDesc (52) -- (53);
%
%
%
\backgroundExampleTAArrowDesc (00) --node[above, align=center]{$a$} (10);
\backgroundExampleTAArrowDesc (10) --node[above, align=center]{$b$} (40);
\backgroundExampleTAArrowDesc (11) -| ($(11)+(-3.25cm, 2cm)$) -|node[above right = 0.2cm and 6cm, align=center]{$b$} (51);
\backgroundExampleTAArrowDesc (12) --node[above, align=center]{$b$} (60);
%
%
\backgroundExampleTAArrowDesc (02) --node[above, align=center]{$a$} (32);
\backgroundExampleTAArrowDesc (32) -| ($(32)+(-3.25cm, -9cm)$) --node[below, align=center]{$b$} ($(32) + (28cm, -9cm)$) |- (40.east);
%
%
\backgroundExampleTAArrowDesc ($(01.west) + (0, 0.5cm)$) --node[above, align=center]{$a$} ($(21.east) + (0, 0.5cm)$);
\draw ($(21.east) + (0, -0.5cm)$) -| ($(32)+(-3.25cm, 0)$);
\backgroundExampleTAArrowDesc (22) --node[above, align=center]{$b$} (51);
\end{tikzpicture}
}
\caption{The Elements to Check for the Approach of \cite{Cerans1992}}
\label{subfig:algorithm:example:comparison:Region}
\end{subfigure}
\begin{subfigure}{\textwidth}
\centering
\scalebox{0.41}{
\begin{tikzpicture}
\tikzstyle{every node}=[font=\AlgTikzFontSize]
\tikzstyle{foo} = [draw,rectangle,minimum width=11cm, minimum height = 2cm, inner sep=5pt,thick, align=left]
\node[foo](init) {$(l_{00}, x_1 = y_1 = \chi_0 = \chi_1 = \chi_2 = 0)$ \\ $(l_{10}, x_2 = \chi_0 = \chi_1 = \chi_2 = 0)$};
\node[foo, right = 1.5cm of init](initeps) {$(l_{00}, x_1 = y_1 = \chi_0 = \chi_1 = \chi_2)$ \\ $(l_{10}, x_2 = \chi_0 = \chi_1 = \chi_2)$};
\node[foo, right= 1.5cm of initeps](second) {$(l_{01}, x_1 = \chi_0 = \chi_1 = \chi_2 \leq 1 \land y_1 = 0)$ \\ $(l_{11}, \chi_0 = \chi_1 = \chi_2 \leq 1 \land x_2 = 0)$};
\node[foo, below= 1cm of second](secondeps) {$(l_{01}, x_1 = \chi_0 \land y_1 = \chi_1 = \chi_2 \land 0 \leq \chi_0 - \chi_1 \leq 1 )$ \\ $(l_{11}, x_2 = \chi_1 = \chi_2 \land 0 \leq \chi_0 - \chi_1 \leq 1)$};
\node[foo, below= 1cm of initeps](third) {$(l_{02}, x_1 = 0 \land y_1 = \chi_1 = \chi_2 \leq 1 \land 0 \leq \chi_0 - \chi_1 \leq 1)$ \\ $(l_{12}, x_2 = 0 \land \chi_1 = \chi_2 \leq 1 \land 0 \leq \chi_0 - \chi_1 \leq 1)$};
\node[foo, below= 1cm of init](thirdeps) {$(l_{02}, x_1 = \chi_0 = \chi_2 \leq 5 \land y_1 = \chi_1 \land 0 \leq \chi_1 - \chi_2 \leq 1 )$ \\ $(l_{12}, x_2 = \chi_0 = \chi_2 \leq 5 \land 0 \leq \chi_1 - \chi_2 \leq 1)$};%
\AlgExampleArrowDesc ($(init.east) + (0, 0.5cm)$) --node[above, align=center]{$\varepsilon$} ($(initeps.west) + (0, 0.5cm)$);
\AlgExampleArrowDesc ($(initeps.west) + (0, -0.5cm)$) --node[above, align=center]{$\emptyset$} ($(init.east) + (0, -0.5cm)$);
\AlgExampleArrowDesc ($(initeps.east) + (0, 0.5cm)$) --node[above, align=center]{a} ($(second.west) + (0, 0.5cm)$);
\AlgExampleArrowDesc ($(second.west) + (0, -0.5cm)$) --node[above, align=center]{$\emptyset$} ($(initeps.east) + (0, -0.5cm)$);
\AlgExampleArrowDesc ($(second.south) + (-1cm, 0)$) --node[right, align=center]{$\varepsilon$} ($(secondeps.north) + (-1cm, 0)$);
\AlgExampleArrowDesc ($(secondeps.north) + (1cm, 0)$) --node[left, align=center]{$\emptyset$} ($(second.south) + (1cm, 0)$);
\AlgExampleArrowDesc ($(secondeps.west) + (0, 0.5cm)$) --node[above, align=center]{b} ($(third.east) + (0, 0.5cm)$);
\AlgExampleArrowDesc ($(third.east) + (0, -0.5cm)$) --node[above, align=center]{$\emptyset$} ($(secondeps.west) + (0, -0.5cm)$);
\AlgExampleArrowDesc ($(third.west) + (0, 0.5cm)$) --node[above, align=center]{$\varepsilon$} ($(thirdeps.east) + (0, 0.5cm)$);
\AlgExampleArrowDesc ($(thirdeps.east) + (0, -0.5cm)$) --node[above, align=center]{$\emptyset$} ($(third.west) + (0, -0.5cm)$);
\end{tikzpicture}
}
\caption{Illustration of the Application of our Algorithm}
\label{subfig:algorithm:example:comparison:VCG}
\end{subfigure}
\caption{Comparison between \cite{Cerans1992} and our Algorithm}
\label{fig:algorithm:example:comparison}
\end{figure}
Figures~\ref{subfig:algorithm:example:comparison:TA1} and \ref{subfig:algorithm:example:comparison:TA2} show two simple, deterministic, and bisimilar TA. As shown in Figure~\ref{subfig:algorithm:example:comparison:Region}, the approach from \cite{Cerans1992} now takes a state from each region of the product of both TA. Since this results in multiple uses of the first transition, and some of the paths result in multiple uses of the second transition, this results into an exponential number of nodes to be checked. This is not the case when using our Algorithm as shown in Figure~\ref{subfig:algorithm:example:comparison:VCG}.
\end{example}

The next section shows that our algorithm can check realistic TA taken from community benchmarks within an acceptable amount of time.
\section{Experimental Evaluation and Comparison}
\label{sec:evaluation}

In this section, we demonstrate the practical usability of our algorithm and show that our tool is currently the one to use when checking for timed bisimilarity. To achieve this, we compare our tool to \textsc{Caal}, the only currently available tool for checking timed bisimilarity we are aware of. Since \textsc{Caal} accepts processes written in \textit{Timed Calculus of Communicating Systems (TCCS)}, we translate the TA under test into TCCS.

In \textsc{Caal}, the action $\tau$ has a special meaning. If a transition labeled with $\tau$ is enabled, no delay can occur, while a transition with any other action can be delayed by any amount of time. Therefore, we use transitions labeled with $\tau$ to translate invariants of TA into the language of \textsc{Caal}.

As far as we understand \textsc{Caal}, the tool assumes a discrete time semantics and enumerates the states of the resulting TLTS according to Definition~\ref{def:background:sim-and-bisim:Strong-Timed-Bisimulation}. If the TLTS becomes too large, it is either cut-off, which implies that false positives can occur, or a \textit{Too Much Recursion Error (TMRE)} occurs. The first behavior is demonstrated by the following example.

\begin{example}
\begin{figure}[h!]
  \begin{tabular}{p{0.4\textwidth}p{0.4\textwidth}}
    \begin{minipage}{0.45\textwidth}%
      \centering
      \scalebox{0.45}{
        \begin{tikzpicture}
        \tikzstyle{every node}=[font=\backgroundTikzFontSize]
        \tikzstyle{state} = [draw,circle,minimum size=2cm,inner sep=0pt,semithick]
        
        \node[state, align=center, initial left, initial text=] (0) {$l_{0}$\\$x \leq p$};
        \node[state, align=center, above right = -0.3cm and 1.5cm of 0] (1) {$l_{1}$\\$x \leq 1$};
        \node[state, align=center, below right = -0.3cm and 1.5cm of 0] (2) {$l_{2}$\\$x \leq 1$};
        \backgroundExampleTAArrowDesc (0) --node[above left, align=center, xshift=0.35cm, yshift=0.3cm]{$a$\\$x:=0$} (1);
        \backgroundExampleTAArrowDesc (0) --node[below left, align=center, xshift=0.2cm, yshift=-0.2cm]{$\tau$\\$x \geq p$\\$x := 0$} (2);
        \backgroundExampleTAArrowDesc (1) to[loop right] node[right, align=center]{$\tau$\\$x \geq 1$\\$x:=0$} (1);
        \backgroundExampleTAArrowDesc (2) to[loop right] node[right, align=center]{$\tau$\\$x \geq 1$\\$x:=0$} (2);
      \end{tikzpicture}
      }
    \end{minipage}%
    &%
    \begin{minipage}{0.45\textwidth}%
      \begin{lstlisting}
L0 = a.L1 + p.tau.L2;
L1 = 1.tau.L1;
L2 = 1.tau.L2;
      \end{lstlisting}
    \end{minipage}
  \end{tabular}
\caption{TA (left) and TCCS (right) of our Synthetic Example.}
\label{fig:evaluation:Synthetic_Example:Template}
\end{figure}
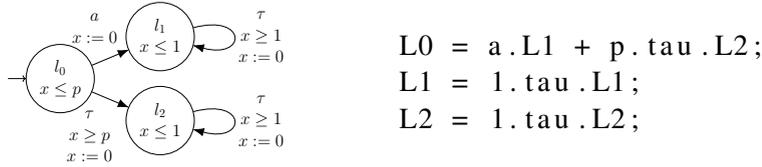
Figure~\ref{fig:evaluation:Synthetic_Example:Template} shows the templates for this example. On the left hand side, the template of our TA can be seen. $l_0$ requires $x$ to be lower or equals to a parameter $p$ and has two outgoing transitions, one with the action $a$ and one with $\tau$, and the transition labeled with $\tau$ can only be used if $x = p$. The locations $l_1$ and $l_2$ both have self-loops which use the action $\tau$ and reset clock $x$.

The right hand side of Figure~\ref{fig:evaluation:Synthetic_Example:Template} shows the template for \textsc{Caal}. The initial process is L0, which has the options to use the action $a$ and switch to process L1 or to wait for $p$ time steps and switch to process L2. The processes L1 and L2 wait a timestep and then become itself again using a $\tau$-transition.

If we instantiate the templates with $p = 100$ and $p = 101$, resulting in two different processes that are not timed bisimilar, \textsc{Caal} returns a false positive, which does not happen when using our tool.
\end{example}

The main reason for this behavior is that if the TLTS were not cut off, they would quickly become very large. Therefore, any TLTS-based approach suffers from this problem either by having infinite large runtimes or by producing false positives. Moreover, any approach based on TLTS requires discrete-time modelling, which is not the case for our tool.

We now compare both tools by evaluating frequently used community examples as our subject systems.

\subsection{Evaluation of our Subject Systems}
\label{subsec:evaluation:fre_used_examples}

To evaluate our tool in practice, we utilize three different TA models, which are frequently used in the evaluation of TA analysis techniques. Collision Avoidance (CA) \cite{CollisionAvoidance} is a model of a protocol for the avoidance of collisions on an Ethernet-like broadcast medium, the Root Contention Protocol (RCP) \cite{IEEE1394RCP} is a model of the IEEE1394 root contention protocol, and Audio/Video Components (AVC) \cite{AVProtocol} is a model of a protocol used in the industry for the purpose of controlling the transmission of messages between audio and video components over a shared bus.

\begin{table}
\centering
\begin{tabular}{ | c c c c |}
\hline
Name & \# Locations & \# Switches & \# Clocks \\
\hline
CA & 6 & 13 & 1 \\
RCP & 10 & 26 & 2 \\
AVC & 18 & 30 & 1\\
\hline
\end{tabular}
\caption{Statistics of the models to benchmark}
\label{tab:eval:statistics-of-the-models-to-benchmark}
\end{table}

The statistics of the models can be seen in Table~\ref{tab:eval:statistics-of-the-models-to-benchmark}. CA is the smallest model, with six locations, 13 switches, and a single clock. RCP has ten locations, 26 switches, and two clocks, while AVC has 18 locations, 30 switches, and a single clock.
All these models were originally published as \textsc{Uppaal} models. We used the uppaal-to-tchecker converter \cite{UppaalToTchecker} to generate the corresponding TChecker models. For each of these TA, we generated four mutants to obtain a set of similar but different models, using a mutation testing approach described in \cite{MutationTesting}. The first mutant is always timed bisimilar, while the other mutants are not. The first mutant is created by adding a reset (except for the RCP example, where this was not possible and, therefore, we doubled a transition), the second mutant is created by changing an invariant, the third mutant is created by changing a guard, and the fourth mutant is created by removing a reset.

Unfortunately, the found community benchmarks do not include any non-determinism. Therefore, we used another operation of the mutant testing approach that changes the action of a transition. Sometimes this leads to non-deterministic mutants. For each of our models, we picked a non-deterministic mutant and repeated our evaluation procedure, which allows us to analyze the impact of a single non-deterministic choice.

To compare the tools, we applied them first to the model and its one-to-one copy and afterwards to the model and the corresponding mutants. The results are shown in Table~\ref{tab:eval:real_world_TA:benchmark_tchecker} and Table~\ref{tab:eval:real_world_TA:benchmark_tchecker_caal}. The tools were run eleven times for each pair (always with a reboot in between), and the time values shown are average values. The variances between measurements are shown next to the time values. Table~\ref{tab:eval:real_world_TA:benchmark_tchecker} also shows the number of checked pairs of symbolic states, while Table~\ref{tab:eval:real_world_TA:benchmark_tchecker_caal} also shows whether the returned result was correct, which is always the case for our tool (the ground truth was obtained manually).

The evaluation was conducted on a workstation equipped with an Intel i7-6700K processor and 64GB main memory, running a Linux Mint 21.2 ("Victoria") operating system.

\addtolength{\tabcolsep}{-0.4em}
\begin{table}
  \centering
  \begin{tabular}{| c | c c c c c c c c c |}
  \hline
    & \multicolumn{3}{c}{CA} & \multicolumn{3}{c}{RCP} & \multicolumn{3}{c |}{AVC} \\
  Determ. & \# Pairs & t [ms] & Var. & \# Pairs & t [ms] & Var & \# Pairs & t [ms] & Var \\
  \hline
  One-To-One & 16 & 0.3 & 0.00 & 698 & 18.3 & 3.81 & 808 & 14.4 & 11.32 \\
  Bisim & 16 & 0.3 & 0.00 & 732 & 20.0 & 4.50 & 808 & 13.4 & 0.01 \\
  Changed G. & 12 & 0.3 & 0.00 & 5 & 0.4 & 0.01 & 138 & 2.7 & 0.53 \\
  Changed Inv. & 2 & 0.1 & 0.00 & 3 & 0.2 & 0.00 & 5 & 0.2 & 0.00\\
  Rm. Reset & 2 & 0.1 & 0.00 & 3 & 0.2 & 0.00 & 20 & 0.6 & 0.00 \\  
  \hline
  Non-Determ. & & & & & & & & & \\
  \hline
  One-To-One & 18 & 0.4 & 0.38 & 744 & 20.1 & 2.26 & 1784 & 32.4 & 18.14 \\
  Bisim & 18 & 0.4 & 0.0 & 778 & 21.9 & 8.05 & 1784 & 30.9 & 9.82 \\
  Changed G. & 14 & 0.4 & 0.01 & 8 & 0.4 & 8.80 & 595 & 10.3 & 6.45 \\
  Changed Inv. & 2 & 0.1 & 0.0 & 6 & 0.3 & 0.00 & 5 & 0.2 & 0.00\\
  Rm. Reset & 2 & 0.1 & 0.0 &6 & 0.4 & 0.01 & 59 & 1.5 & 0.02\\  
  \hline
  \end{tabular}
  \caption{Benchmark Results of our Tool}
  \label{tab:eval:real_world_TA:benchmark_tchecker}  
\end{table}

\begin{table}
  \centering
  \begin{tabular}{| c | c c c c c c c c |}
  \hline
    & \multicolumn{3}{c}{CA} & \multicolumn{2}{c}{RCP} & \multicolumn{3}{c |}{AVC} \\
  Determ.  & Correct & t [ms] & Var. & Correct & t [ms] & Correct & t [ms] & Var. \\
  \hline
  One-To-One & yes & 45.8 & 111.56 & - & TMRE & yes & 73.6 & 312.05 \\
  Bisim. & yes & 34.6 & 174.25 & - & TMRE & yes & 80.3 & 225.62 \\
  Changed Guard & yes & 48.2 & 184.56 & - & TMRE & yes & 80.1 & 232.09 \\
  Changed Inv. & yes & 42.0 & 172.00 & - & TMRE & no & 77.73 & 184.82 \\
  Removed Reset & yes & 46.6 & 111.85 & - & TMRE & yes & 100.8 & 0.56 \\
  \hline
  Non-Determ. & & & & & & & &  \\
  \hline
  One-To-One & yes & 39.3 & 296.62 & - & TMRE & yes & 75.7 & 401.42 \\
  Bisim. & yes & 41.7 & 176.82 & - & TMRE & yes & 75.5 & 125.47 \\
  Changed Guard & yes & 34.6 & 295.67 & - & TMRE & yes & 73.18 & 54.76 \\
  Changed Inv. & yes & 30.3 & 110.62 & - & TMRE & no & 70.8 & 106.36 \\
  Removed Reset & yes & 37.2 & 185.96 & - & TMRE & yes & 107.5 & 135.67 \\
  \hline
  \end{tabular}
  \caption{Benchmark Results of \textsc{Caal}}
  \label{tab:eval:real_world_TA:benchmark_tchecker_caal}  
\end{table}

For our tool, all average time values are below 33ms, which demonstrates the practical usability of our algorithm. Since the algorithm terminates upon the discovery of a contradiction, the number of symbolic states to be examined is reduced for non-bisimilar pairs, resulting in a faster termination compared to bisimilar pairs. Furthermore, the presence of two clocks in RCP, as opposed to the single clock in CA and AVC, leads to a higher computation time, sometimes even if the number of pairs is lower. As expected, a non-deterministic choice can lead to a significant increase in the number of pairs needed, as happened especially for the AVC example. The variance of our measurements is low and all measurements were within an acceptable interval.

For \textsc{Caal}, we can see that for CA, the time required is larger by a factor of about 100 and that \textsc{Caal} is unable to handle the RCP model due to its size. For AVC, the fact that a false positive occurs proves that the state space is not fully explored. Since we do not know the size of the ignored part of the state space, the time values shown are only a lower bound with an unclear upper bound. Nevertheless, all measured time values are above the corresponding values in Table~\ref{tab:eval:real_world_TA:benchmark_tchecker}. The variance is high, indicating that we usually had a wide range of measurement values.

Our tool performs significantly better than \textsc{Caal} for all three examples. In particular, the fact that RCP is rejected and that there is a false positive for AVC should be taken into account when choosing the tool to be used for checking timed bisimilarity.

\subsection{Threats to Validity}

From Section~\ref{subsec:evaluation:fre_used_examples}, we can conclude that the approach is sufficiently effective for the models. The scope of our experimental setting is limited to the class of safety TA. Our approach could be rendered inapplicable by any non-trivial TA extension. It is possible to extend the computation time by incorporating additional paths, for example by adding non-deterministic choices. Due to our approach of generating mutants, we only consider small and locally restricted changes. Nevertheless, we used proofs and exhaustive testing to ensure correctness of our theory and our tool implementation. The used tests and benchmarks are provided within the examples directory of our tool.

We compared our tool to the only tool we know with a similar functionality, namely \textsc{Caal}. Nevertheless, \textsc{Caal} works on TCCS instead of TA and, therefore, we had to translate the models. While there might be small improvements possible, the overall outcome will not be changed by using different translations. Timing Measurements are always subject to noise. We run all measurements eleven times and took the average value to reduce the impact.

The final threat we consider is the relatively small set of subject systems. The examples from community benchmarks are of reasonable size and complexity. The benchmarks are often used in experiments involving analysis techniques for TA. In future, further case studies should be considered.
\section{Conclusion}\label{chap:conclusion}

We presented virtual clock graphs, an extension of zone graphs, with the objective of verifying timed bisimilarity. We used this formalism to develop an algorithm, which we implemented in the open-source tool TChecker. Our experimental evaluation demonstrates that the tool is fit for purpose in practice, particularly in deterministic scenarios, but also in non-deterministic cases. To the best of our knowledge, this is the first practically usable tool for checking timed bisimilarity that uses an algorithm with a publicly available proof of its correctness.

As future work, we intend to utilize virtual clock graphs to develop algorithms that check for different comparison definitions such as timed refinement or weak timed bisimilarity. Furthermore, we plan to improve our cut-off criterion to reduce computation time and to introduce a cache system. Finally, we would like to develop a certificate such that it allows the user to use a graphical user interface that allows for a better understanding of the result.

\bibliographystyle{fundam}
\bibliography{sections/references}

\end{document}